\date{}
\def\mathbi#1{\textbf{\em #1}}
\newtheorem{lemma}{Lemma}
\newtheorem{theorem}{Theorem}
\begin{document}
\title{Handover Count Based Velocity Estimation and Mobility State Detection in Dense HetNets}

\author{Arvind~Merwaday$^1$, and~\.{I}smail~G\"uven\c{c}$^2$\\
Email: {\tt $^1$amerw001@fiu.edu, $^2$iguvenc@fiu.edu}}
\maketitle

\begin{abstract}
In wireless cellular networks with densely deployed base stations, knowing the velocities of mobile devices is a key to avoid call drops and improve the quality of service to the user equipments (UEs). A simple and efficient way to estimate a UE's velocity is by counting the number of handovers made by the UE during a predefined time window. Indeed, handover-count based mobility state detection has been standardized since Long Term Evolution (LTE) Release-8 specifications. The increasing density of small cells in wireless networks can help in accurate estimation of velocity and mobility state of a UE. In this paper, we model densely deployed small cells using stochastic geometry, and then analyze the statistics of the number of handovers as a function of UE velocity, small-cell density, and handover count measurement time window. Using these statistics, we derive approximations to the Cramer-Rao lower bound (CRLB) for the velocity estimate of a UE. Also, we determine a minimum variance unbiased (MVU) velocity estimator whose variance tightly matches with the CRLB. Using this velocity estimator, we formulate the problem of detecting the mobility state of a UE as low, medium, or high-mobility, as in LTE specifications. Subsequently, we derive the probability of correctly detecting the mobility state of a UE. Finally, we evaluate the accuracy of the velocity estimator under more realistic scenarios such as clustered deployment of small cells, random way point (RWP) mobility model for UEs, and variable UE velocity. Our analysis shows that the accuracy of velocity estimation and mobility state detection increases with increasing small cell density and with increasing handover count measurement time window.
\end{abstract}

\noindent{\bf keywords: Cramer-Rao lower bound (CRLB), heterogeneous networks (HetNets), long term evolution (LTE), mobility state estimation, mobile velocity estimation, phantom cell, small cells.}

\doublespacing
\newpage
\section{Introduction}
Since the introduction of advanced mobile devices with data-intensive applications, cellular networks are witnessing rapidly increasing data traffic demands from mobile users. To keep up with the increasing traffic demands, cellular networks are being transformed into heterogeneous networks (HetNets) by the deployment of small cells (picocells, femtocells, etc) over the existing macrocells. Cisco has recently predicted an 11-fold increase in the global mobile data traffic between 2013 and 2018 \cite{2014-cisco-white-paper}, while Qualcomm has forecasted an astounding 1000x increase in mobile data traffic in the near future \cite{Qualcomm_1000xDataChallenge}. Addressing this challenge will lead to extreme densification of the small cells which will give rise to hyper-dense HetNets (HDHNs).

Mobility management in cellular networks is an important task that is critical to provide good quality of service to the mobile users by minimizing the handover failures. Velocity of a user equipment (UE) plays critical role on the handover performance of the UE particularly when the cell density is high~\cite{David_CM_2012,6215543,3gpp_HOF}, where knowing the UE's velocity becomes necessary for effective mobility management. In homogeneous networks that only have macrocells, handovers are typically finalized at the cell edge due to large cell sizes. With the deployment of small cell base stations (SBSs)~\cite{6831742,6952371,merwaday2014capacity,6171996}, due to smaller cell sizes, it becomes difficult to finalize the handover process at the cell edge for mobile devices~\cite{3GPP_Samsung_2010,David_CM_2012,6215543}. In particular, high-mobility devices may run deep inside the coverage areas of small cells before finalizing a handover, thus incurring handover failure due to degraded signal to interference plus noise ratio (SINR). These challenges motivate the need for UE-specific and cell-specific handover parameter optimization, which typically require estimation of the UE's velocity for effective configuration of handover parameters~\cite{Puttonen_ITNG_2009}. A UE's velocity estimate may also be used for scheduling~\cite{6661322,6706232}, mobility load  balancing~\cite{1001843}, channel quality indicator (CQI) feedback enhancements~\cite{SpeedAdaptiveCQI}, and energy efficiency enhancements~\cite{6692547}.

In this paper, we introduce a novel and efficient {\em handover-count}\footnote{Subsequently, \emph{``handover-count''} is used in a broad sense to refer to the number of cells traversed by a UE. E.g., in phantom cells~\cite{Ishii_GC_2012}, a UE is always connected to the MBSs, and can get additional throughput from SBSs, minimizing handover failures.} based UE velocity estimation technique using the tools from stochastic geometry, and characterize its accuracy through Cramer-Rao Lower Bounds (CRLBs), when the density of SBSs is known. Since the service provider has the information of the number of SBSs in a particular geographic area, the SBS density in that area can be calculated and broadcasted as part of system information in next generation networks. The SBS density may also be signaled in a user-specific manner to the next generation UEs which are capable of velocity estimation. Our contributions in this paper are as follows: 1) for a given small-cell density, two approximations to the probability mass function (PMF) of handover-count of a UE are derived using a heuristic approach; 2) using the PMF approximations, expressions for the CRLB of velocity estimation are derived; 3) a minimum variance unbiased (MVU) velocity estimator is derived whose variance tightly matches with the CRLB, and accuracy of the estimator is investigated for various UE speeds, SBS densities, and handover-count measurement times; 4) the estimated velocity is used to detect the mobility state (low/medium/high) of a UE, and the expressions for the probability of detection and probability of false alarm are derived; 5) accuracy of the velocity estimator is analyzed for different realistic scenarios such as: RWP mobility model for the UEs, clustered deployment of SBSs, and variable UE velocity.

This paper is organized as follows. We briefly review the related works and the state of the art in Section~\ref{sec:PriorWork}. In Section~\ref{sec:SystemModel}, we describe our system model of small-cells using stochastic geometry. Subsequently, we derive the approximations to the PMF of handover-count of a UE in Section~\ref{sec:ApproxPMF}. In Section~\ref{sec:CRLB}, we find CRLB for the velocity estimation of UE and also derive a MVU velocity estimator. In Section~\ref{sec:MobStateProb}, we provide expressions for the mobility state (low, medium, high) probabilities, the probability of detection, and the probability of false alarm. We show numerical results on our findings in Section~\ref{sec:NumResults} and conclude the paper in Section~\ref{sec:Conclusion}.

\section{Review of Prior Work}
\label{sec:PriorWork}
Existing long term evolution (LTE) and LTE-Advanced technologies are capable of estimating the mobility state of a UE into three broad classes: low, medium, and high-mobility~\cite{Speed_Differentiated_HO_2012,David_CM_2012,6215543,Puttonen_ITNG_2009}. This is achieved at a UE by counting the number of handovers within a given time window, and comparing it with a threshold which can be implemented during the connected mode~\cite{3GPPa} or the idle mode~\cite{3GPPTS36.3042009} of the UE. It can also be implemented at the network side, by tracking the prior history of handovers for a particular UE. The accuracy of a UE's mobility state estimate will benefit significantly from the densification of SBS deployment. While more accurate UE-side speed estimation techniques based on Doppler estimation have been discussed in~\cite{Iwamura_Patent_2010,510946,DopplarFreqEstimation,DopplarFreqEstimation_RecursiveMLE}, due to their complexity and standardization challenges, they have not been adopted in existing cellular network standards. While global positioning system (GPS) can be used for accurate estimation of a UE's speed, it may not be a practical solution for mobility management as 1)~the GPS receiver at a UE consumes significant amount of power, 2)~GPS coverage may not be available in environments such as urban canyons and underground subways, and 3) not all the UEs are equipped with GPS receivers.  A popular usecase example that uses such real-time mobility state information at a UE is that of UE-specific cell selection: high speed UEs can be \emph{biased}  to stay connected to  macrocells even when small cell link quality is better~\cite{pang2014method}.

There are various studies available in the literature on handover-count based mobility state detection (MSD). A simulation based mobility analysis is performed in \cite{Speed_Differentiated_HO_2012} where it is proposed that the number of handovers made by a UE are weighted differently for macro to macro, macro to pico, pico to macro, and pico to pico handovers (1, 0.45, 0.25, and 0.1, respectively), to produce a good estimate of UE mobility. In \cite{MobilityJeff}, considering a random way point (RWP) model and using stochastic geometry, the expected number of handovers during the movement period of a RWP model is derived. The probability density function (PDF) of the sojourn time is also presented for homogeneous networks. In \cite{WeiBaoHandoffRateAnalysis}, a theoretical framework using stochastic geometry is developed to study the UE mobility in HetNets, in which the expressions for vertical and horizontal handoff rates are derived. In \cite{7248855}, closed form expressions of cross-tier handover rate and sojourn time in a small cell are provided using stochastic geometry for a two-tier network. In another study~\cite{7006787}, stochastic geometry is used to derive the handover rate, which in turn is used to derive coverage probability of a UE in HetNet by considering that the UE is mobile and a fraction of the handovers result into failure.

A set of related works in the literature have studied the path prediction of UEs and the handoff time estimation along the predicted path which can help in improving the quality of service to the UEs. A destination and mobility path prediction model called DAMP is proposed in \cite{6872550}, while a handoff time window estimation method and mobility-prediction-aware bandwidth (MPBR) allocation scheme is presented in \cite{6870469}. In \cite{EfficientLocationPrediction}, different methods for predicting the future location of a UE based on prior knowledge of the UE's mobility are studied. A method for tracing UE's location using semi-supervised graph Laplacian approach is proposed in \cite{5989824}.

Despite the earlier work in~\cite{Speed_Differentiated_HO_2012, 3GPPa, David_CM_2012, 6872550, 6870469, EfficientLocationPrediction, 5989824, 6215543, Puttonen_ITNG_2009, 3GPPTS36.3042009, Iwamura_Patent_2010, 510946, DopplarFreqEstimation, DopplarFreqEstimation_RecursiveMLE, MobilityJeff, WeiBaoHandoffRateAnalysis, 7006787}, fundamental performance bounds and optimum algorithms for handover-count based velocity estimation have not been studied in the literature. The focus of this paper is on velocity estimation and MSD based on the handover counts of a UE; path prediction and handoff time estimation techniques are not considered. Mobility state estimation itself is an important topic in 3GPP Release-8 specifications, and being researched currently. Additionally, there are applications which require velocity estimation at the UE side, in which case, implementing the path prediction algorithms at a UE would be difficult. To our best knowledge, velocity estimation based on handover counts at a UE has not been studied analytically in the literature, which is one of the main contribution of this paper.

\section{System Model}
\label{sec:SystemModel}
Consider a HetNet scenario where the macro base stations (MBSs) tier and the SBSs tier use different frequency bands, as in phantom cell architecture~\cite{Ishii_GC_2012}. We assume that the SBSs are randomly distributed according to a homogeneous Poisson point process (PPP) with intensity $\lambda$. As the SBSs use a dedicated frequency band which is different from the frequency band used by the MBSs, the coverage of a small cell would depend on the neighboring SBSs, but not on MBSs. Hence, the small cells of the network can be modeled using Poisson-Voronoi tessellation as illustrated in Fig.~\ref{fig:Poisson_Line_Intersect}(b). In Fig.~\ref{fig:Poisson_Line_Intersect}(a), co-channel deployment of small cells is illustrated in which the coverage of small cells is determined by the interference from neighboring SBSs and MBSs. In this paper, we consider only the dedicated channel deployment of small cells as illustrated in Fig.~\ref{fig:Poisson_Line_Intersect}(b).
\begin{figure}[htp]
\vspace{-3mm}
\centering
\begin{subfigure}[b]{0.495\textwidth}
\includegraphics[width=\textwidth]{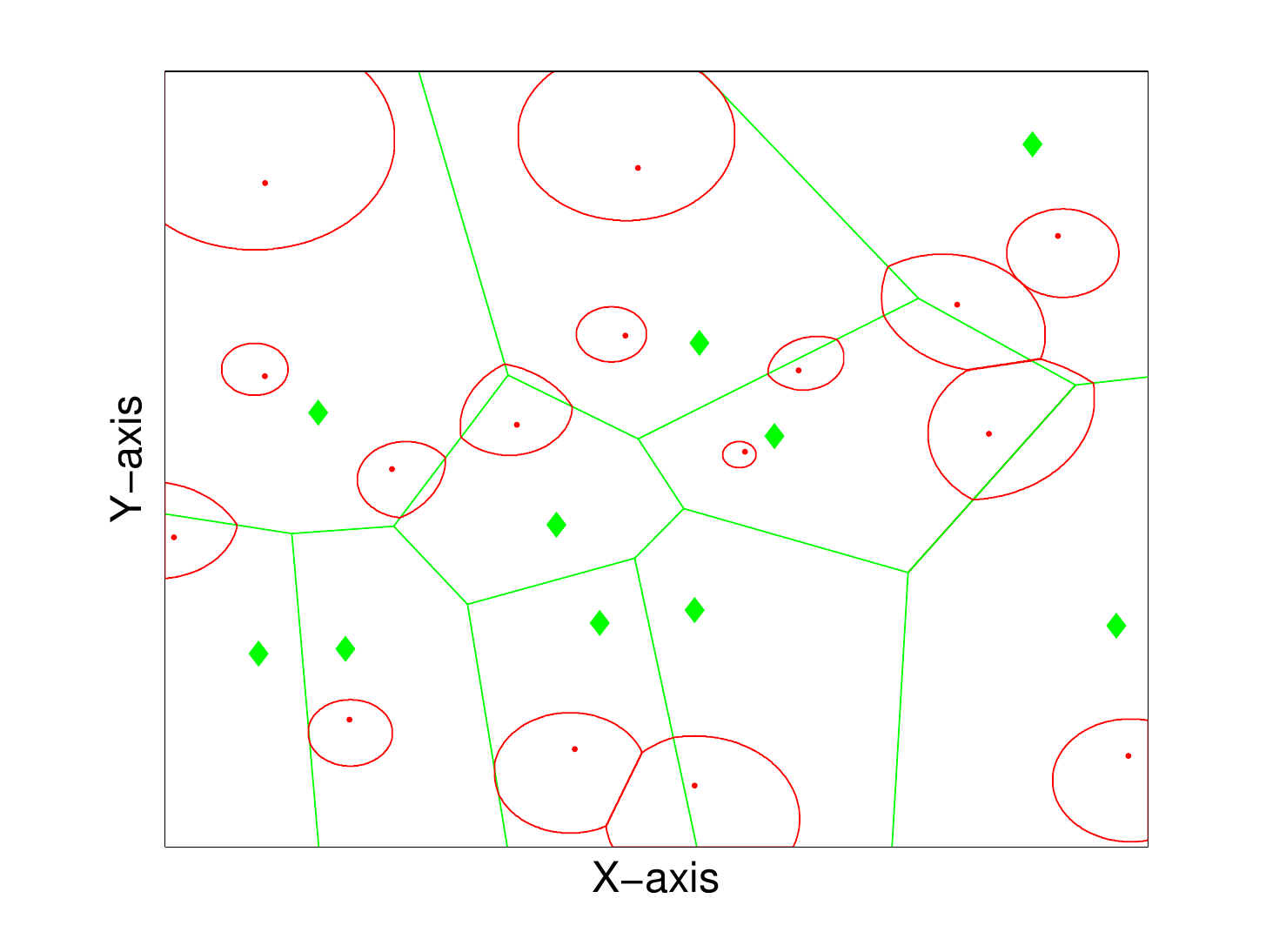}
\vspace{-8mm}\caption{}
\end{subfigure}
\begin{subfigure}[b]{0.495\textwidth}
\includegraphics[width=\textwidth]{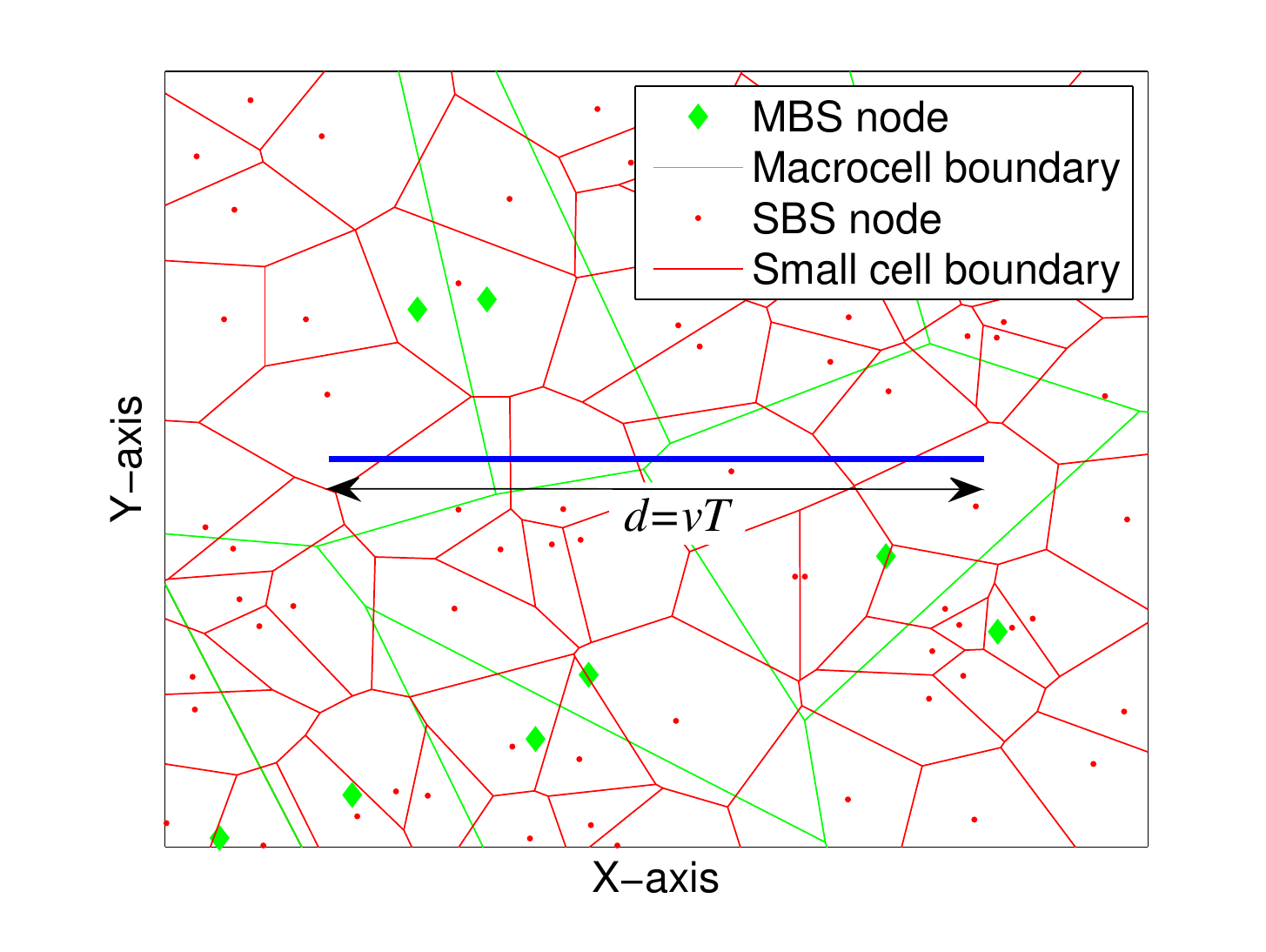}
\vspace{-8mm}\caption{}
\end{subfigure}
\caption{Illustration of cellular network layout; (a)~coverage of macrocells and small cells in co-channel deployment, as in a typical HetNet; (b)~coverage of small-cells with dedicated channel deployment as in a phantom cell network.}
\label{fig:Poisson_Line_Intersect}
\vspace{-5mm}
\end{figure}

We consider a simple UE mobility scenario as also shown in Fig.~\ref{fig:Poisson_Line_Intersect}(b), in which the UE travels along a linear trajectory (for example, through X-axis) with constant velocity $v$. During the travel, we assume that the UE can determine whenever it crosses the boundary of a small cell. In a broad sense, we call the boundary crossings made by the UE as handovers. Therefore, the number of handovers $H$ made by the UE during a measurement time window $T$ is equal to the number of intersections between the UE travel trajectory (of length $d = vT$) and the small-cell boundaries. We use linear mobility model for its simplicity in theoretical analysis, and this model is suitable for scenarios such as medium/high speed cars and trains that may travel through downtown areas. There may be many small cells deployed in such urban areas in the future which may be referred to as ultra/hyper-dense networks~\cite{NokiaUltraDenseNetworks}. Linear mobility may not be accurate for some other scenarios, therefore we have also considered RWP mobility model which is more general and includes linear mobility as a special case.

Note that the approach in Fig.~\ref{fig:Poisson_Line_Intersect}(b) for handover-count based MSD of a UE, into low/medium/high mobility states, has already been specified in the LTE Release-8 standard. Remarkably, no studies exist in the literature that investigate fundamental bounds and effective estimators for UE velocity. At high UE speeds and high SBS densities, handover failure of a UE becomes more likely \cite{3gpp_HOF,6675473,6364722}. Hence, the velocity estimation of UE based on handover counts may not work effectively  with conventional LTE mobility management. On the other hand, emerging small-cell architectures such as phantom cells \cite{Ishii_GC_2012,DocomoPhantomCell} decouple the control and user planes, and allow the UE to be connected to macrocell all the time. Small-cells are discovered through special discovery signals \cite{3gppSCdiscovery}, and a UE can connect them (reminiscent to a handover) to have higher throughput. Hence, handover-count based velocity estimation is still applicable in such scenarios.

\begin{figure}[t]
\centering
\begin{subfigure}[b]{0.35\textwidth}
\includegraphics[width=\textwidth]{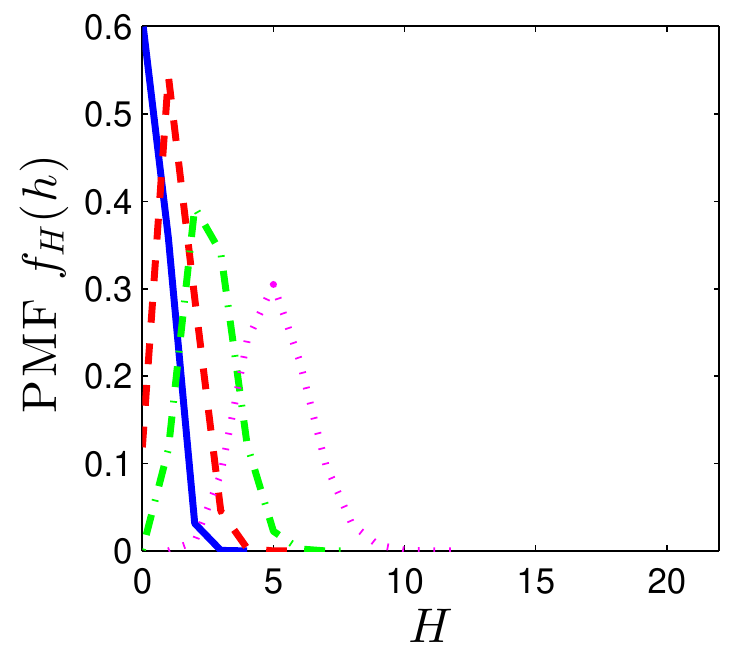}
\vspace{-7mm}\caption{}
\end{subfigure}
\begin{subfigure}[b]{0.35\textwidth}
\includegraphics[width=\textwidth]{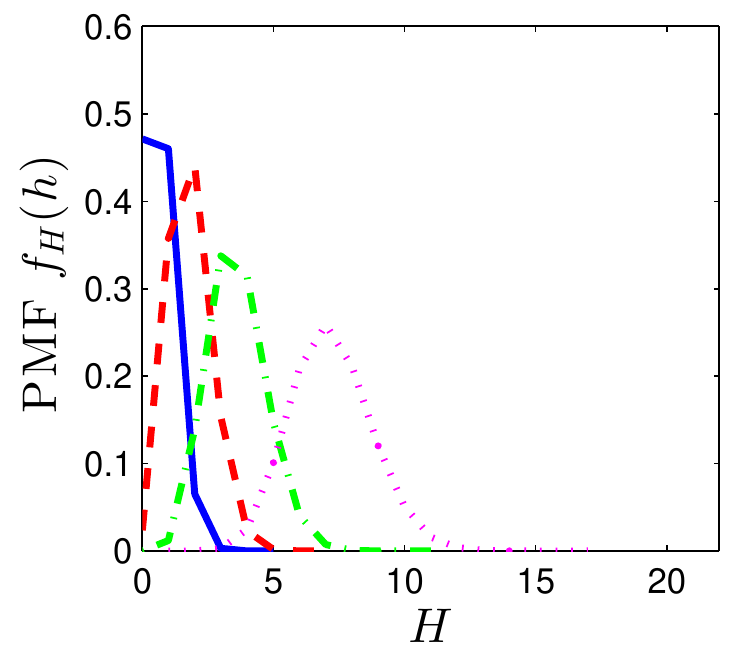}
\vspace{-7mm}\caption{}
\end{subfigure}
\begin{subfigure}[b]{0.35\textwidth}
\includegraphics[width=\textwidth]{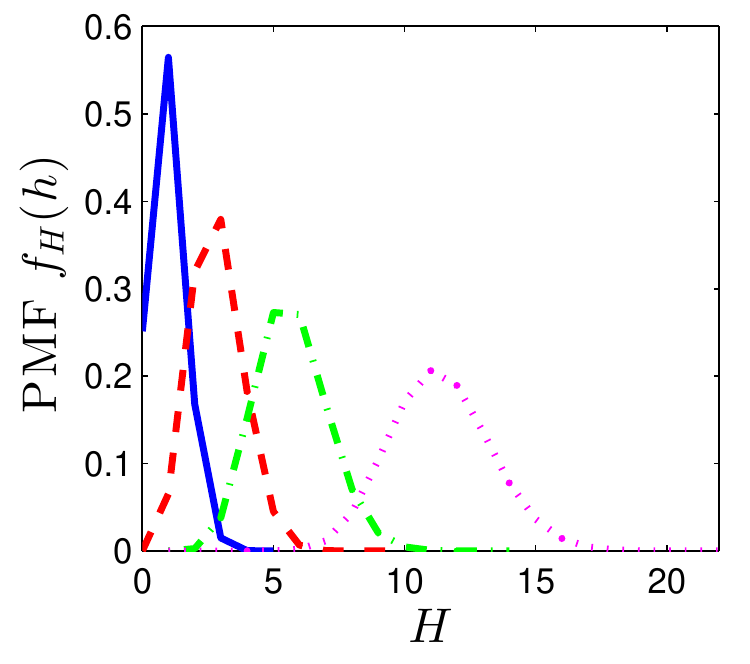}
\vspace{-7mm}\caption{}
\end{subfigure}
\begin{subfigure}[b]{0.35\textwidth}
\includegraphics[width=\textwidth]{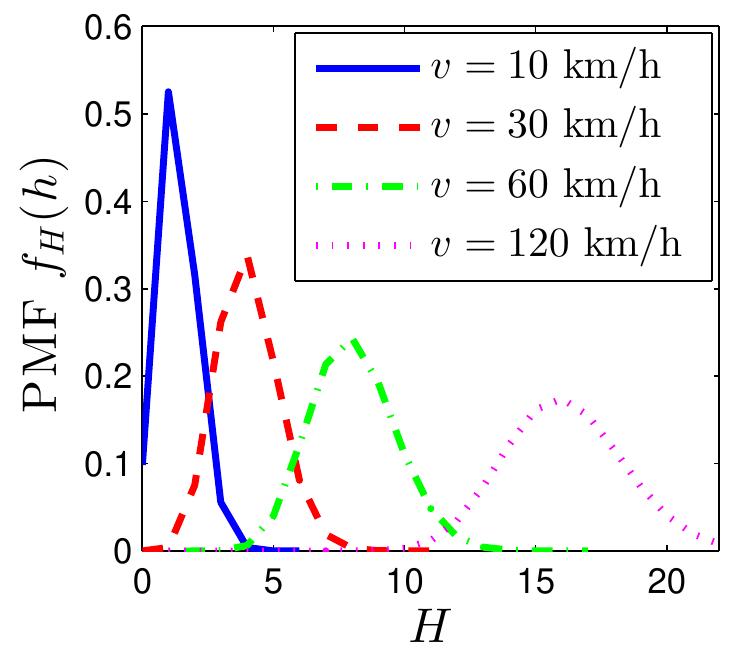}
\vspace{-7mm}\caption{}
\end{subfigure}
\caption{PMF of the handover count for different $\lambda$ and $v$ values; (a)~$\lambda=100$~SBSs/km$^2$; (b)~$\lambda=200$~SBSs/km$^2$; (c)~$\lambda=500$~SBSs/km$^2$; (d)~$\lambda=1000$~SBSs/km$^2$.}
\label{fig:NoHo_PMF}
\vspace{-3mm}
\end{figure}

The handover count $H$ for a scenario as in Fig.~\ref{fig:Poisson_Line_Intersect}(b) is a discrete random variable and its statistics do not change with the direction of the linear trajectory because the SBS locations are modeled using a homogeneous and stationary PPP. The probability mass function (PMF) $f_H(h)$ of handover count obtained through the simulations is shown in Fig.~\ref{fig:NoHo_PMF}(a)-\ref{fig:NoHo_PMF}(d) for different SBS density $\lambda$ and UE velocity $v$ settings, with the handover-count measurement time fixed to $T=12$~s. For low SBS densities, as in Fig.~\ref{fig:NoHo_PMF}(a), the PMFs for different UE velocities are overlapping significantly, leading to low velocity estimation accuracies. For higher SBS densities, as in Fig.~\ref{fig:NoHo_PMF}(d), the PMFs for different velocities are separated, leading to better estimation accuracies. It should also be noticed that for higher UE velocities, standard deviation of the handover count also increases, implying lower estimation accuracies. Another characteristic of the PMF is that for large $\lambda$ and $v$ values, the shape of PMF resembles Gaussian distribution.

\subsection{Modeling Handover-Count Statistics}
In order to obtain the velocity estimate of a UE based on its handover count $H$, we need to know the PMF $f_H(h)$ of handover count. For a scenario as in Fig.~\ref{fig:Poisson_Line_Intersect}(b), an exact expression for the mean number of handovers can be derived as \cite{Moller_PoissonVoronoi, MobilityJeff}
\begin{align}
E[H] = \frac{4vT\sqrt{\lambda}}{\pi}. \label{eq:MeanNo_HOs}
\end{align}
To the best of our knowledge, there is no expression available for the PMF of handover count in the literature. Deriving an expression for the PMF $f_H(h)$ is a complicated and laborious task, which might result into a mathematically intractable expression\cite{Moller_PoissonVoronoi}. Hence, we derive an approximation to the PMF $f_H(h)$ in this paper.

There are several papers in the literature where the PPP based parameters are approximated rather than deriving the exact expressions due to the complexity involved in the derivation of exact expressions. For example, in \cite{TanemuraPPP}, geometrical characteristics of the perimeter, area and number of edges in a 2-dimensional Voronoi cell, and volume, surface area and number of faces in a 3-dimensional Voronoi cell are approximated by fitting generalized gamma distribution to the respective histograms. Similarly, the distributions of 2-dimensional cell area and 3-dimensional cell volume are approximated in \cite{Ferenc2007518} by fitting some simple expressions.

In this paper, we derive two approximations to the PMF $f_H(h)$ for the handover count:\vspace{1mm}\\
1) approximation $f^{\rm g}_H(h)$ derived using gamma distribution;\\
2) approximation $f^{\rm n}_H(h)$ derived using Gaussian distribution.\vspace{1mm}\\
These two approximations to the handover count PMF will be discussed in more detail in Section~\ref{sec:ApproxPMF}, and their accuracies will be further investigated and compared in Section~\ref{sec:PmfApproxAccuracy}.

\section{Approximation of the Handover Count PMF Using Gamma and Gaussian Distributions}
\label{sec:ApproxPMF}
In this section, two approximations for the PMF of handover count will be introduced. In each approximation method, the parameters of a distribution (gamma distribution or Gaussian distribution) will be approximated using the curve fitting tools in Matlab.

\subsection{Approximation of the PMF of Handover Count using Gamma Distribution}
\label{sec:ApproxGamma}
Gamma distribution has been commonly used in approximating the statistical distribution of the parameters related to PPPs such as area, volume, number of edges, etc., of Poisson Voronoi cells \cite{TanemuraPPP, Ferenc2007518}. It can be effectively used to approximate the handover count PMF. The gamma PDF can be expressed using the shape parameter $\alpha > 0$ and rate parameter $\beta > 0$ as
\begin{align}
f^{\rm g}(x) = \frac{\beta^\alpha}{\Gamma(\alpha)} x^{\alpha-1} e^{-\beta x}, \mbox{ for } x\in (0, \infty), \label{eq:GammaPDF}
\end{align}
where $\Gamma(\alpha) = \int_0^\infty t^{\alpha-1} e^{-t}\mathrm{d}t$ is the gamma function. Fitting the gamma PDF to the PMF $f_H(h)$ is not a straight forward task since the gamma PDF is a continuous function while the PMF $f_H(h)$ is a discrete function. Therefore, we fit the subsets of gamma PDF to the PMF $f_H(h)$.
\begin{figure}[htp]
\center
\includegraphics[width = 2.4in]{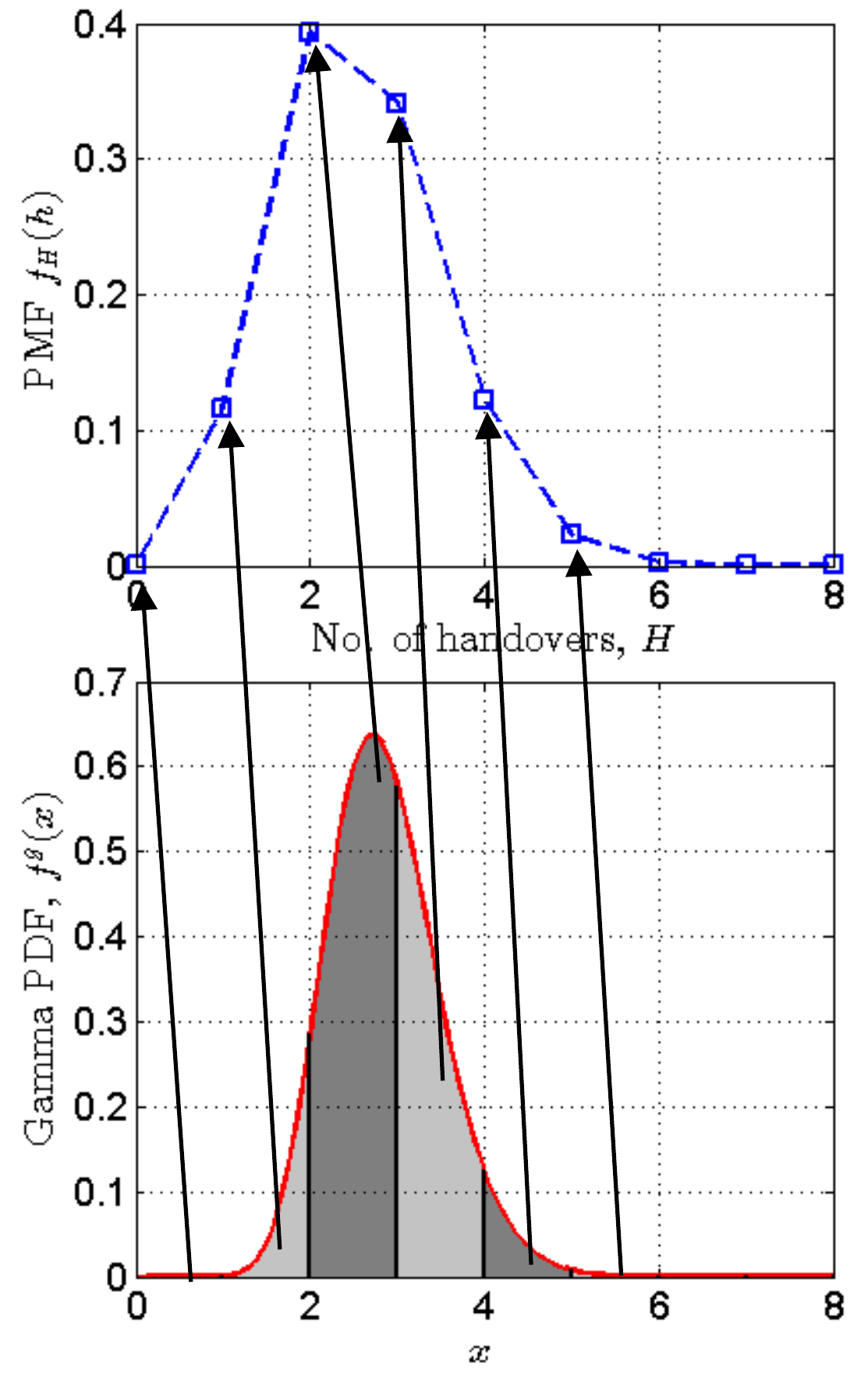}
\vspace{-2mm}
\caption{Illustration of fitting gamma distribution to the handover count PMF.}
\label{fig:FittingIllustration}
\vspace{-4mm}
\end{figure}
The subsets of gamma PDF can be obtained by integrating the gamma PDF between the integer values of $x$ as illustrated in Fig.~\ref{fig:FittingIllustration}. Integrating the gamma PDF between 0 and 1 provides the PMF value for $h=0$, integrating the gamma PDF between 1 and 2 provides the PMF value for $h=1$, and so on. This process can be mathematically described as:
\begin{align}
f^{\rm g}_H(h) = \int_h^{h+1} f^{\rm g}(x) {\rm d}x, \mbox{ for } h \in \{0,1,2,...\}, \label{eq:f_hat}
\end{align}
where, $f^{\rm g}_H(h)$ is an approximation to the PMF $f_H(h)$. Substituting \eqref{eq:GammaPDF} into \eqref{eq:f_hat}, we get
\begin{align}
f^{\rm g}_H(h) &= \int_h^{h+1} \frac{\beta^\alpha}{\Gamma(\alpha)} x^{\alpha-1} e^{-\beta x} {\rm d}x = \frac{\beta^\alpha}{\Gamma(\alpha)} \int_h^{h+1} \hspace{-2mm} x^{\alpha-1} e^{-\beta x} {\rm d}x, \mbox{ for } h \in \{0,1,2,...\}.
 \label{eq:fg_hat}
\end{align}
With the change of variable $x = t/\beta$, we can rewrite \eqref{eq:fg_hat} in its equivalent form as
\begin{align}
f^{\rm g}_H(h) &= \frac{\beta^\alpha}{\Gamma(\alpha)} \int_{\beta h}^{\beta(h+1)} \left(\frac{t}{\beta}\right)^{\alpha-1} e^{-t} \frac{{\rm d}t}{\beta} =\frac{1}{\Gamma(\alpha)} \int_{\beta h}^{\beta(h+1)} \hspace{-5mm} t^{\alpha-1} e^{-t} {\rm d}t = \frac{\Gamma\big(\alpha,\beta h, \beta(h+1)\big)}{\Gamma(\alpha)}, \label{eq:fg_hat_final}
\end{align}
where, $\Gamma\big(\alpha, \beta h, \beta(h+1)\big)=\int_{\beta h}^{\beta(h+1)} t^{\alpha-1} e^{-t}\mathrm{d}t$ is the generalized incomplete gamma function. However, for the approximation in \eqref{eq:fg_hat_final} to be accurate, the values for $\alpha$ and $\beta$ parameters should be chosen such that the mean squared error (MSE) between $f_H(h)$ and $f^{\rm g}_H(h)$ is minimized.

\begin{lemma}
The $\alpha$ and $\beta$ parameters for the approximation in \eqref{eq:fg_hat_final} which minimize the MSE between $f_H(h)$ and $f^{\rm g}_H(h)$ can be expressed as
\begin{align}
\alpha &= 2.7 + 4 d \sqrt{\lambda}, \label{eq:AlphaApprox}\\
\beta &= \pi + \frac{0.8}{0.38+ d \sqrt{\lambda}}, \label{eq:BetaApprox}
\end{align}
where $d=vT$ is the distance traveled by UE during the handover-count measurement time.
\end{lemma}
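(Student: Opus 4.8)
The plan is to treat \eqref{eq:AlphaApprox}--\eqref{eq:BetaApprox} as the outcome of a two-stage argument: first reduce the fitting problem to a single dimensionless variable by a scaling argument, and then justify the particular functional forms whose best-fit constants are being reported. First I would observe that the handover-count PMF $f_H(h)$ depends on $d$ and $\lambda$ only through the product $d\sqrt{\lambda}$. This follows from the scaling invariance of the homogeneous PPP: rescaling the plane by the factor $\sqrt{\lambda}$ maps an intensity-$\lambda$ process to a unit-intensity process while stretching the UE trajectory from length $d$ to length $d\sqrt{\lambda}$, and since the Voronoi tessellation commutes with similarity transformations and the handover count is a purely combinatorial count of boundary crossings, $H$ is unchanged by this rescaling. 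Consequently the law of $H$ — and hence any optimal fitting parameters — can depend only on $s := d\sqrt{\lambda}$. This is the one genuinely structural step, and it explains why \eqref{eq:AlphaApprox}--\eqref{eq:BetaApprox} are written as functions of $d\sqrt{\lambda}$ alone rather than of $d$ and $\lambda$ separately.

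Next I would set up the least-squares objective that \eqref{eq:AlphaApprox}--\eqref{eq:BetaApprox} are claimed to minimize. For each fixed value of $s$ one generates the empirical PMF $f_H(h)$ by simulation (as in Fig.~\ref{fig:NoHo_PMF}) and, substituting $f^{\rm g}_H$ from \eqref{eq:fg_hat_final}, solves
\begin{align}
\bigl(\alpha^\star(s),\beta^\star(s)\bigr) = \arg\min_{\alpha,\beta>0} \sum_{h=0}^{\infty} \Bigl( f_H(h) - \frac{\Gamma\bigl(\alpha,\beta h,\beta(h+1)\bigr)}{\Gamma(\alpha)} \Bigr)^2,
\end{align}
using the nonlinear least-squares curve-fitting routine in Matlab. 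Repeating this over a grid of $s$ values produces the two curves $\alpha^\star(s)$ and $\beta^\star(s)$ that \eqref{eq:AlphaApprox}--\eqref{eq:BetaApprox} are meant to represent.

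To pin down the leading behaviour before fitting, I would invoke the mean constraint \eqref{eq:MeanNo_HOs}. Because the gamma law \eqref{eq:GammaPDF} has mean $\alpha/\beta$ and the discretization \eqref{eq:f_hat} preserves the mean up to an $O(1)$ offset, matching $E[H]=4s/\pi$ forces $\alpha^\star/\beta^\star \approx 4s/\pi$, so that $\alpha^\star \sim 4s$ and $\beta^\star \to \pi$ as $s$ grows. This fixes the dominant term $4d\sqrt{\lambda}$ in \eqref{eq:AlphaApprox} and the constant $\pi$ in \eqref{eq:BetaApprox}. The remaining finite-$s$ corrections I would then capture with the simplest forms consistent with the observed curves: an affine form $\alpha=a_0+a_1 s$ for $\alpha^\star(s)$, and a form $\beta=\pi+c/(c'+s)$ for $\beta^\star(s)$ whose offset above $\pi$ decays monotonically to zero. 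Running the curve fit yields $a_0\approx2.7$, $a_1\approx4$, $c\approx0.8$, and $c'\approx0.38$, which are precisely the constants appearing in \eqref{eq:AlphaApprox}--\eqref{eq:BetaApprox}.

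The main obstacle is that this is not a closed-form analytic identity: the constants $2.7$, $0.8$, and $0.38$ emerge from numerical optimization rather than from an exact calculation, so the real content of the argument is (i) the scaling reduction to the single variable $s$, and (ii) the selection of the two parametric families above. The delicate part of (ii) is the small-$s$ regime, where the true PMF is markedly skewed and non-Gaussian and where the gamma fit is under the most stress; choosing the correction terms so that the simple fits in \eqref{eq:AlphaApprox}--\eqref{eq:BetaApprox} track the simulated $\alpha^\star(s),\beta^\star(s)$ accurately across both the small-$s$ and large-$s$ ranges simultaneously, rather than only in the asymptotic limit, is what makes the particular functional forms — and their fitted constants — non-obvious.
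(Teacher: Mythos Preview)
Your proposal is correct and follows essentially the same heuristic route as the paper's own argument in Appendix~\ref{app:AlphaBetaStudy}: a scaling argument reducing the dependence to the single variable $s=d\sqrt{\lambda}$, followed by simulation-based curve fitting to extract $\alpha^\star(s),\beta^\star(s)$ and then simple parametric fits to those curves. Your mean-matching step $\alpha/\beta\approx 4s/\pi$ to motivate the leading coefficient $4$ in \eqref{eq:AlphaApprox} and the limiting value $\pi$ in \eqref{eq:BetaApprox} is a nice touch that the paper does not make explicit --- it simply reports the formulated expressions and validates them graphically.
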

\vspace{-2mm}
\begin{proof}
See Appendix~\ref{app:AlphaBetaStudy}.
\end{proof}
\vspace{2mm}

Using \eqref{eq:fg_hat_final}-\eqref{eq:BetaApprox}, we can capture the statistical distribution of handover count if the SBS density and the distance traveled by UE are known. Through \eqref{eq:AlphaApprox} and \eqref{eq:BetaApprox}, it can also be noticed that the handover-count distribution depends on the distance traveled by UE, rather than the UE velocity or the time period independently.

\subsection{Approximation of the PMF of Handover Count using Gaussian Distribution}
\label{sec:ApproxGaussian}
In the previous section, approximation to the handover count PMF $f_H(h)$ was derived using gamma distribution which resulted into an expression in integral form. In this section, we will approximate the PMF $f_H(h)$ using Gaussian distribution which results into a closed form expression for the handover-count PMF.

The PDF of Gaussian distribution can be expressed as a function of mean $\mu$ and variance $\sigma^2$:
\begin{align}
f^{\rm n}\left(x\right) = \frac{1}{\sqrt{2\pi\sigma^2}} e^{-\frac{(x-\mu)^2}{2\sigma^2}}. \label{eq:GaussianDistribution}
\end{align}
Since the PMF $f_H(h)$ is discrete and the Gaussian distribution in \eqref{eq:GaussianDistribution} is continuous, we consider only non-negative integer samples of the Gaussian distribution for the fitting process. Henceforth, the approximation to $f_H(h)$ can be expressed as,
\begin{align}
f^{\rm n}_H(h) = \frac{1}{\sqrt{2\pi\sigma^2}} e^{-\frac{(h-\mu)^2}{2\sigma^2}}, \mbox{ for } h\in\{0,1,2,...\}. \label{eq:fn_hat_final}
\end{align}
The values of $\mu$ and $\sigma^2$ should be chosen to minimize the MSE between $f_H(h)$ and $f^{\rm n}_H(h)$.

\begin{lemma}
The $\mu$ and $\sigma^2$ parameters for the approximation in \eqref{eq:fn_hat_final} which minimize the MSE between $f_H(h)$ and $f^{\rm n}_H(h)$ can be expressed as
\begin{align}
\mu &= \frac{4 d \sqrt{\lambda}}{\pi}, \label{eq:mu}\\
\sigma^2 &= 0.07 + 0.41 d \sqrt{\lambda}, \label{eq:sigma2}
\end{align}
where $d=vT$ is the distance traveled by the UE during the handover count measurement time.
\end{lemma}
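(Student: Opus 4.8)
The plan is to fix $\mu$ analytically and recover $\sigma^2$ by a scaling-plus-regression argument analogous to the gamma-fit proof. First I would observe that the single governing variable is $d\sqrt{\lambda}$: by the scaling property of a homogeneous PPP, dilating the plane by a factor $c$ sends the intensity $\lambda \mapsto \lambda/c^2$ while leaving the combinatorial boundary-crossing count invariant, so the law of the handover count along a segment of length $d$ under intensity $\lambda$ coincides with that along a segment of length $cd$ under intensity $\lambda/c^2$. The product $d\sqrt{\lambda}$ is the invariant of this rescaling, since $(cd)\sqrt{\lambda/c^2}=d\sqrt{\lambda}$. Choosing $c=\sqrt{\lambda}$ normalizes the problem to unit intensity with segment length $d\sqrt{\lambda}$, which is why both $\mu$ and $\sigma^2$ may be sought as functions of $d\sqrt{\lambda}$ alone.

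For the location parameter I would argue that in the regime where the Gaussian fit is good, i.e. large $d\sqrt{\lambda}$ where Fig.~\ref{fig:NoHo_PMF} shows the PMF becoming bell-shaped, the MSE-minimizing $\mu$ should coincide with the true mean of $H$. Since the exact mean is already available from \eqref{eq:MeanNo_HOs} as $E[H]=4d\sqrt{\lambda}/\pi$, this immediately yields \eqref{eq:mu}. More carefully, one can simply fix $\mu=E[H]$ as a modeling choice and then optimize only over $\sigma^2$, validating \eqref{eq:mu} a posteriori through the fit quality reported in Section~\ref{sec:PmfApproxAccuracy}.

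For the variance parameter I would proceed empirically, mirroring the appendix for the gamma case. I would generate the simulated PMF $f_H(h)$ over a grid of $(d,\lambda)$ values, and for each value of the invariant $d\sqrt{\lambda}$ solve the one-dimensional problem $\sigma^2(d\sqrt{\lambda}) = \arg\min_{\sigma^2}\sum_h\big(f_H(h)-f^{\rm n}_H(h)\big)^2$ with $\mu$ held at $4d\sqrt{\lambda}/\pi$ per \eqref{eq:fn_hat_final}. Plotting these optimal $\sigma^2$ values against $d\sqrt{\lambda}$ should reveal an approximately affine relationship, and a least-squares fit of the form $\sigma^2 = a + b\,d\sqrt{\lambda}$ would return the coefficients $a\approx 0.07$ and $b\approx 0.41$ reported in \eqref{eq:sigma2}.

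The main obstacle is conceptual rather than computational: the MSE objective does not, in general, force the fitted $\mu$ to equal the true mean or the fitted $\sigma^2$ to equal the true variance, so the cleanest justification for \eqref{eq:mu} is the asymptotic-Gaussian heuristic together with the scaling reduction, while the linear law in \eqref{eq:sigma2} must be established by inspection of the numerically optimized values rather than derived in closed form. Demonstrating that a single affine function of $d\sqrt{\lambda}$ captures the variance uniformly across the whole $(d,\lambda)$ range, and quantifying the residual error, is the step that carries the real work.
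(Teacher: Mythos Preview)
Your proposal is correct and follows essentially the same route as the paper: fix $\mu$ to the known exact mean $4d\sqrt{\lambda}/\pi$, invoke the PPP scaling argument to reduce everything to the single variable $d\sqrt{\lambda}$, then recover $\sigma^2$ by numerically fitting the Gaussian template to simulated handover-count PMFs and regressing the optimal $\sigma^2$ values onto an affine function of $d\sqrt{\lambda}$. Your discussion is in fact more candid than the paper about the heuristic status of the $\mu=E[H]$ choice, but the method and outcome are the same.
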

\begin{proof}
See Appendix~\ref{app:MuSigma2Study}.
\end{proof}

\section{Cramer-Rao Lower Bound for Velocity Estimation}
\label{sec:CRLB}
CRLB can be used to serve as a lower bound on the variance of an unbiased estimator \cite{EstimationTheoryBook}. An estimator is said to be unbiased if its expected value is same as the true value of the parameter being estimated. An unbiased estimator whose variance can achieve the CRLB is said to be an efficient estimator, and it can achieve minimum MSE among all the unbiased estimators. In some scenarios, it might not be feasible to determine an efficient estimator. In that case, the unbiased estimator with the smallest variance is said to be a MVU estimator.

In this section, we will derive the CRLB for velocity estimation using the mathematical tools from estimation theory. Since we have two different approximations for the PMF of the number of handovers, we will obtain two separate CRLB expressions.

\subsection{CRLB Derivation using Gamma PMF Approximation}
In this sub-section we will obtain the CRLB by considering the handover count PMF approximation that was derived using gamma distribution in Section~\ref{sec:ApproxGamma}.
\begin{theorem}
In a Poisson-Voronoi tessellation of small cells with SBS node density $\lambda$, let a UE travel with velocity $v$ over a linear trajectory and make $H$ handovers over a time duration $T$. If the PMF of the handover count can be expressed using $f_H^{\rm g}(H; v)$ as in \eqref{eq:fg_hat_final}, then the CRLB for velocity estimation is given by
\begin{align}
{\rm var}(\hat{v}) \geq \frac{1}{\mathbb{E}\left[\left(\frac{\partial \log f_H^{\rm g}(H; v)}{\partial v}\right)^2\right]}, \label{eq:CRLB}
\end{align}
where, $\mathbb{E}[\cdot]$ is the expectation operator with respect to $H$, and
\begin{align}
\frac{\partial \log f_H^{\rm g}(h; v)}{\partial v} &= \frac{4 T \sqrt{\lambda} \beta^\alpha}{\alpha^2 \ \Gamma(\alpha, \beta h, \beta(h+1))} \Big[h^\alpha {_2F_2}(\alpha,\alpha;\alpha+1,\alpha+1;-\beta h) \nonumber\\
&\hspace{1.6cm}- (h+1)^\alpha {_2F_2}(\alpha,\alpha;\alpha+1,\alpha+1;-\beta(h+1)) \Big]\nonumber\\
&\hspace{0.6cm}- \frac{4 T \sqrt{\lambda}}{\Gamma(\alpha, \beta h, \beta(h+1))}\Big[\gamma(\alpha,\beta h)\log(\beta h)-\gamma(\alpha,\beta(h+1))\log(\beta(h+1))\Big] \nonumber\\
&\hspace{0.6cm} + \frac{0.8T \sqrt{\lambda}\beta^{\alpha-1}e^{-\beta h}\left[h^\alpha - e^{-\beta} (h+1)^\alpha \right]}{\Gamma(\alpha, \beta h, \beta(h+1))(0.38+v T \sqrt{\lambda})^2} - 4T\sqrt{\lambda}\ \psi(\alpha), \label{eq:DiffLogPMF_1}
\end{align}
where, $\psi(\cdot)$ is digamma function, $\gamma(\alpha,x) = \int_0^x t^{\alpha-1}e^{-t} {\rm d}t$ is lower incomplete gamma function, and ${_2F_2}(a_1,a_2;b_1,b_2;z)$ is generalized hypergeometric function which is expressed as
\begin{align}
{_2F_2}(a_1,a_2;b_1,b_2;z) = \sum_{k=0}^\infty\frac{(a_1)_k (a_2)_k}{(b_1)_k (b_2)_k} \frac{z^k}{k!},
\end{align}
where, $(a)_0 = 1$ and $(a)_k = a(a+1)(a+2)...(a+k-1)$, for $k \geq 1$.
\end{theorem}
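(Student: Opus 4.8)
The plan is to establish the CRLB expression \eqref{eq:CRLB} by directly invoking the standard Cramer-Rao theorem from estimation theory, then to carry out the computation of the score function $\partial \log f_H^{\rm g}(h;v)/\partial v$ explicitly, which is where all the real work lies. The first step is routine: since $f_H^{\rm g}(H;v)$ is being treated as the likelihood of the single discrete observation $H$, the CRLB states that any unbiased estimator $\hat{v}$ satisfies ${\rm var}(\hat{v}) \geq 1/I(v)$, where $I(v) = \mathbb{E}\big[(\partial \log f_H^{\rm g}(H;v)/\partial v)^2\big]$ is the Fisher information. This gives \eqref{eq:CRLB} immediately, provided the usual regularity conditions (differentiability under the sum and exchange of differentiation and expectation) hold for the gamma-based PMF approximation. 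I would note that these conditions are satisfied because the generalized incomplete gamma function is smooth in its limits and $\alpha,\beta$ depend smoothly on $v$ through $d = vT$.

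The substantive task is deriving the closed form \eqref{eq:DiffLogPMF_1} for the score. Starting from $\log f_H^{\rm g}(h;v) = \log \Gamma\big(\alpha, \beta h, \beta(h+1)\big) - \log \Gamma(\alpha)$, I would apply the chain rule, recognizing that $v$ enters only through $\alpha(v)$ and $\beta(v)$ as given by \eqref{eq:AlphaApprox} and \eqref{eq:BetaApprox}. Thus
\begin{align}
\frac{\partial \log f_H^{\rm g}}{\partial v} = \frac{\partial \log f_H^{\rm g}}{\partial \alpha}\frac{\partial \alpha}{\partial v} + \frac{\partial \log f_H^{\rm g}}{\partial \beta}\frac{\partial \beta}{\partial v} - \psi(\alpha)\frac{\partial \alpha}{\partial v}, \nonumber
\end{align}
where the last term comes from differentiating $-\log\Gamma(\alpha)$ and uses $\psi(\alpha) = \Gamma'(\alpha)/\Gamma(\alpha)$. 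From \eqref{eq:AlphaApprox}-\eqref{eq:BetaApprox} I would compute $\partial \alpha/\partial v = 4T\sqrt{\lambda}$ and $\partial \beta/\partial v = -0.8 T\sqrt{\lambda}/(0.38 + vT\sqrt{\lambda})^2$; these factors explain the prefactors appearing in every bracket of \eqref{eq:DiffLogPMF_1}, and in particular the third term, which is the $\beta$-derivative contribution.

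The main obstacle is differentiating the generalized incomplete gamma function $\Gamma(\alpha, \beta h, \beta(h+1)) = \int_{\beta h}^{\beta(h+1)} t^{\alpha-1} e^{-t}\,\mathrm{d}t$ with respect to its two distinct kinds of arguments. For the $\beta$-derivative I would differentiate under the integral sign and use the Leibniz rule on the variable limits, producing boundary terms $\pm (\beta h)^{\alpha-1} e^{-\beta h} h$ and the analogue at $\beta(h+1)$; after simplification these yield the $\beta^{\alpha-1} e^{-\beta h}[h^\alpha - e^{-\beta}(h+1)^\alpha]$ structure. The genuinely delicate piece is the $\alpha$-derivative, $\partial/\partial\alpha \int_{\beta h}^{\beta(h+1)} t^{\alpha-1} e^{-t}\,\mathrm{d}t = \int t^{\alpha-1}(\log t)\, e^{-t}\,\mathrm{d}t$: the $\log t$ factor must be integrated against the incomplete gamma integrand. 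I expect this is precisely where the ${_2F_2}$ hypergeometric functions enter, since $\int_0^x t^{\alpha-1}(\log t) e^{-t}\,\mathrm{d}t$ admits a known closed form combining $x^\alpha\,{_2F_2}(\alpha,\alpha;\alpha+1,\alpha+1;-x)/\alpha^2$ with a $\gamma(\alpha,x)\log x$ term. Writing the definite integral over $[\beta h,\beta(h+1)]$ as the difference of two such lower-incomplete evaluations reproduces both the hypergeometric bracket (divided by $\alpha^2$) and the $\gamma(\alpha,\beta h)\log(\beta h)$ bracket in \eqref{eq:DiffLogPMF_1}. Assembling all three contributions with their respective $\partial\alpha/\partial v$ and $\partial\beta/\partial v$ factors, and dividing throughout by $\Gamma(\alpha,\beta h,\beta(h+1))$ from the logarithmic derivative, yields \eqref{eq:DiffLogPMF_1} exactly; substituting into \eqref{eq:CRLB} completes the proof.
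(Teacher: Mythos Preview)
Your proposal is correct and follows essentially the same route as the paper's proof in Appendix~\ref{app:CRLB_Gamma}: both split $\log f_H^{\rm g}$ into $\log\Gamma(\alpha,\beta h,\beta(h+1))-\log\Gamma(\alpha)$, apply the chain rule through $\alpha(v)$ and $\beta(v)$ (the paper equivalently uses $z_1=\beta h$, $z_2=\beta(h+1)$ as intermediate variables), invoke the digamma function for $\partial\log\Gamma(\alpha)/\partial\alpha$, use Leibniz' rule for the limit-derivatives, and identify the $\alpha$-derivative of the incomplete gamma integral with the ${_2F_2}$/$\gamma(\alpha,x)\log x$ closed form. One small notational slip: in your displayed chain-rule line the first two partials should be of $\log\Gamma(\alpha,\beta h,\beta(h+1))$ rather than of $\log f_H^{\rm g}$, since you have already pulled out the $-\psi(\alpha)\,\partial\alpha/\partial v$ contribution separately.
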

\begin{proof}
See Appendix~\ref{app:CRLB_Gamma}.
\end{proof}

Due to the complexity of expression in \eqref{eq:DiffLogPMF_1}, it is impractical to derive the right hand side (RHS) of \eqref{eq:CRLB} in closed form. For this reason, we can only find asymptotic CRLB by numerically evaluating the RHS of \eqref{eq:CRLB}. Through simulations, we generate $N$ samples of the random variable $H$ and denote them as $\{H_n\}$, for $n \in 1,2,...,N$. Using these $N$ samples, we can numerically evaluate the asymptotic CRLB using
\begin{align}
{\rm var}(\hat{v}) \geq \frac{N}{\sum_{m=H_{\rm min}}^{H_{\rm max}}\left(N_m\left(\frac{\partial \log f_H^{\rm g}(m; v)}{\partial v}\right)^2\right)}, \label{eq:AymptoticCRLB_1}
\end{align}
where, $H_{\rm max} = \max\{H_n: \forall n\in 1,2,...,N\}$, is the maximum value of $H_n$, $H_{\rm min} = \min\{H_n: \forall n\in 1,2,...,N\}$, is the minimum value of $H_n$, and $N_m = \sum_{n=1}^N 1\{H_n=m\}$ is the number of elements in the set $\{H_n\}$ that are equal to $m$. Here, $1\{\cdot\}$ is the indicator function whose value is $1$ if the condition inside the braces is true, $0$ otherwise.

\subsection{CRLB Derivation using Gaussian PMF Approximation}
In this sub-section, we will obtain the CRLB by considering the PMF approximation using Gaussian distribution that was derived in Section~\ref{sec:ApproxGaussian}.
\begin{theorem}
In a Poisson-Voronoi tessellation of small cells with SBS node density $\lambda$, let a UE travel with velocity $v$ over a linear trajectory and make $H$ handovers over a time duration $T$. If the PMF of the handover count can be expressed using $f_H^{\rm n}(H; v)$ as in \eqref{eq:fn_hat_final}, then the CRLB for velocity estimation is given by
\begin{align}
{\rm var}(\hat{v}) \geq \frac{1}{\left(\frac{\mu}{v\sigma}\right)^2 + \frac{1}{2}\left(\frac{0.41 T \sqrt{\lambda}}{\sigma^2}\right)^2}. \label{eq:CRLB_GaussApprox_1}
\end{align}
\end{theorem}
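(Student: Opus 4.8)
The plan is to apply the general Cram\'er--Rao bound $\mathrm{var}(\hat{v}) \geq 1/I(v)$ with Fisher information $I(v) = \mathbb{E}\big[(\partial \log f_H^{\rm n}(H;v)/\partial v)^2\big]$, treating the approximation $f_H^{\rm n}$ in \eqref{eq:fn_hat_final} as a continuous Gaussian density in $H$ for the purpose of the calculation. The key structural observation is that \emph{both} Gaussian parameters carry $v$-dependence through $d = vT$: from \eqref{eq:mu} and \eqref{eq:sigma2} we have $\mu = 4vT\sqrt{\lambda}/\pi$ and $\sigma^2 = 0.07 + 0.41\,vT\sqrt{\lambda}$, so that $\partial\mu/\partial v = \mu/v = 4T\sqrt{\lambda}/\pi$ and $\partial\sigma^2/\partial v = 0.41\,T\sqrt{\lambda}$. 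The problem therefore reduces to computing the Fisher information for a Gaussian family in which the mean and the variance are both smooth functions of the parameter being estimated.

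First I would write the log-density $\log f_H^{\rm n}(h;v) = -\frac{1}{2}\log(2\pi\sigma^2) - (h-\mu)^2/(2\sigma^2)$ and differentiate with respect to $v$ by the chain rule through $\mu(v)$ and $\sigma^2(v)$, obtaining the score
\begin{align}
\frac{\partial \log f_H^{\rm n}}{\partial v} = \frac{(h-\mu)}{\sigma^2}\frac{\partial \mu}{\partial v} + \frac{1}{2\sigma^2}\frac{\partial \sigma^2}{\partial v}\left(\frac{(h-\mu)^2}{\sigma^2} - 1\right). \nonumber
\end{align}
With the substitution $z = (H-\mu)/\sigma$, which is standard normal under $f_H^{\rm n}$, the score becomes a linear combination of $z$ and $(z^2-1)$.

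Next I would square the score and take the expectation over $H$. The cross term is proportional to $\mathbb{E}[z(z^2-1)] = \mathbb{E}[z^3] - \mathbb{E}[z] = 0$, since the odd moments of a standard normal vanish; this decoupling is what keeps the result clean. The remaining terms evaluate using $\mathbb{E}[z^2] = 1$ and $\mathbb{E}[(z^2-1)^2] = \mathbb{E}[z^4] - 2\mathbb{E}[z^2] + 1 = 3 - 2 + 1 = 2$, giving $I(v) = (\partial\mu/\partial v)^2/\sigma^2 + (\partial\sigma^2/\partial v)^2/(2\sigma^4)$. Substituting the two derivatives computed above turns the first term into $(\mu/(v\sigma))^2$ and the second into $\frac{1}{2}(0.41\,T\sqrt{\lambda}/\sigma^2)^2$, and taking the reciprocal yields \eqref{eq:CRLB_GaussApprox_1}.

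The computation is routine once the score is written, so there is no single hard obstacle; the point most easily missed is that $\sigma^2$ --- and not only $\mu$ --- depends on $v$, which is exactly what produces the second additive term in $I(v)$. Dropping that dependence would recover the familiar mean-only Fisher information and a spuriously tighter bound. A secondary caveat worth recording is that the exact handover count is a discrete random variable whereas $f_H^{\rm n}$ is only its continuous approximation, so the bound obtained here should be read as the CRLB induced by the approximating Gaussian model rather than by the true PMF $f_H(h)$.
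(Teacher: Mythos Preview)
Your proposal is correct and follows essentially the same approach as the paper: both recognize that under the Gaussian approximation the Fisher information is $I(v) = (\partial\mu/\partial v)^2/\sigma^2 + (\partial\sigma^2/\partial v)^2/(2\sigma^4)$ and then substitute $\partial\mu/\partial v = 4T\sqrt{\lambda}/\pi$ and $\partial\sigma^2/\partial v = 0.41\,T\sqrt{\lambda}$. The only difference is presentational---the paper invokes this Gaussian Fisher-information identity directly from \cite[Section~3.9]{EstimationTheoryBook}, whereas you derive it from the score $z\,\partial\mu/\partial v/\sigma + \tfrac{1}{2}(z^2-1)\,\partial\sigma^2/\partial v/\sigma^2$ and the moment computations $\mathbb{E}[z(z^2-1)]=0$, $\mathbb{E}[(z^2-1)^2]=2$.
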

\begin{proof}
Consider the PMF approximation $f_H^{\rm n}(h)$ in \eqref{eq:fn_hat_final} which can be represented as a general Gaussian distribution, $H \sim \mathcal{N}\left(\mu,\sigma^2\right),$ where $\mu$ and $\sigma^2$ are given by \eqref{eq:mu} and \eqref{eq:sigma2} respectively. The Fisher information for the general Gaussian observations is given by \cite[Section~3.9]{EstimationTheoryBook}
\begin{align}
I(v) =& \left(\frac{\partial\mu}{\partial v}\right)^2 \frac{1}{\sigma^2} + \frac{1}{2 \left(\sigma^2\right)^2} \left(\frac{\partial \sigma^2}{\partial v}\right)^2 = \left(\frac{4 T \sqrt{\lambda}}{\pi}\right)^2 \frac{1}{\sigma^2} + \frac{1}{2 \left(\sigma^2\right)^2} \left(0.41 T \sqrt{\lambda}\right)^2.
\end{align}
Using inverse of the Fisher information, the CRLB can be expressed as in~\eqref{eq:CRLB_GaussApprox_1}.
\end{proof}

\subsection{Minimum Variance Unbiased Estimator for UE Velocity}
\label{sec:UnbiasedEstimator}
In Section~\ref{sec:ApproxGamma} and Section~\ref{sec:ApproxGaussian}, two CRLB expressions were derived by considering gamma and Gaussian distributions, respectively, for approximating the handover count PMF. In the case of using gamma distribution, the CRLB expression was complicated and not in closed form. On the other hand, in the case of using Gaussian distribution, the CRLB expression was relatively simple and in closed form. Hence, in this sub-section, we will consider the case with Gaussian distribution and derive an estimator $\hat{v}$ for a UE's velocity, which takes the number of handovers $H$ as the input. We will further derive the mean and the variance of this estimator and show that it is a MVU estimator.

To derive the MVU velocity estimator, we first use Neyman-Fisher factorization to find the sufficient statistic for $v$~\cite[Section~5.4]{EstimationTheoryBook}. Then, we make use of Rao-Blackwell-Lehmann-Scheffe (RBLS) theorem to find the MVUE~\cite[Section~5.5]{EstimationTheoryBook}. The Neyman-Fisher factorization theorem states that if we can factor the PMF $f^{\rm n}_H(h)$ as
\begin{align}
f^{\rm n}_H(h) = g\big(\mathcal F(h),v\big)r(h), \label{eq:NF}
\end{align}
where $g$ is a function depending on $h$ only through $\mathcal F(h)$ and $r$ is a function depending only on $h$, then $\mathcal F(h)$ is a sufficient statistic for $v$. Using \eqref{eq:fn_hat_final} and letting $\mathcal F(h) = h$, we can factor the PMF $f^{\rm n}_H(h)$ in the form of \eqref{eq:NF} as
\begin{align}
f^{\rm n}_H(h) &= \underbrace{\frac{1}{\sqrt{2\pi\sigma^2}} e^{-\frac{(\mathcal F(h)-\mu)^2}{2\sigma^2}}} \cdot \underbrace{1},\\
& \ \ \ \ \ \ \ \ \ g\big(\mathcal F(h),v\big) \ \ \ \ \ \ \ r(h) \nonumber
\end{align}
Therefore, the sufficient statistic for $v$ is $\mathcal F(h)=h$. The sufficient statistic can be used to find the MVU estimator by determining a function $s$ so that $\hat{v}=s(\mathcal F)$ is an unbiased estimator of $v$. By inspecting the relationship between the mean number of handovers $\bar{H}$ and the velocity $v$ in \eqref{eq:MeanNo_HOs}, we can formulate an estimator for $v$ as:
\begin{align}
\hat{v} = \frac{\pi H}{4 T \sqrt{\lambda}}. \label{eq:Estimator}
\end{align}
In order to evaluate whether this estimator is unbiased, the expectation of the above estimator can be derived as
\begin{align}
E[\hat{v}] &= E\left[\frac{\pi H}{4 T \sqrt{\lambda}}\right] = \frac{\pi}{4 T \sqrt{\lambda}} E[H] = \frac{\pi}{4 T \sqrt{\lambda}} \mu. \label{eq:EstimatorExpection}
\end{align}
Plugging \eqref{eq:mu} into \eqref{eq:EstimatorExpection}, we get
\begin{align}
E[\hat{v}] &= \frac{\pi}{4 T \sqrt{\lambda}} \frac{4 v T \sqrt{\lambda}}{\pi} = v.
\end{align}
Therefore, the estimator $\hat{v}$ expressed in \eqref{eq:Estimator} is unbiased. Since this estimator is derived through RBLS theorem, it is an MVU estimator. To determine whether it is an efficient estimator, we derive the variance of the MVU estimator as follows:
\begin{align}
{\rm var}(\hat{v}) =& {\rm var}\left(\frac{\pi H}{4 T \sqrt{\lambda}}\right)= \left(\frac{\pi}{4 T \sqrt{\lambda}}\right)^2 {\rm var}(H) = \left(\frac{v\sigma}{\mu}\right)^2. \label{eq:EstimatorVariance}
\end{align}
Comparing \eqref{eq:EstimatorVariance} with \eqref{eq:CRLB_GaussApprox_1}, we can notice that the variance of MVU estimator is greater than the CRLB, and hence, the derived estimator is not an efficient estimator. Nevertheless, in Section~\ref{sec:MVUEvariance}, we show that the variance of the MVU estimator is very close to the CRLB.

\section{Mobility State Detection}
\label{sec:MobStateProb}
In this section, we will perform statistical analysis of MSD, in which a UE is categorized into one of the three different mobility states: low mobility, medium mobility and high mobility, as in 3GPP LTE Release-8 specifications \cite{3gppRel8Overview, 3GPP_MSEenhancement,Speed_Differentiated_HO_2012}. We assume that the unbiased velocity estimator derived in Section~\ref{sec:UnbiasedEstimator} is used to estimate the UE velocity, and we will derive expressions for the probabilities that a UE is categorized into each of the three mobility states.

Using the estimated UE velocity $\hat{v}$ from \eqref{eq:Estimator}, the UE can be categorized into one of the three mobility states: low $(S_{\rm L})$, medium $(S_{\rm M})$, and high $(S_{\rm H})$, based on the following conditions:
\begin{align}
\mathcal{S} = \left\{ \begin{array}{lcl}
S_{\rm L} & \mbox{if} & \hat{v} \leq v_{\rm l}, \\
S_{\rm M} & \mbox{if} & v_{\rm l} < \hat{v} \leq v_{\rm u}, \\
S_{\rm H} & \mbox{if} & \hat{v} > v_{\rm u}, \\
\end{array}\right.
\end{align}
where, $\mathcal{S} \in \{S_{\rm L}, S_{\rm M}, S_{\rm H}\}$ is the detected mobility state of the UE. The thresholds $v_{\rm l}$ and $v_{\rm u}$ are the lower and upper velocity thresholds, respectively, based on which a UE is classified into one of the three mobility states.

\vspace{-3mm}
\subsection{Mobility State Probabilities}
For a given velocity $v$, we define mobility state probability as the probability that the UE is categorized into a particular state. We can define the following three mobility state probabilities:
\begin{align*}
P(\mathcal{S}=S_{\rm L}; v) \rightarrow& \mbox{Probability that the mobility state is detected as } S_{\rm L}, \mbox{ for a velocity } v;\\
P(\mathcal{S}=S_{\rm M}; v) \rightarrow& \mbox{Probability that the mobility state is detected as } S_{\rm M}, \mbox{ for a velocity } v;\\
P(\mathcal{S}=S_{\rm H}; v) \rightarrow& \mbox{Probability that the mobility state is detected as } S_{\rm H}, \mbox{ for a velocity } v.
\end{align*}
For a given velocity $v$, as the number of handovers $H$ is a random variable, the velocity $\hat{v}$ estimated using \eqref{eq:Estimator} is also a random variable. Hence, there will be false alarms and missed detections for calculating the mobility state.

Next, we derive analytic expressions for the mobility state probabilities. Using \eqref{eq:Estimator}, we can express the PMF of $\hat{v}$ as,
\begin{align}
\hspace{-3mm}f_{\hat{v}}(\nu)&= P(\hat{v}=\nu)= P\left(\frac{\pi H}{4T\sqrt{\lambda}}=\nu\right)= P\left(H=\frac{4T\sqrt{\lambda}\nu}{\pi}\right)=f_H\left(\frac{4T\sqrt{\lambda}\nu}{\pi}\right) = f_H(h), \label{eq:f_v_hat}
\end{align}
where, $h=\frac{4T\sqrt{\lambda}\nu}{\pi}$. Using the approximation of PMF $f_H(h)$ with Gaussian distribution as in \eqref{eq:fn_hat_final}, we can approximate the PMF of $\hat{v}$ as,
\begin{align}
\hspace{-2mm}f_{\hat{v}}(\nu) =& f_H(h) \nonumber\\
\approx& f^{\rm n}_H(h) = \frac{1}{\sqrt{2\pi\sigma^2}} e^{-\frac{(h-\mu)^2}{2\sigma^2}}, \mbox{ for } h\in\{0,1,2,...\}.\hspace{-2mm} \label{eq:fn_hat}
\end{align}
Now, we can express the three mobility state probabilities as
\begin{align}
P(\mathcal{S}=S_{\rm L}; v) &= P(\hat{v} \leq v_{\rm l}) \approx \sum_{h=0}^{h_{\rm l}}f^{\rm n}_H(h), \label{eq:L_StateProb}\\
P(\mathcal{S}=S_{\rm M}; v) &= P(v_{\rm l} < \hat{v} \leq v_{\rm u}) \approx \sum_{h= h_{\rm l}+1}^{h_{\rm u}}f^{\rm n}_H(h), \label{eq:M_StateProb}\\
P(\mathcal{S}=S_{\rm H}; v) &= P(\hat{v} > v_{\rm u}) \approx \sum_{h= h_{\rm u}+1}^{\infty}f^{\rm n}_H(h), \label{eq:H_StateProb}\\
\mbox{where,}\hspace{2mm} h_{\rm l}=\Big\lfloor&\frac{4T\sqrt{\lambda}v_{\rm l}}{\pi}\Big\rfloor,\hspace{2mm} \mbox{and} \hspace{2mm} h_{\rm u}=\Big\lfloor\frac{4T\sqrt{\lambda}v_{\rm u}}{\pi}\Big\rfloor, \label{eq:HOth}
\end{align}
are the optimum lower and upper handover count thresholds for MSD, respectively. Given the velocity thresholds $v_{\rm l}$ and $v_{\rm u}$, the choice of handover count thresholds has a direct impact on the probability of correctly detecting the mobility state of a UE based on its velocity~\cite{7084911}. In \eqref{eq:HOth}, we have theoretically derived the handover-count thresholds for MSD which are optimum for the given velocity thresholds. In other related works in the literature, the handover-count thresholds for MSD have been determined through simulations. In \cite{3gpp_HOF, EnhancementsOfMSE, 7084911}, the handover-count thresholds for MSD are found heuristically by considering the cumulative distribution function (CDF) plots of handover counts, for few different UE velocities. In \cite{6934879}, the handover counts are assumed to be distributed as Gaussian PDF, and the optimum handover-count thresholds are obtained by using the handover-count PDFs for few different UE velocities. However, in these prior works, the optimum handover-count thresholds are determined for some particular values of BS density and measurement time windows. Moreover, the statistical relationship between the UE velocity and the handover count is not considered. In this paper, we have derived general expressions for the optimum handover-count thresholds as a function of SBS density $\lambda$, handover-count measurement time $T$, and velocity thresholds ($v_{\rm l}$ and $v_{\rm u}$).

\vspace{-2mm}
\subsection{Probability of Detection and Probability of False Alarm}
The \emph{probability of detection} is the probability that the mobility state of a UE is detected correctly. Mathematically, it can be expressed as
\begin{align}
P_{\rm D} = \left\{ \begin{array}{lcl}
P(\mathcal{S}=S_{\rm L}; v) & \mbox{if} & v \leq v_{\rm l}, \\
P(\mathcal{S}=S_{\rm M}; v) & \mbox{if} & v_{\rm l} < v \leq v_{\rm u}, \\
P(\mathcal{S}=S_{\rm H}; v) & \mbox{if} & v > v_{\rm u}. \\
\end{array}\right. \label{eq:ProbDetect}
\end{align}
The \emph{probability of false alarm} is the probability that the mobility state is detected incorrectly, which can be mathematically expressed as $P_{\rm FA} = 1-P_{\rm D}$.

Consider an illustrative example in which the UE velocity is $v=60$~km/h, SBS density is $\lambda=1000$~SBSs/km$^2$, and handover-count measurement time is $T=12$~s. The PMFs $f_H(h)$ and $f_{\hat{v}}(\nu)$ are shown in Figs.~\ref{fig:PMF_Of_H}(a) and \ref{fig:PMF_Of_H}(b), respectively, which were obtained through Monte Carlo Simulations. In Fig.~\ref{fig:PMF_Of_H}(b), the range of $\hat{v}$ is divided into three regions $S_{\rm L}, S_{\rm M}$ and $S_{\rm H}$ that are separated from each other through the velocity thresholds $v_{\rm l}=40$~km/h and $v_{\rm u}=80$~km/h. It can be noticed in Fig.~\ref{fig:PMF_Of_H}(b) that even though the actual UE velocity is a constant $v=60$~km/h which belongs to $S_{\rm M}$ state, the estimated velocity is spread into a range of velocities. Hence, there is a small probability that the mobility state could be erroneously detected as $S_{\rm L}$ or $S_{\rm H}$ states, which is the \emph{probability of false alarm} $P_{\rm FA}$. On the other hand, majority of the times the mobility state would be correctly detected as $S_{\rm M}$, which is the \emph{probability of detection} $P_{\rm D}$.
\begin{figure}[htp]
\vspace{-3mm}
\center
\begin{subfigure}[b]{0.35\textwidth}
\includegraphics[width=\textwidth]{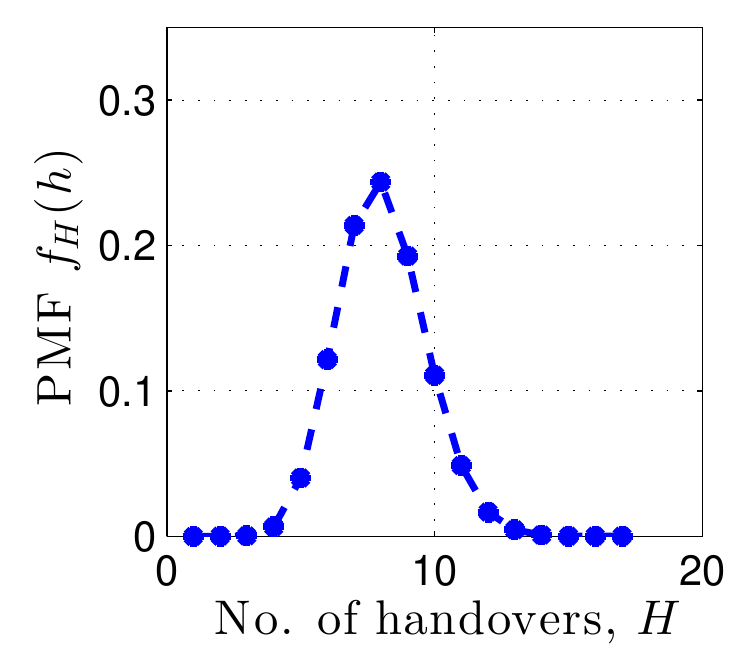}
\vspace{-7mm}\caption{}
\end{subfigure}
\begin{subfigure}[b]{0.35\textwidth}
\includegraphics[width=\textwidth]{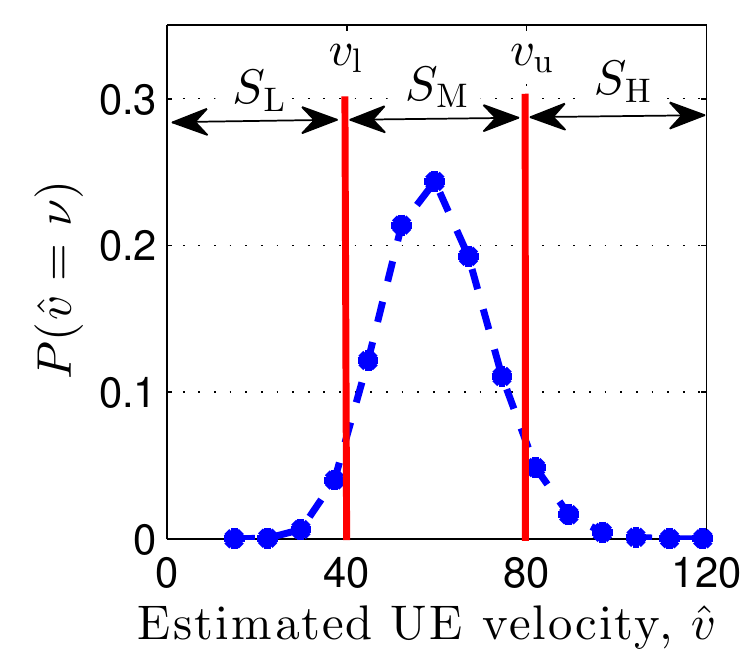}
\vspace{-7mm}\caption{}
\end{subfigure}
\vspace{-4mm}
\caption{(a) PMF of the number of handovers; (b) PMF of the estimated velocity; $v=60$~km/h, $\lambda=1000$~SBSs/km$^2$, $T=12$~s, $v_{\rm l}=40$~km/h and $v_{\rm u}=80$~km/h.}
\label{fig:PMF_Of_H}
\vspace{-1mm}
\end{figure}

With the particular parameter settings shown in Fig.~\ref{fig:PMF_Of_H}, the probabilities of different mobility states can be evaluated using \eqref{eq:L_StateProb}-\eqref{eq:H_StateProb} as $P(\mathcal{S}=S_{\rm L}; v=60) = 0.047,\ \ P(\mathcal{S}=S_{\rm M}; v=60) = 0.8821,\ \ P(\mathcal{S}=S_{\rm H}; v=60) = 0.0709,$ respectively. Then, the probabilities of detection and false alarm can be evaluated using \eqref{eq:ProbDetect} as $P_{\rm D} = P(\mathcal{S}=S_{\rm M}; v=60) = 0.8821, P_{\rm FA} = P(\mathcal{S}=S_{\rm L}; v=60) + P(\mathcal{S}=S_{\rm H}; v=60) = 0.1179,$ respectively.

\section{Numerical Results}
\label{sec:NumResults}
In this section, firstly, we will validate the accuracy of gamma PMF approximation $f^{\rm g}_H(h)$ and Gaussian PMF approximation $f^{\rm n}_H(h)$ by plotting their MSE performances. Secondly, we will plot the CRLBs and analyze the achievable accuracy of a UE's velocity estimate for different SBS density $\lambda$, UE velocity $v$, and handover count measurement time $T$. Finally, we will plot the variance of the MVU velocity estimator derived in Section~\ref{sec:UnbiasedEstimator} and show that it is approximately equal to the CRLB.

\subsection{Accuracy of PMF Approximation}
\label{sec:PmfApproxAccuracy}
Approximations to the handover count PMF were derived using gamma distribution and Gaussian distribution in Sections~\ref{sec:ApproxGamma} and \ref{sec:ApproxGaussian}, respectively. In this sub-section, we will quantify the accuracy of each approximation method by evaluating the MSE between the approximation and the PMF $f_H(h)$. The MSE can be expressed as
\begin{align}
MSE = \frac{1}{N_{\rm h}}\sum_{h=1}^{N_{\rm h}} \Big[f^u_H(h) - f_H(h)\Big]^2, \mbox{ for } u \in \{\rm g,n\}, \label{eq:MSE}
\end{align}
where $N_{\rm h}$ is the number of points in the PMF.
\begin{figure}[t]
\vspace{-3mm}
\centering
\begin{subfigure}[b]{0.49\textwidth}
\includegraphics[width=\textwidth]{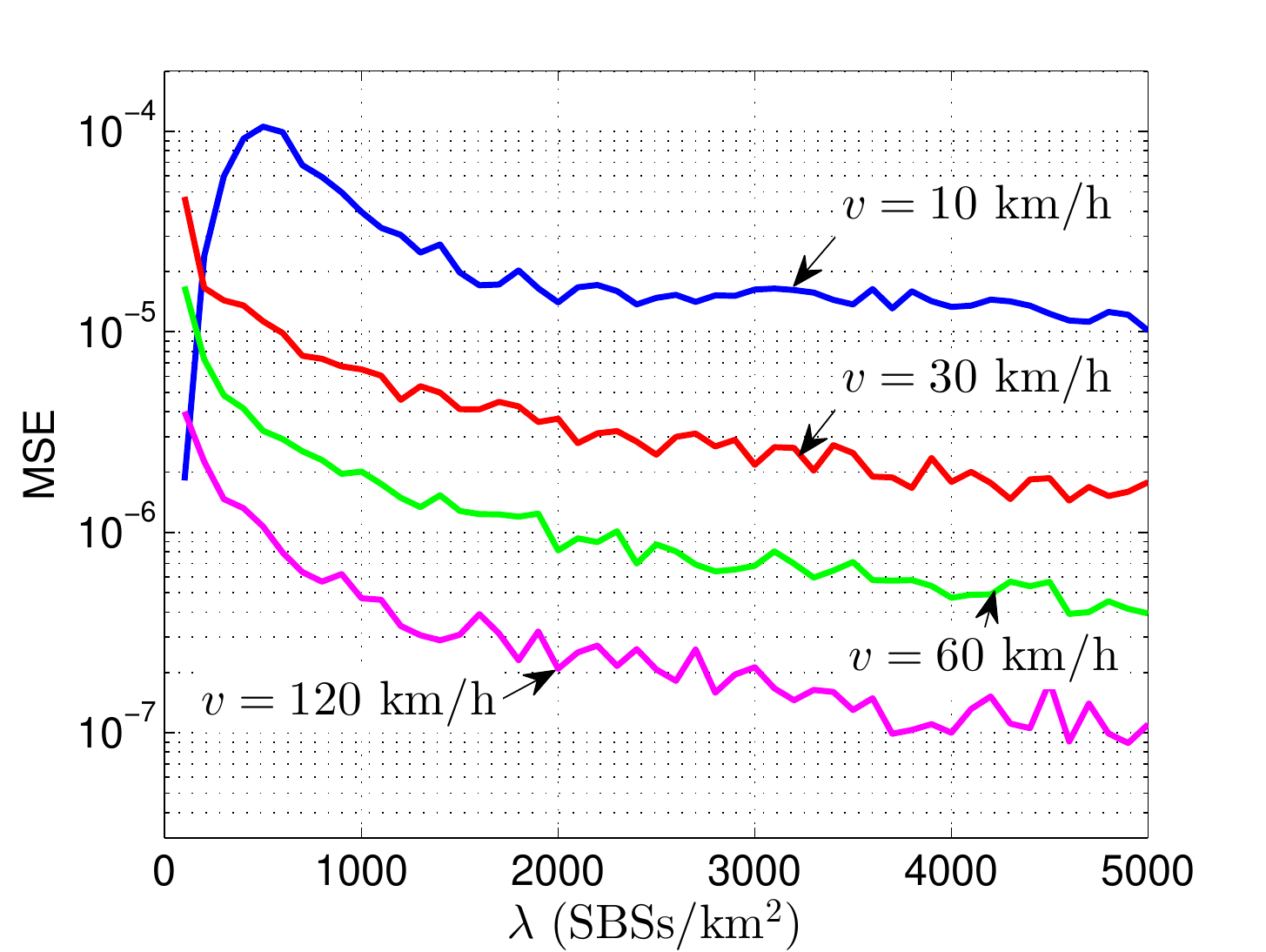}
\vspace{-6mm}\caption{}
\end{subfigure}
\begin{subfigure}[b]{0.49\textwidth}
\includegraphics[width=\textwidth]{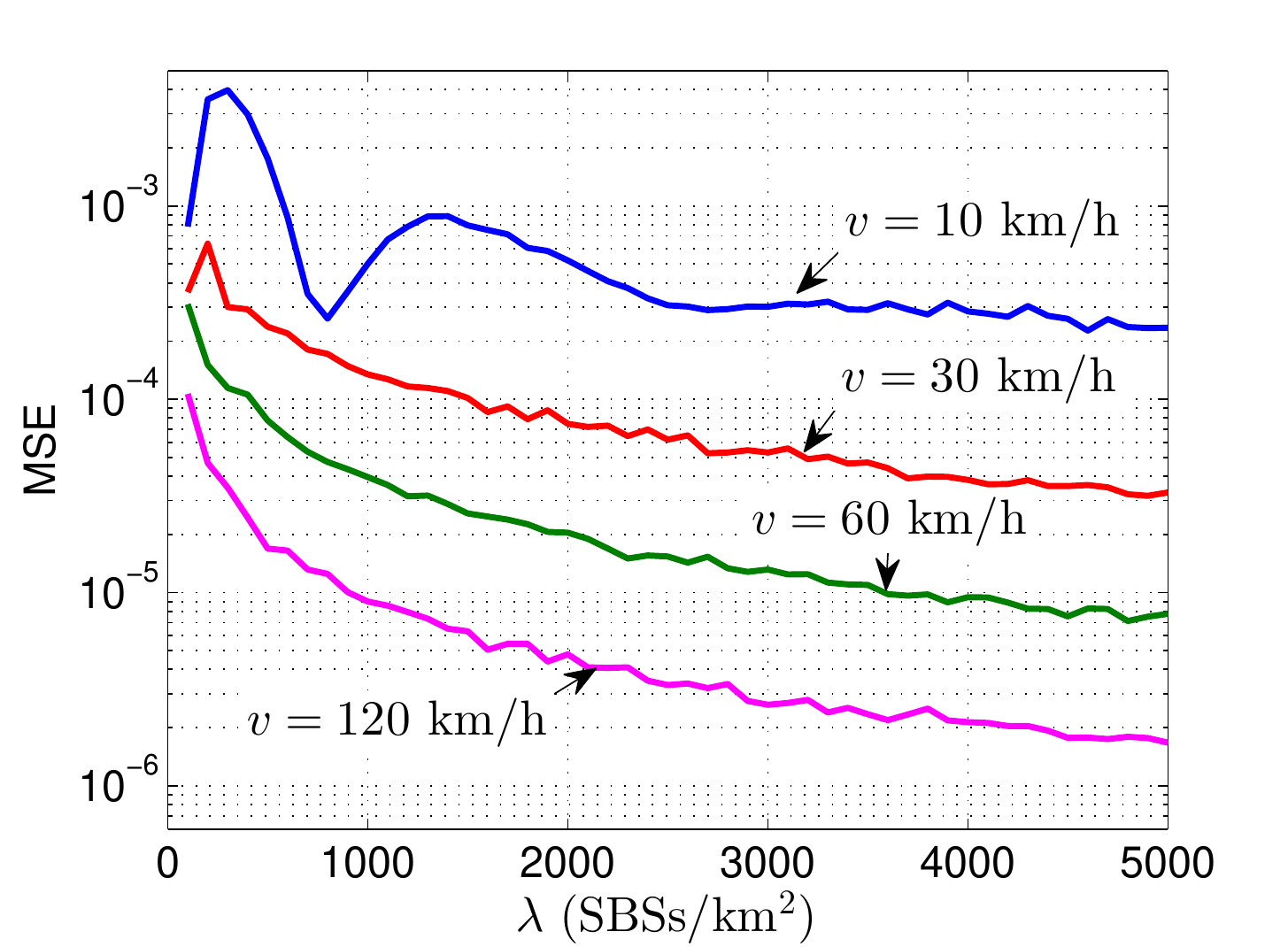}
\vspace{-6mm}\caption{}
\end{subfigure}
\vspace{-2mm}
\caption{MSE versus $\lambda$ for different $v$, and $T=12$~s; (a) with PMF approximation using gamma distribution; (b) with PMF approximation using Gaussian distribution.}
\label{fig:MSE_diff_v}
\vspace{-6mm}
\end{figure}
The characteristics of MSE with respect to the variations in $\lambda$ and $v$ are shown in Fig.~\ref{fig:MSE_diff_v} for the two approximation methods. In general, the MSEs of both approximation methods decrease with the increase in SBS density $\lambda$ or UE velocity $v$. In other words, higher SBS density and higher UE velocity leads to better accuracy of the PMF approximation. By comparing Fig.~\ref{fig:MSE_diff_v}(a) and Fig.~\ref{fig:MSE_diff_v}(b), it can be noticed that the approximation using gamma distribution provides approximately ten times smaller MSE than the approximation using Gaussian distribution. However, the approximation using gamma distribution is not in closed form and hence it is more complicated than the approximation using Gaussian distribution. Therefore, there exists a trade-off between the accuracy and the complexity while making a choice between the two approximation approaches.

\begin{figure}[t]
\vspace{-3mm}
\centering
\begin{subfigure}[b]{0.49\textwidth}
\includegraphics[width=\textwidth]{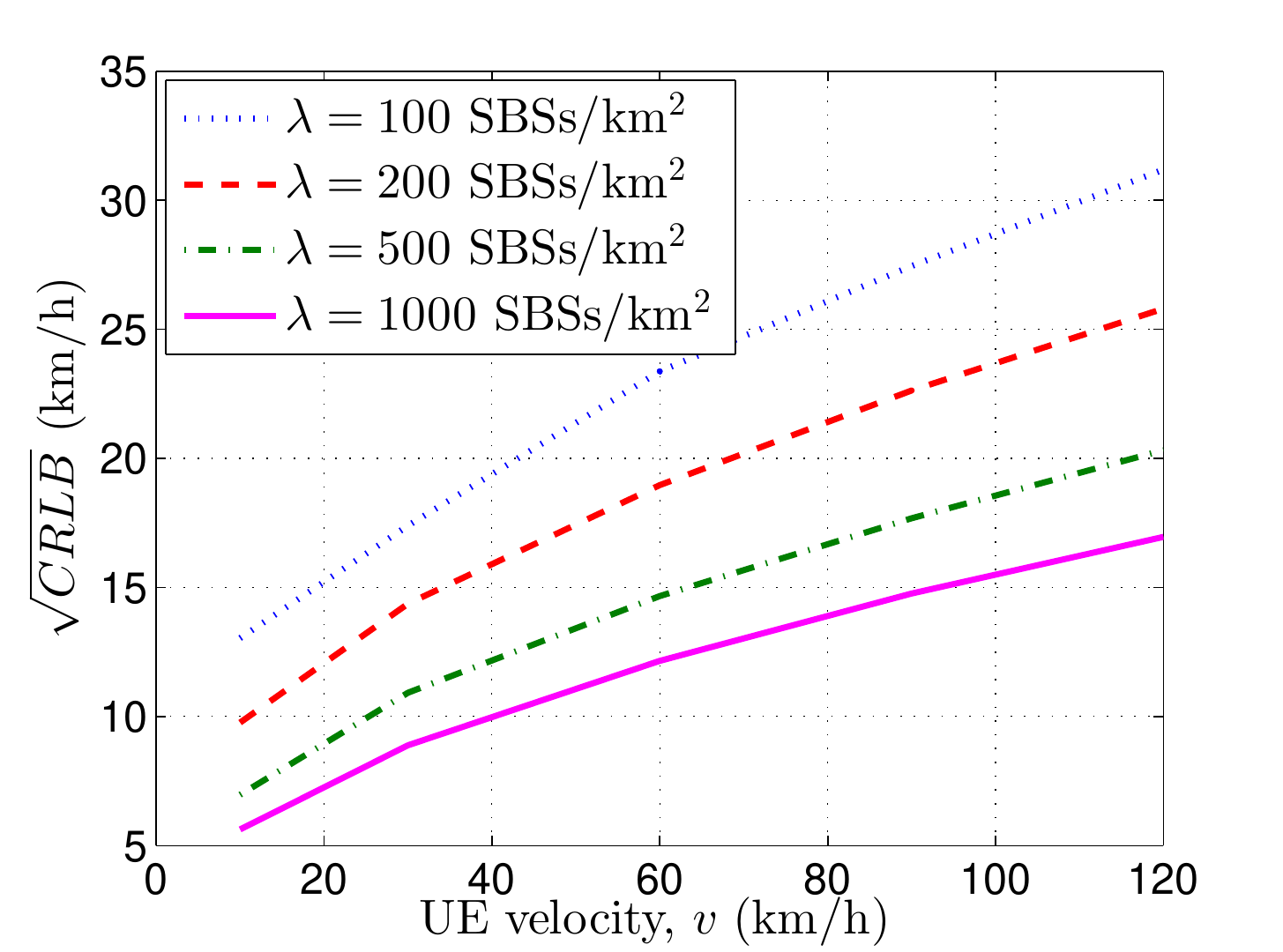}
\vspace{-6mm}\caption{}
\end{subfigure}
\begin{subfigure}[b]{0.49\textwidth}
\includegraphics[width=\textwidth]{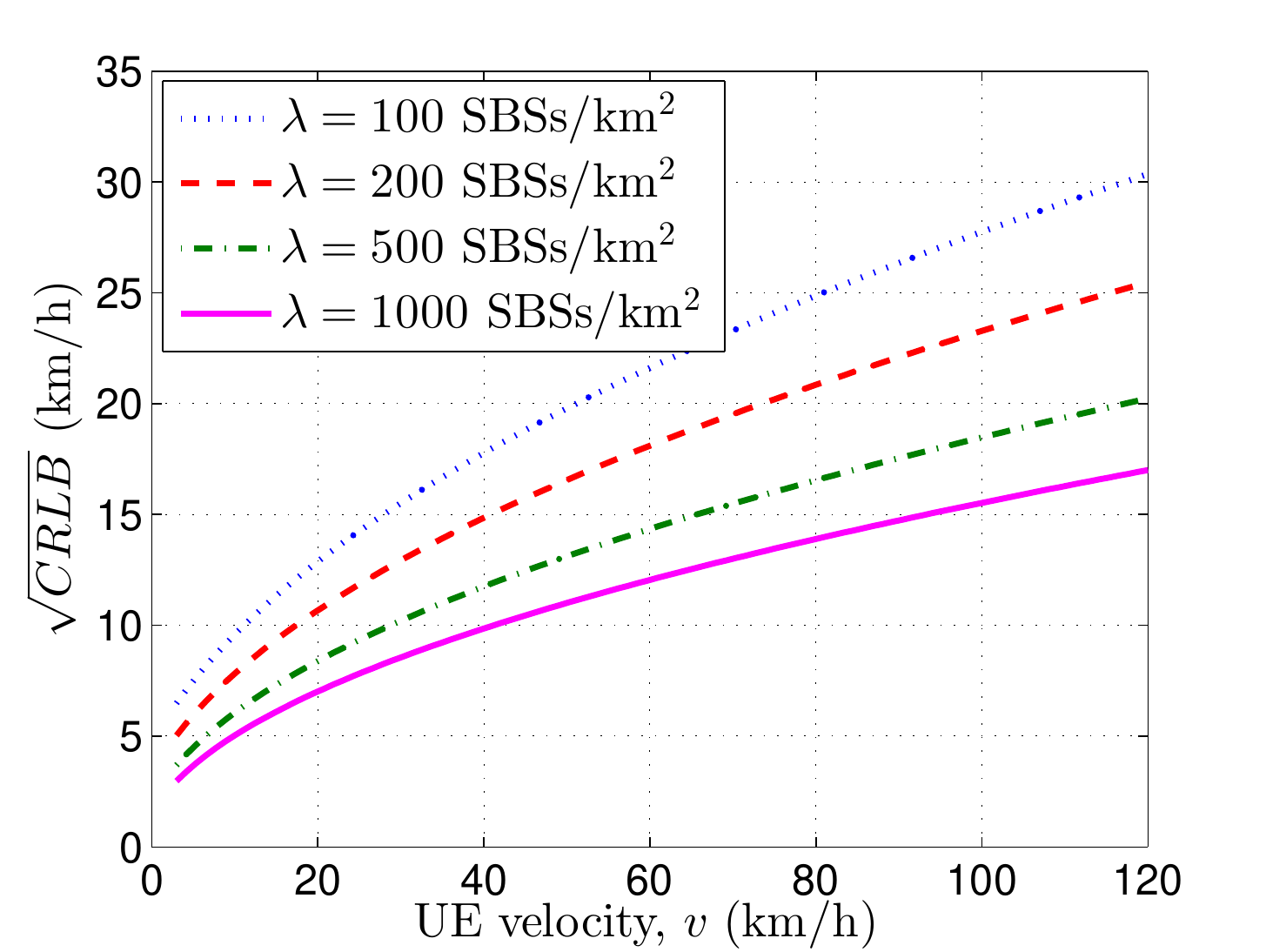}
\vspace{-6mm}\caption{}
\end{subfigure}
\vspace{-2mm}
\caption{CRLB versus $v$ for different $\lambda$, and $T=12$~s; (a) CRLB derived using $f^{\rm g}_H(h)$; (b) CRLB derived using $f^{\rm n}_H(h)$.}
\label{fig:CRLB_vs_Velocity_Diff_lambda}
\end{figure}
\subsection{CRLB Results}
The CRLB plots for UE velocity estimate $\hat{v}$ are shown in Fig.~\ref{fig:CRLB_vs_Velocity_Diff_lambda} for the variations in UE velocity $v$ and SBS density $\lambda$. In the case of gamma approximation of PMF $f_H(h)$, the CRLB plots in Fig.~\ref{fig:CRLB_vs_Velocity_Diff_lambda}(a) are obtained by numerically evaluating the expression in \eqref{eq:AymptoticCRLB_1}. On the other hand, the CRLB plots in Fig.~\ref{fig:CRLB_vs_Velocity_Diff_lambda}(b) are obtained using the closed form expression in \eqref{eq:CRLB_GaussApprox_1}, for the case of Gaussian approximation of PMF $f_H(h)$. It can be observed that the CRLB plots in Fig.~\ref{fig:CRLB_vs_Velocity_Diff_lambda}(a) and Fig.~\ref{fig:CRLB_vs_Velocity_Diff_lambda}(b) are similar and follow same trends with respect to $\lambda$ and $v$. However, we can expect that the plots in Fig.~\ref{fig:CRLB_vs_Velocity_Diff_lambda}(a) are more accurate because of the smaller MSE of gamma approximation method. On the other hand, the closed form CRLB expression in \eqref{eq:CRLB_GaussApprox_1} for the case of Gaussian approximation can provide more insights, for example, on the dependence of CRLB into different parameters.

From Fig.~\ref{fig:CRLB_vs_Velocity_Diff_lambda}, it can be noticed that the CRLB increases with the increasing UE velocity $v$. This is because the variance of the number of handovers $H$ increases with increasing $v$, which can be observed in Fig.~\ref{fig:NoHo_PMF}. In contrast, the CRLB decreases with increasing SBS density $\lambda$, which can also be intuitively understood from Fig.~\ref{fig:NoHo_PMF}. With $\lambda=100$~SBSs/km$^2$, the peaks of the PMFs for different UE velocities are closely spaced with each other making it difficult to distinguish between the different UE velocities, resulting into higher CRLB. With $\lambda=1000$~SBSs/km$^2$, the peaks of the PMFs have more spacing between them making it easier to distinguish between the different UE velocities, resulting into lower CRLB. Results in Fig.~\ref{fig:CRLB_vs_Velocity_Diff_lambda} show that for all considered SBS densities, using a handover count based approach, a UE's velocity can be estimated with a root mean squared error (RMSE) of 20 km/h for UE velocities less than 40 km/h. For UE velocities on the order of 120 km/h, velocity can still be estimated with a RMSE less than 31~km/h for densities of 100~SBSs/km$^2$, which can be further improved for higher SBS densities.
\begin{figure}[t]
\vspace{-3mm}
\centering
\begin{subfigure}[b]{0.49\textwidth}
\includegraphics[width=\textwidth]{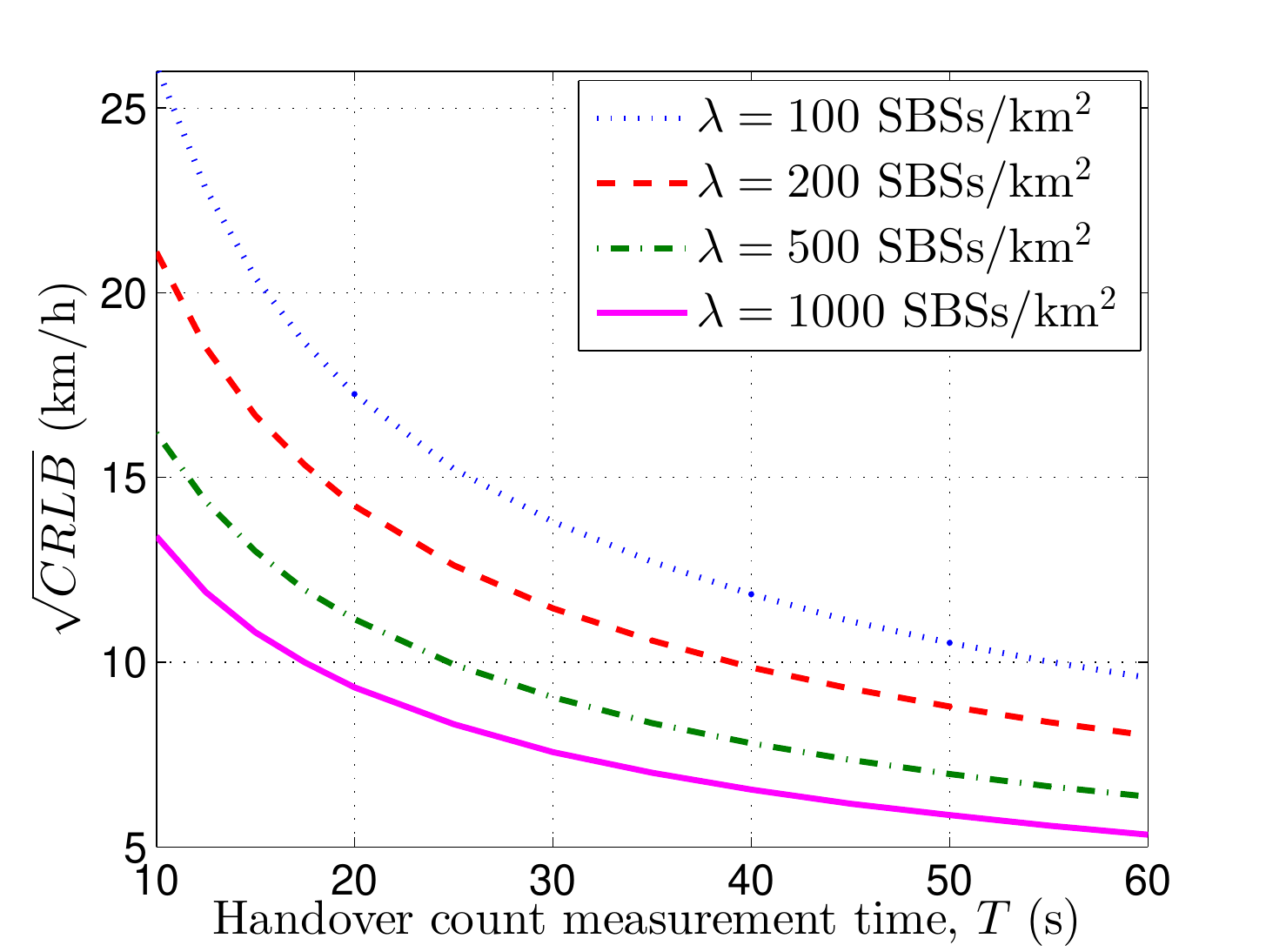}
\vspace{-6mm}\caption{}
\end{subfigure}
\begin{subfigure}[b]{0.49\textwidth}
\includegraphics[width=\textwidth]{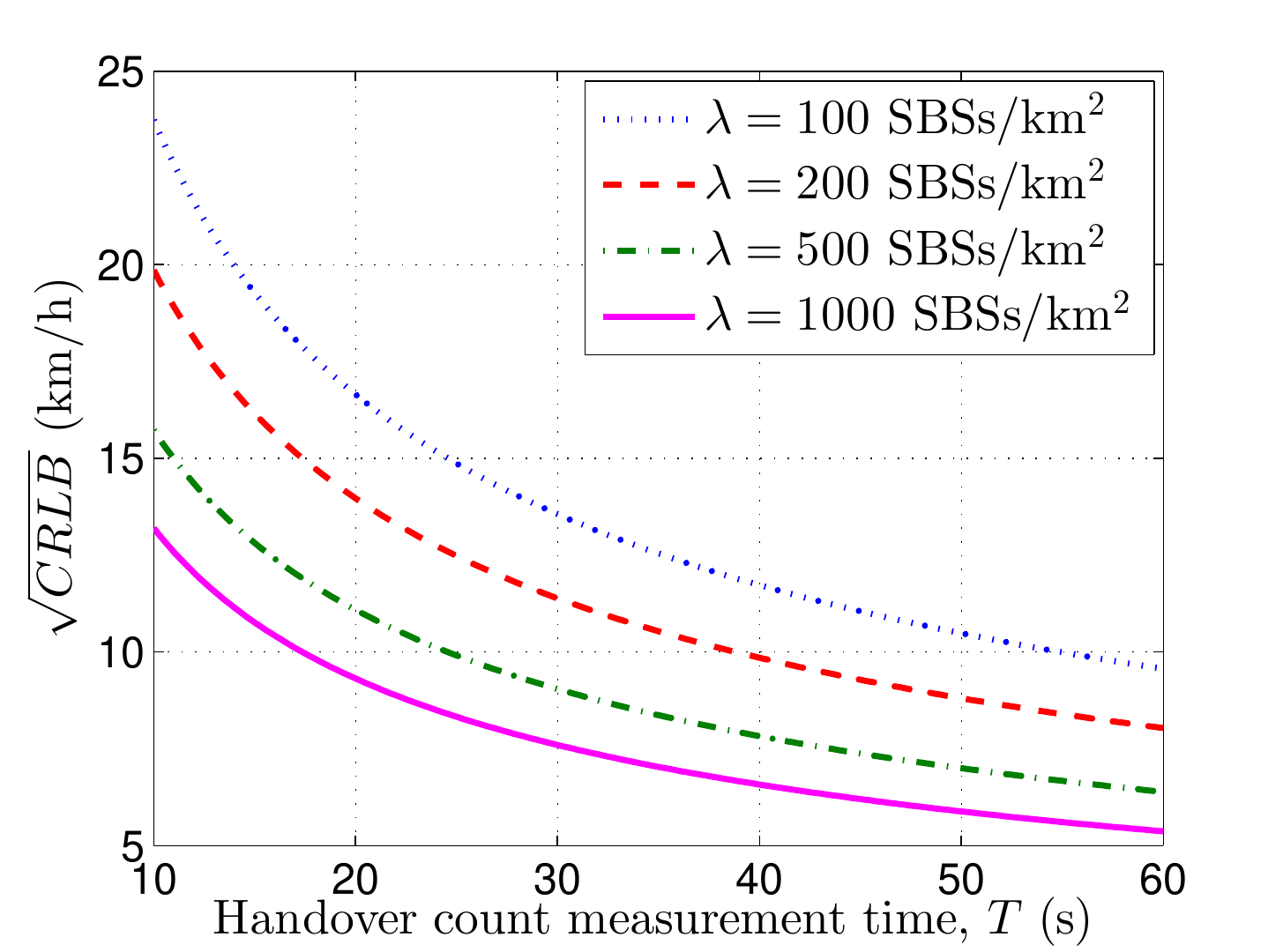}
\vspace{-6mm}\caption{}
\end{subfigure}
\vspace{-2mm}
\caption{CRLB versus $T$ for different $\lambda$ and $v=60$~km/h; (a) CRLB derived using PMF approximation $f^{\rm g}_H(h)$; (b) CRLB derived using PMF approximation $f^{\rm n}_H(h)$.}
\label{fig:CRLB_vs_Time_Diff_lambda}
\vspace{-3mm}
\end{figure}

In related works of the literature \cite{6934879,EnhancementsOfMSE,7084911,Speed_Differentiated_HO_2012}, a fixed measurement time interval of 30~s, 60~s, or 120~s is used for the MSD of UEs. However, we have used a smaller measurement time interval of $T=12$~s in Fig.~\ref{fig:CRLB_vs_Velocity_Diff_lambda} so that the velocity estimator can provide quicker results. At the same time, $T=12$~s also provides a reasonable estimation accuracy. For example, with SBS density $\lambda=500$~SBSs/km$^2$ and a UE velocity $v=120$~km/h, the CRLB of velocity estimation is just $20$~km/h as shown in Fig.~\ref{fig:CRLB_vs_Velocity_Diff_lambda}. After all, it is a choice of the service provider to use a measurement time interval based on the requirements. Therefore, the effects of variations in $T$ are also investigated in the remainder of this paper.

The effect of handover-count measurement time $T$ on the CRLB can be seen in Fig~\ref{fig:CRLB_vs_Time_Diff_lambda}. For a given UE velocity and SBS density, CRLB decreases as $T$ increases. Therefore, longer handover-count measurement time results into better accuracy of velocity estimation, with the assumption that the UE will continue traveling on a linear trajectory.

\subsection{Variance of the MVU Velocity Estimator}
\label{sec:MVUEvariance}
The variance of the MVU velocity estimator given in \eqref{eq:EstimatorVariance} is plotted in Fig.~\ref{fig:CRLB_and_EstimatorVariance}. Even though the derived MVU estimator is not an efficient estimator, it can be observed from the plots that the variance of the MVU estimator tightly matches with the CRLB. Therefore, the MVU estimator performs very close to an efficient estimator.
\begin{figure}[htp]
\centering
\begin{subfigure}[b]{0.49\textwidth}
\includegraphics[width=\textwidth]{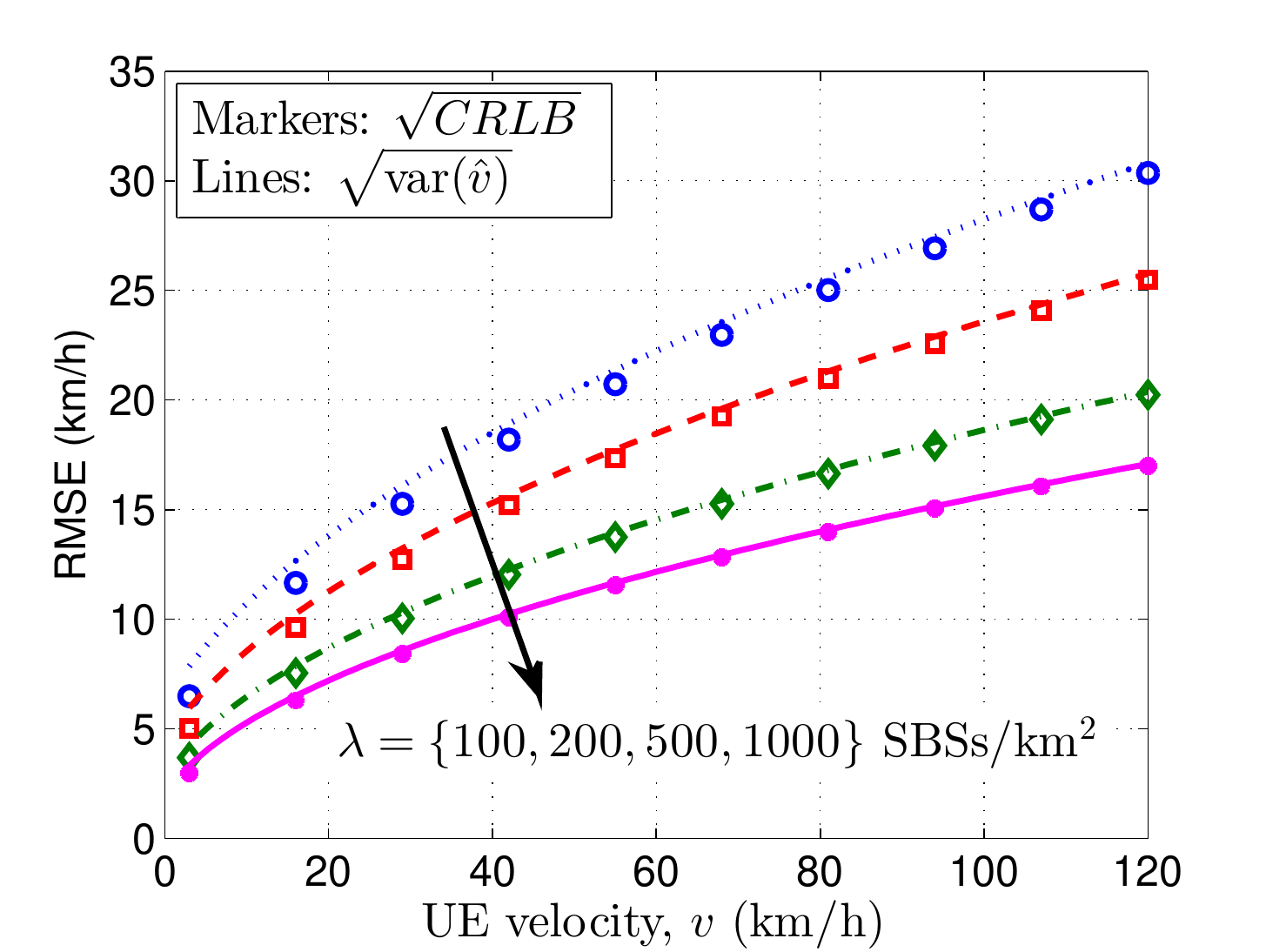}
\vspace{-6mm}\caption{}
\end{subfigure}
\begin{subfigure}[b]{0.49\textwidth}
\includegraphics[width=\textwidth]{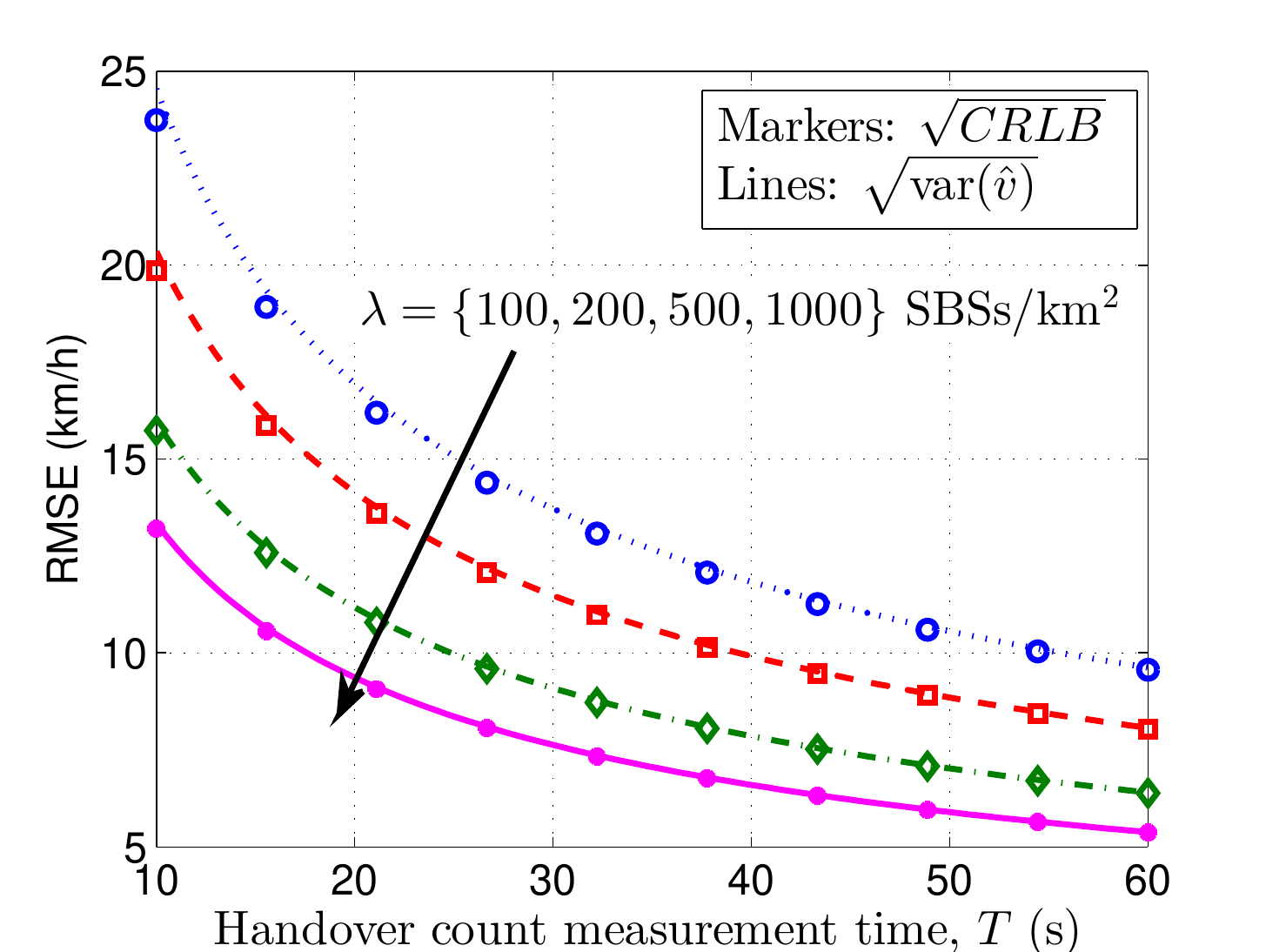}
\vspace{-6mm}\caption{}
\end{subfigure}
\vspace{-2mm}
\caption{Comparison between the CRLB and the variance of the MVU estimator; (a) for variations in $v$ and $\lambda$, with $T=12$~s; (b) for variations in $T$ and $\lambda$, with $v=60$~km/h.}
\label{fig:CRLB_and_EstimatorVariance}
\vspace{-5mm}
\end{figure}

\subsection{Mobility State Probabilities, and Probabilities of Detection and False Alarm}
\begin{figure}[h]
\centering
\begin{subfigure}[b]{0.49\textwidth}
\includegraphics[width=\textwidth]{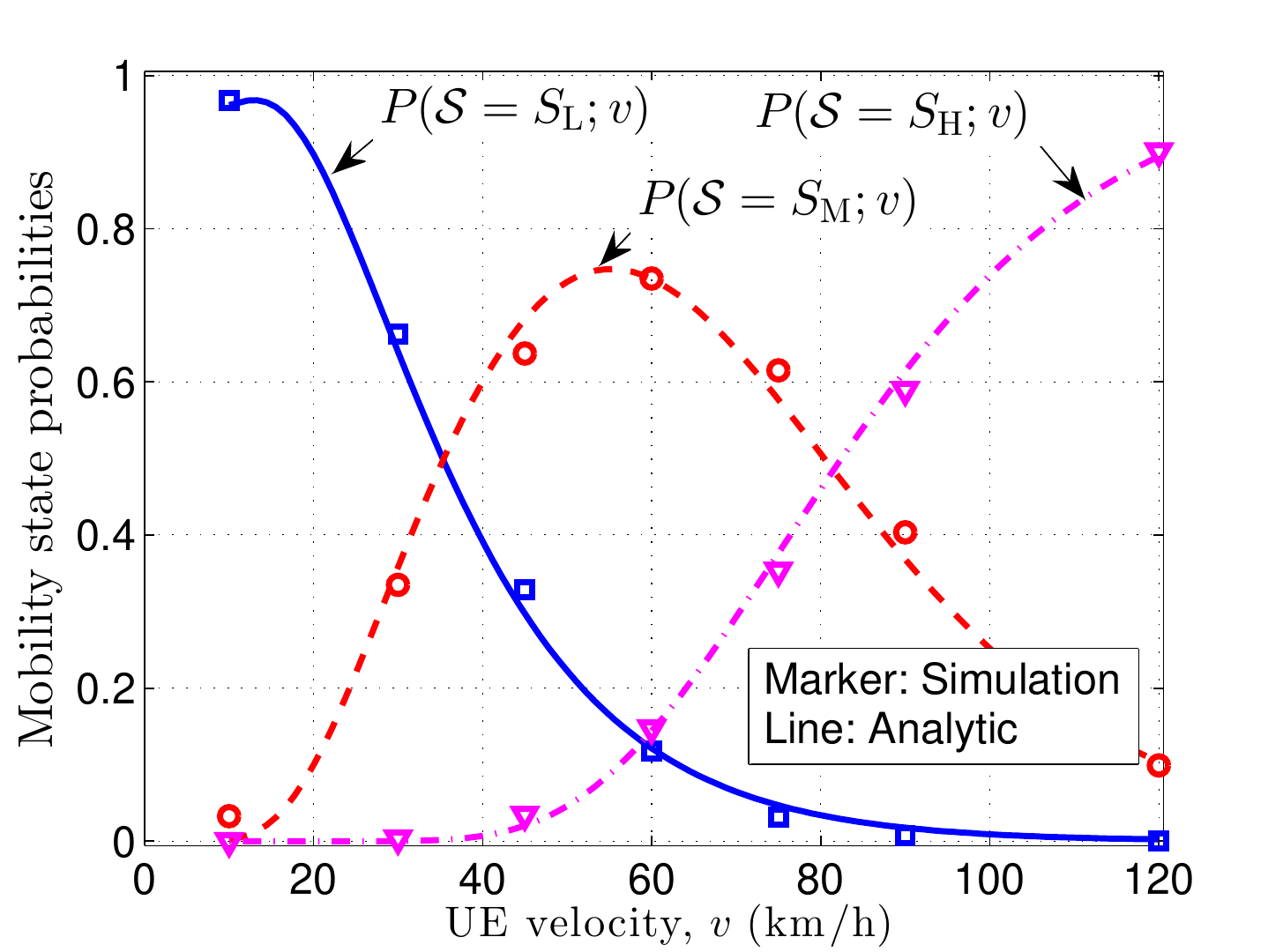}
\vspace{-7mm}\caption{}
\end{subfigure}
\begin{subfigure}[b]{0.49\textwidth}
\includegraphics[width=\textwidth]{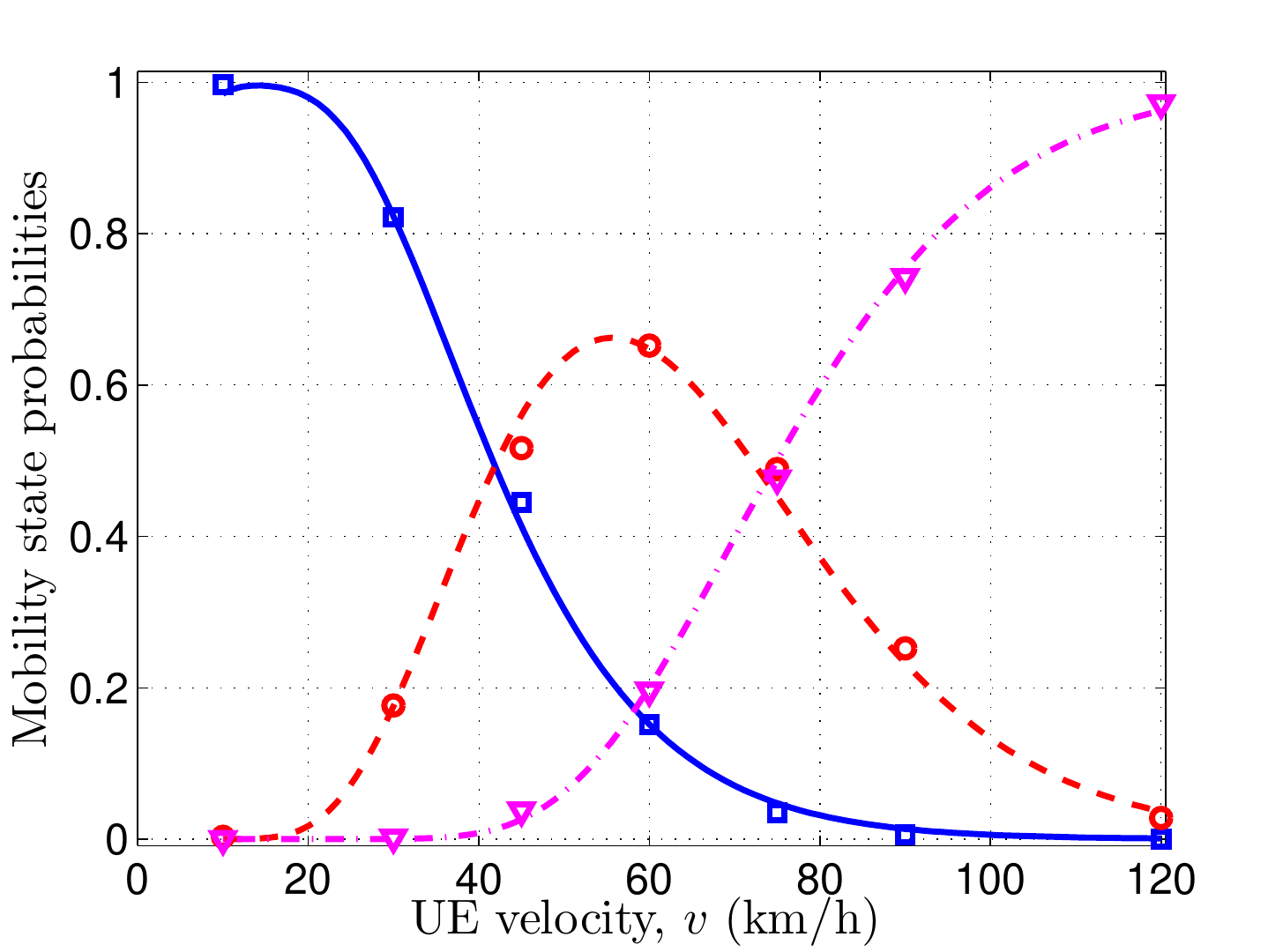}
\vspace{-7mm}\caption{}
\end{subfigure}
\begin{subfigure}[b]{0.49\textwidth}
\includegraphics[width=\textwidth]{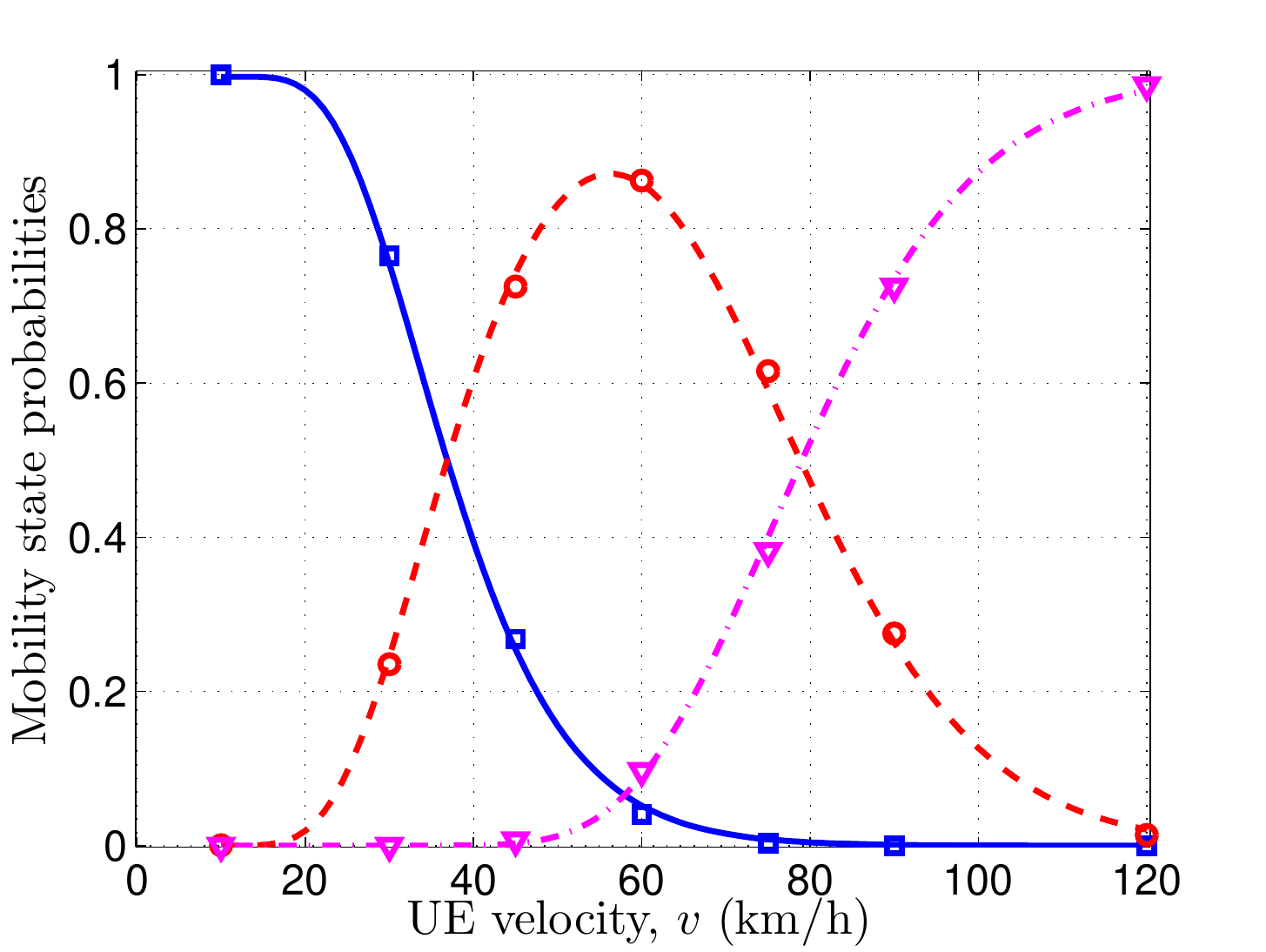}
\vspace{-7mm}\caption{}
\end{subfigure}
\begin{subfigure}[b]{0.49\textwidth}
\includegraphics[width=\textwidth]{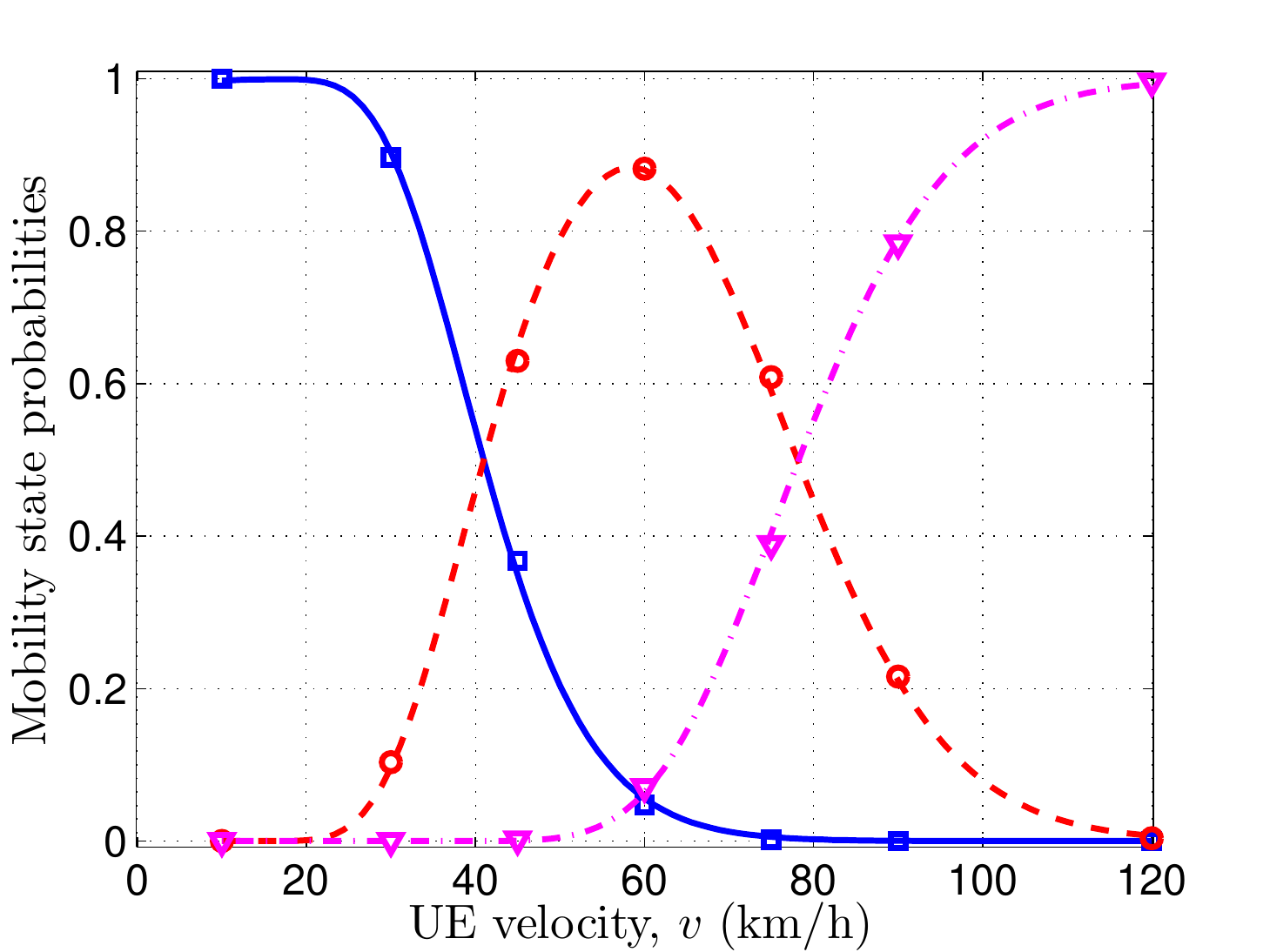}
\vspace{-7mm}\caption{}
\end{subfigure}
\vspace{-2mm}
\caption{Mobility state probabilities versus UE velocity; $T = 12$~s, $v_{\rm l}=40$~km/h and $v_{\rm u}=80$~km/h; (a) $\lambda = 100$~SBSs/km$^2$; (b) $\lambda = 200$~SBSs/km$^2$; (c) $\lambda = 500$~SBSs/km$^2$; (d) $\lambda = 1000$~SBSs/km$^2$.}
\label{fig:MobStateProb}
\vspace{-3mm}
\end{figure}
In order to classify a UE into one of the three mobility states, the velocity thresholds $v_{\rm l}$ and $v_{\rm u}$ can be set depending on the requirements of the service provider. In this paper, we have set $v_{\rm l}=40$~km/h and $v_{\rm u}=80$~km/h. The plots of mobility state probabilities versus UE velocity $v$, for four different $\lambda$ values are shown in Fig.~\ref{fig:MobStateProb}. Here, the analytic plots that were obtained using \eqref{eq:L_StateProb}-\eqref{eq:H_StateProb} are observed to match accurately with the simulation results. When we examine these plots carefully, it can be noticed that as the SBS density $\lambda$ increases, the slope of the curves during their transitions also increases, which is closer to the ideal case where the slope is equal to infinity. In other words, accuracy of MSD improves significantly with larger SBS densities.

The probabilities of detection and false alarm versus the UE velocity $v$ are shown in Fig.~\ref{fig:ProbDetectFalseAlarm} for different $\lambda$ values. As the UE's velocity nears to any one of the velocity thresholds, the probability of detection decreases and the probability of false alarm increases. The probability of detection is close to $0.5$ when the UE velocity is equal to $v_{\rm l}$ or $v_{\rm h}$.
\begin{figure}[t]
\vspace{-1mm}
\centering
\begin{subfigure}[b]{0.49\textwidth}
\includegraphics[width=\textwidth]{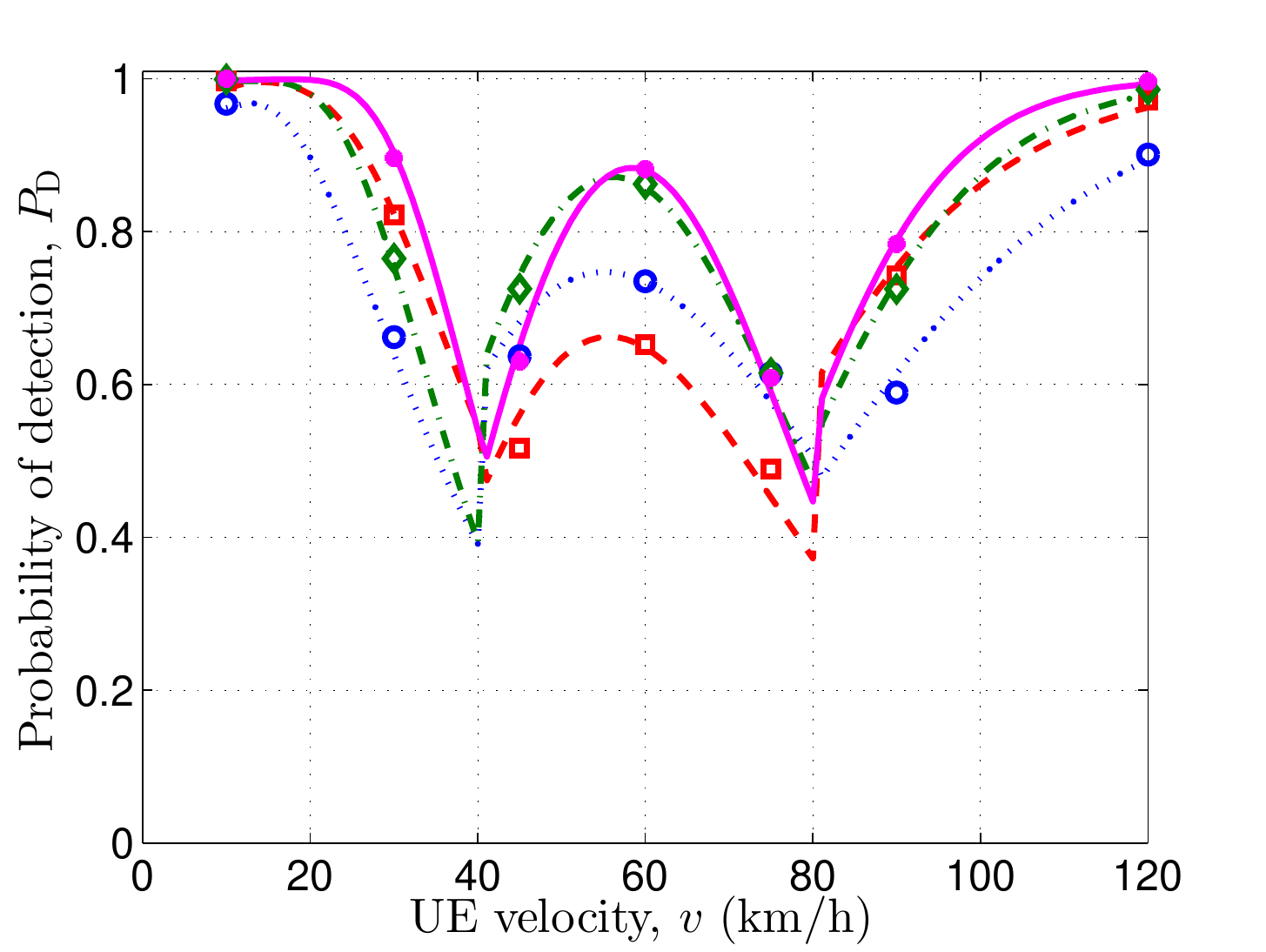}
\vspace{-7mm}\caption{}
\end{subfigure}
\begin{subfigure}[b]{0.49\textwidth}
\includegraphics[width=\textwidth]{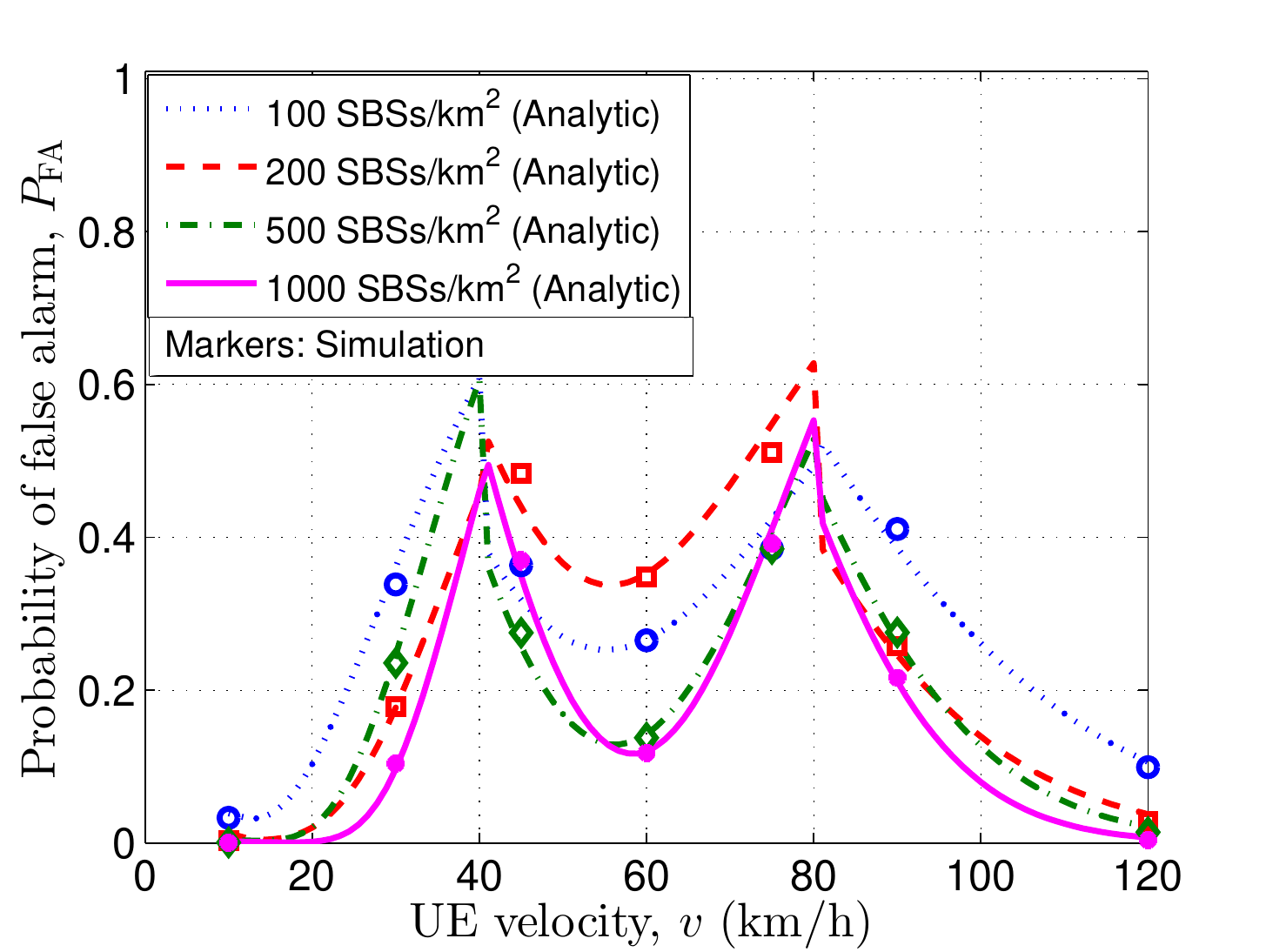}
\vspace{-7mm}\caption{}
\end{subfigure}
\vspace{-2mm}
\caption{(a) Probability of detection; (b) probability of false alarm.}
\label{fig:ProbDetectFalseAlarm}
\vspace{-3mm}
\end{figure}

The average probability of detection is plotted in Fig.~\ref{fig:Optimum_MSE_Thresholds} for different combinations of the handover count thresholds $h_{\rm l}$ and $h_{\rm u}$ such that $h_{\rm u} > h_{\rm l}$. Here, the probability of detection is averaged over UE velocity in the range of $10$~km/h to $120$~km/h, with $\lambda=500$~SBSs/km$^2$. It can be observed in Fig.~\ref{fig:Optimum_MSE_Thresholds} that the average probability of detection maximizes to $0.797$ with $h_{\rm l}=3$ and $h_{\rm u}=7$. On the other hand, computing \eqref{eq:HOth} with $\lambda=500$~SBSs/km$^2$, $T=12$~s, $v_{\rm l}=40$~km/h and $v_{\rm u}=80$~km/h also results into $h_{\rm l}=3$ and $h_{\rm u}=7$. Therefore, equations in \eqref{eq:HOth} obtained as a result of our analysis provide analytic expressions for optimum $h_{\rm l}$ and $h_{\rm u}$ that maximizes the probability of detection.
\begin{figure}[htp]
\vspace{-2mm}
\centering
\includegraphics[width=0.49\textwidth]{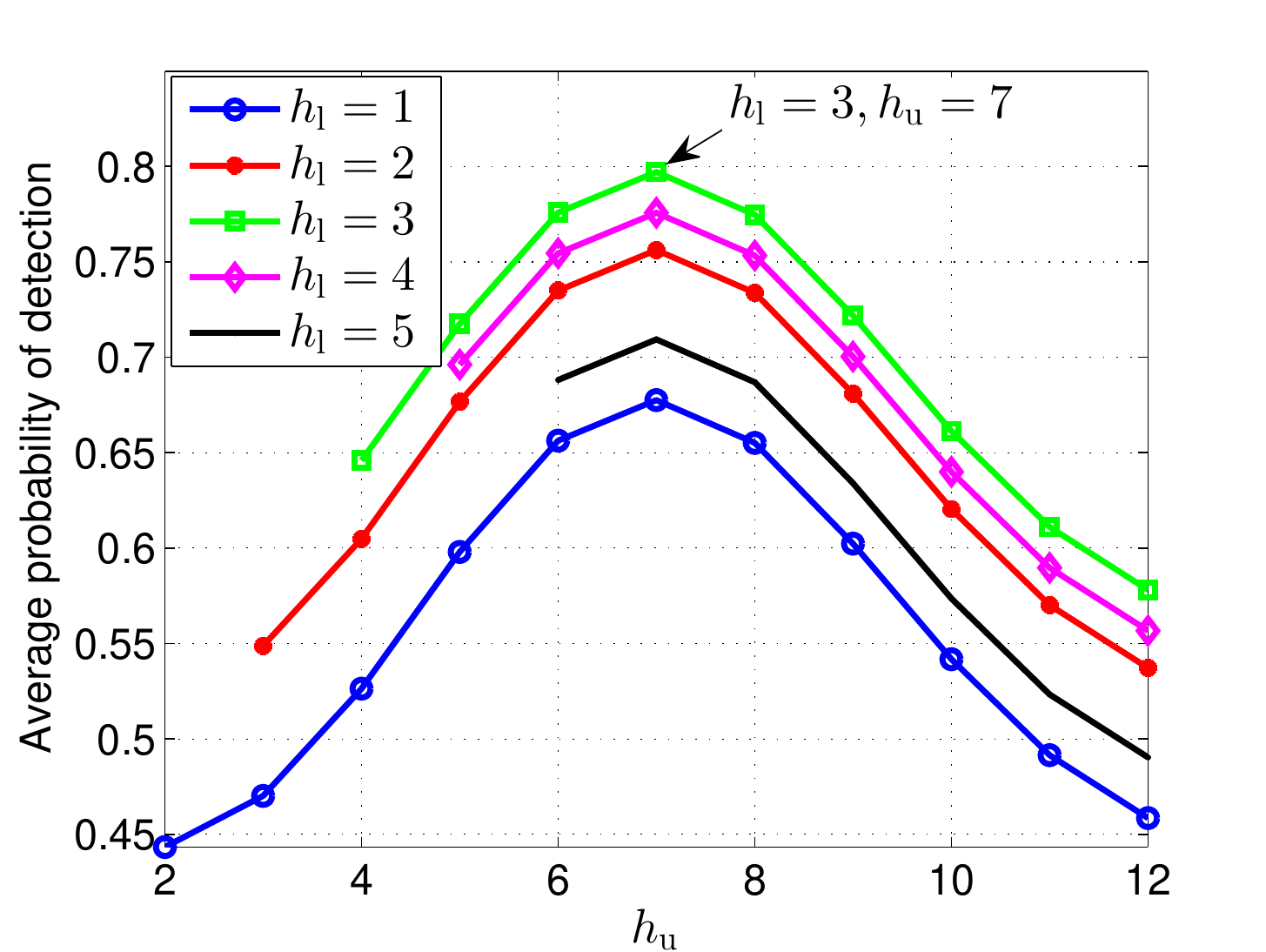}
\vspace{-2mm}
\caption{Average probability of detection for different combinations of $h_{\rm l}$ and $h_{\rm u}$, with $\lambda=500$~SBSs/km$^2$.}
\label{fig:Optimum_MSE_Thresholds}
\vspace{-4mm}
\end{figure}

\subsection{Estimator Performance with RWP Mobility Model}
In this sub-section, we assess the accuracy of velocity estimator with RWP mobility model for the UE~\cite{MobilityJeff}. In the RWP model, the UE movement trace is assumed to have a sequence of quadruples which can be defined as $\{(\mathbi{X}_{n-1},\mathbi{X}_{n},V_n,S_n)\}_{n\in\mathbb{N}}$, where $n$ denotes the $n$-th movement period, $\mathbi{X}_{n-1}$ and $\mathbi{X}_n$ denote the starting and target waypoints, respectively, during the $n$-th movement period, $V_n$ denotes the velocity, and $S_n$ denotes the pause time at the waypoint $\mathbi{X}_n$. The angle between two consecutive waypoints is uniformly randomly distributed on $[-\pi,\pi]$, while the transition length $L_n = \Vert\mathbi{X}_n - \mathbi{X}_{n-1}\Vert$ between two consecutive waypoints is i.i.d. and Rayleigh distributed with the CDF, $P(L \leq l) = 1-\exp(-\xi\pi l^2), l \geq 0,$ where $\xi$ is defined as the mobility parameter. Larger $\xi$ statistically implies that the transition lengths $L$ are shorter and may be appropriate for mobile users walking. In contrast, smaller $\xi$ statistically implies longer transitions lengths which may be appropriate for driving users.

We performed simulations for a special case of the RWP mobility model in which the UE velocity $V_n \equiv v$ is a positive constant, and the pause times $S_n=0$. The characteristics of the RMSE of velocity estimator are shown in Fig.~\ref{fig:RMSE_RWPmobility}. It can be observed that the RMSE increases with increasing $v$, and decreases with increasing $T$, similar to the characteristics of the linear mobility model. The RMSE increases with the increasing mobility parameter $\xi$. For large $\xi$, the UE switches its directions more number of times within the handover count measurement time interval, leading to larger estimation errors. On the other hand, for smaller $\xi$, the UE's direction switch rate is smaller which results into smaller RMSE. As $\xi \rightarrow 0$, the RMSE of RWP mobility model converges to the RMSE of the linear mobility model. This is because the direction switch rate tends to zero, and the UE follows a straight line trajectory indefinitely.
\begin{figure}[t]
\centering
\begin{subfigure}[b]{0.49\textwidth}
\includegraphics[width=\textwidth]{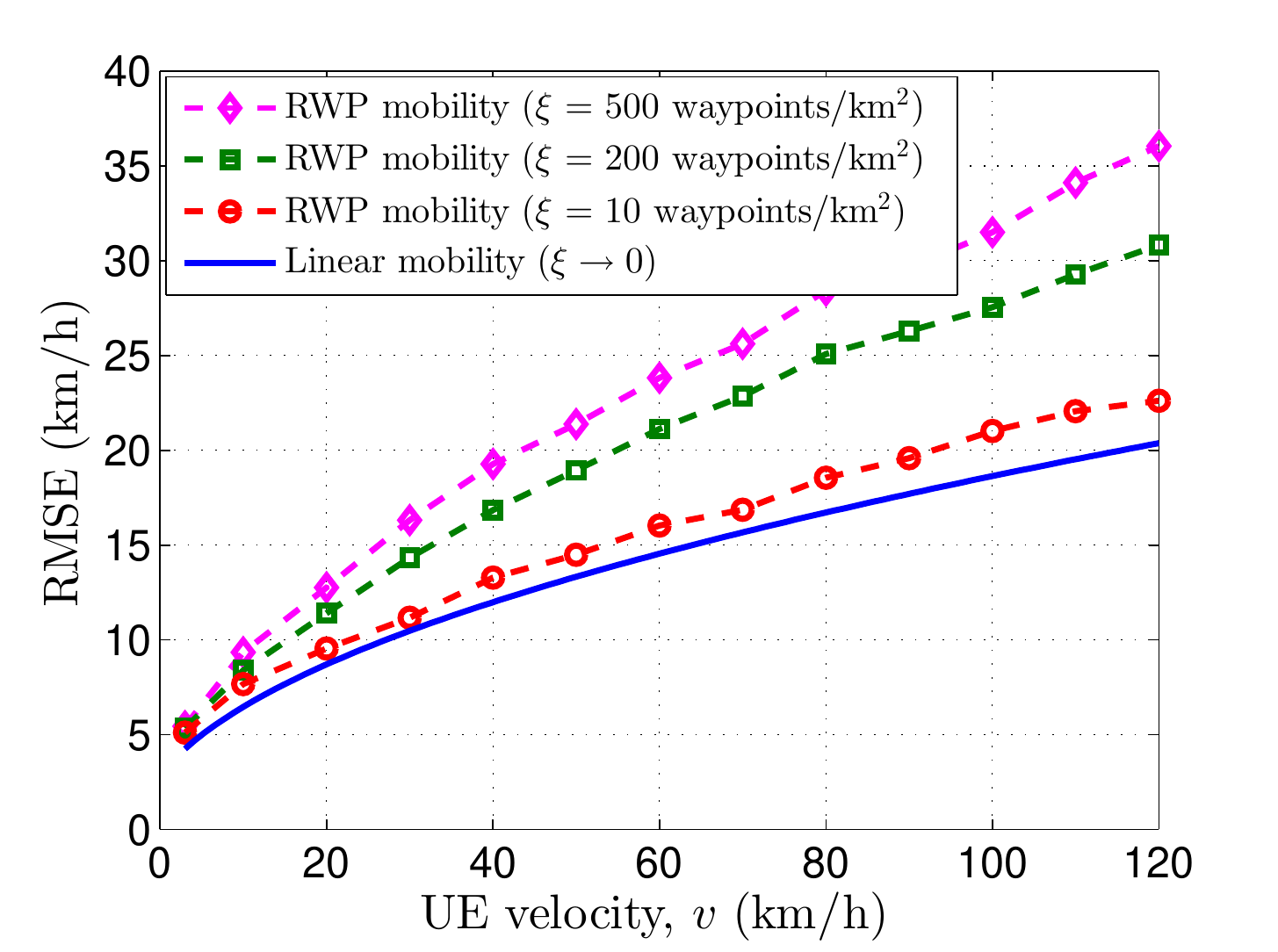}
\vspace{-7mm}\caption{}
\end{subfigure}
\begin{subfigure}[b]{0.49\textwidth}
\includegraphics[width=\textwidth]{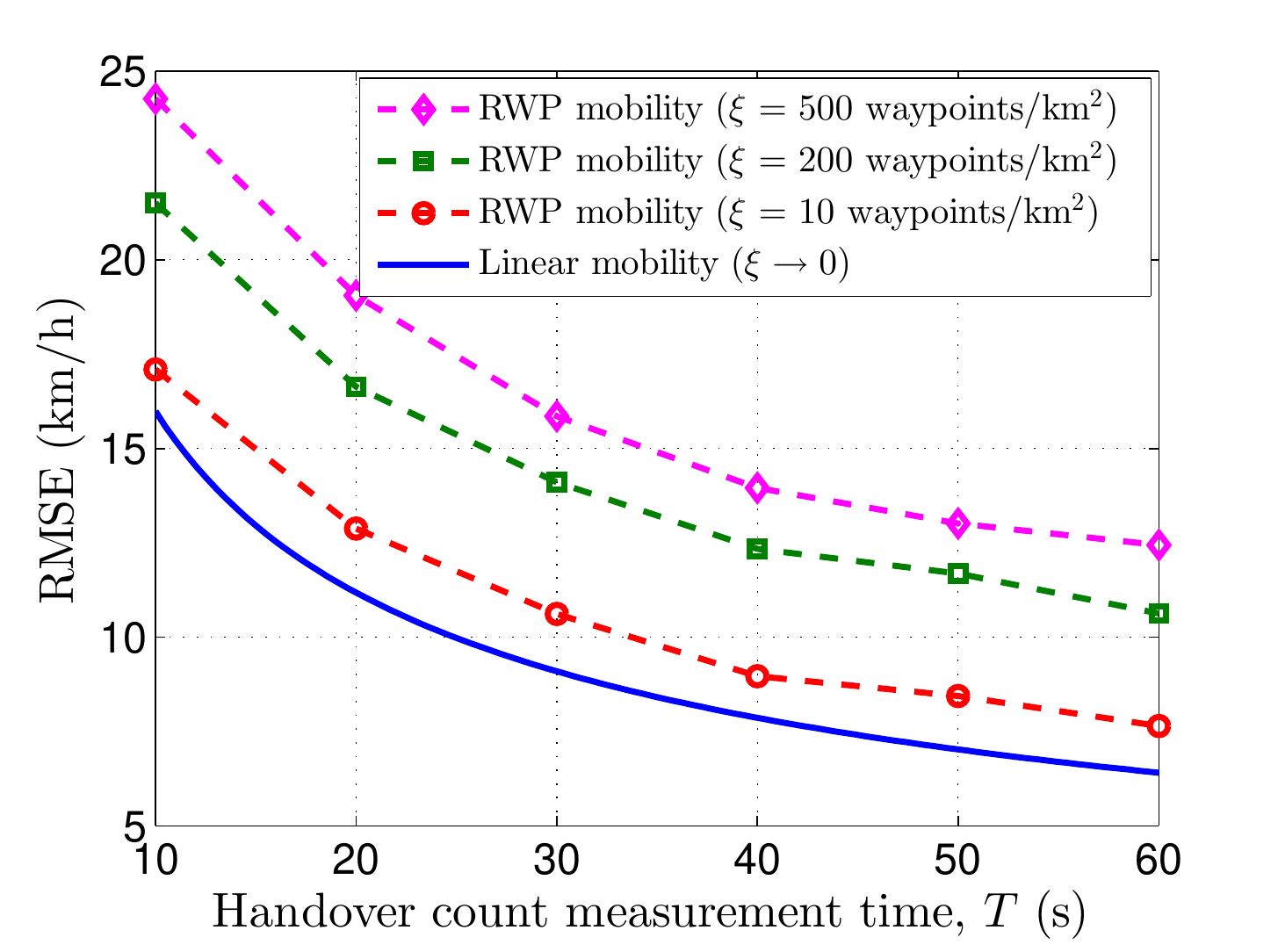}
\vspace{-7mm}\caption{}
\end{subfigure}
\vspace{-2mm}
\caption{RMSE performance of the velocity estimator with RWP mobility model; (a) RMSE versus UE velocity, with $T=12$~s and $\lambda=500$~SBSs/km$^2$; (b) RMSE versus handover count measurement time, with $v=60$~km/h and $\lambda=500$~SBSs/km$^2$.}
\label{fig:RMSE_RWPmobility}
\vspace{-5mm}
\end{figure}

\subsection{Mobility State Detection with Variable UE Velocity}
In this sub-section, we demonstrate the functionality of our MVU estimator with variable UE velocity, and the effect of handover-count measurement time on estimation accuracy. Consider an example in which a user is traveling in a train that is moving over a straight line trajectory in a downtown region. Assume that the density of small cells is $\lambda=1000$~SBSs/km$^2$, and handover counts are measured during regular intervals $T$. The actual velocity of UE in the train, estimated velocity and the detected mobility state are shown in Fig.~\ref{fig:VelocityTimeGraph} during a trip from one station to another station. It can be noticed that the accuracy of velocity estimation is better in Fig.~\ref{fig:VelocityTimeGraph}(b) with $T=30$~s, when compared to Fig.~\ref{fig:VelocityTimeGraph}(a) with $T=12$~s. Also, the probability of false alarm is smaller with $T=30$~s. However, longer measurement time intervals can lead to lower estimation accuracies if the velocity changes significantly within the measurement time interval.
\begin{figure}[htp]
\vspace{-3mm}
\centering
\begin{subfigure}[b]{0.49\textwidth}
\includegraphics[width=\textwidth]{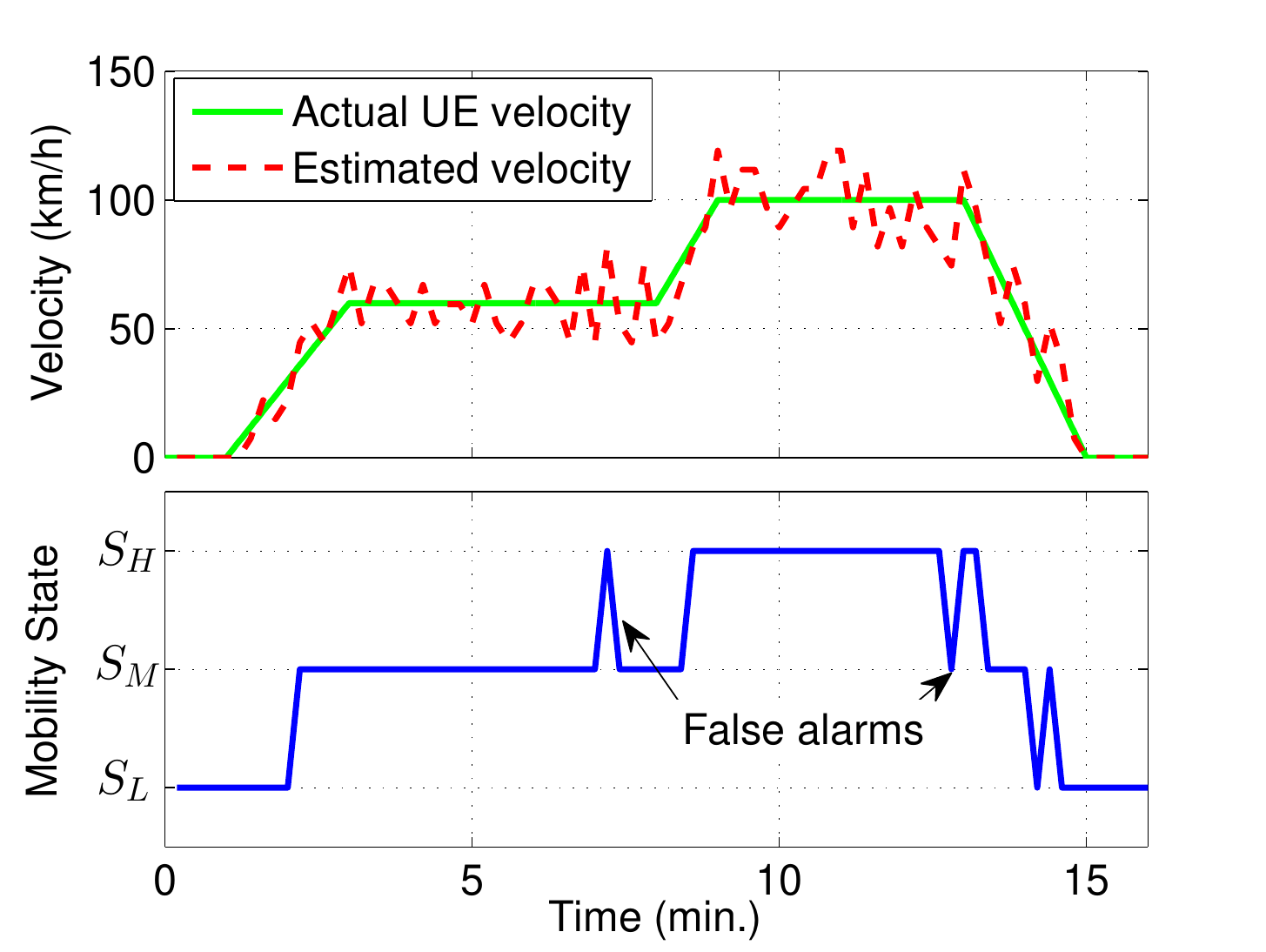}
\vspace{-7mm}\caption{}
\end{subfigure}
\begin{subfigure}[b]{0.49\textwidth}
\includegraphics[width=\textwidth]{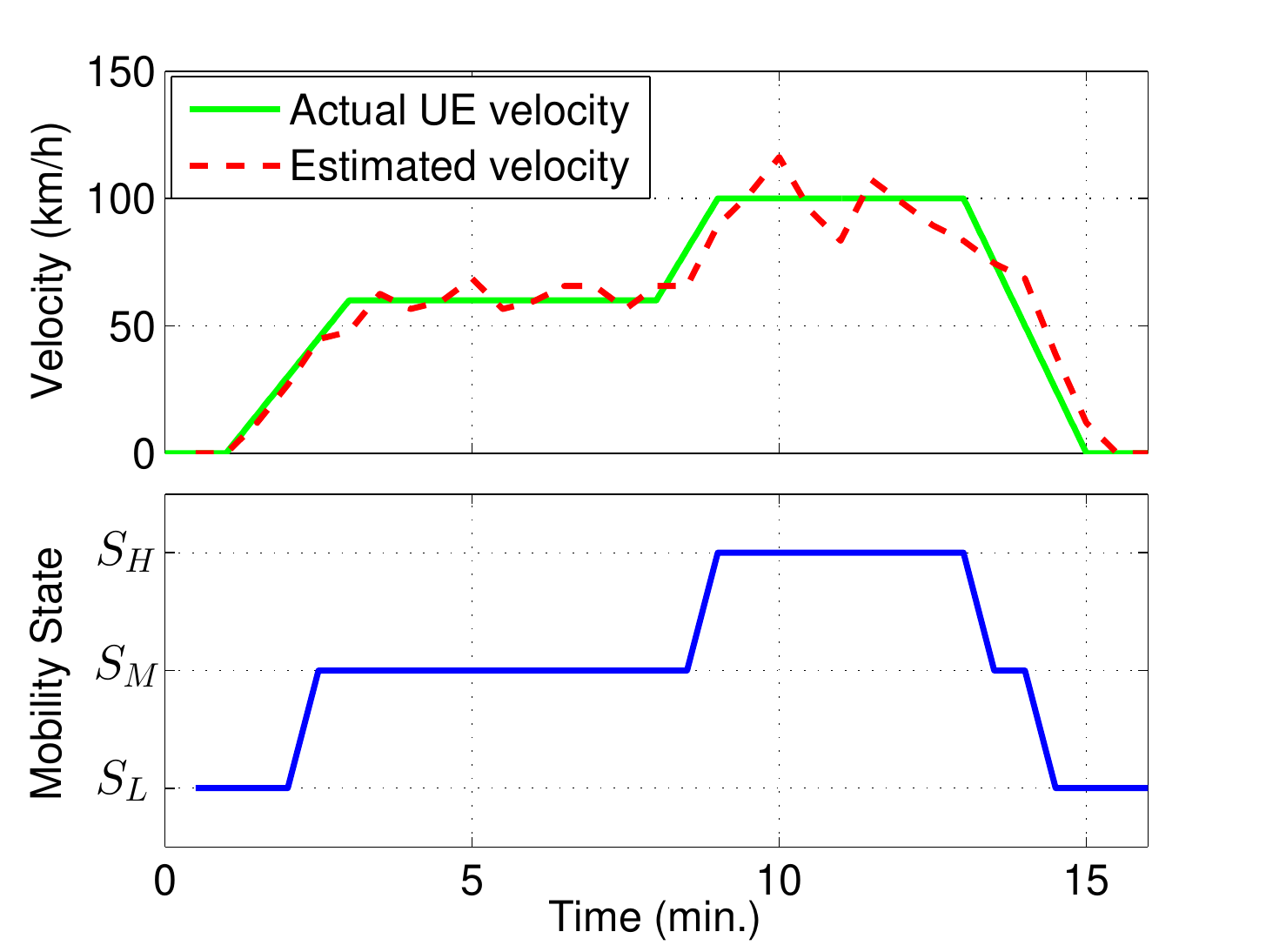}
\vspace{-7mm}\caption{}
\end{subfigure}
\vspace{-2mm}
\caption{Demonstration of velocity estimation and MSD with variable UE velocity; (a) $T=12$~s; (b) $T=30$~s.}
\label{fig:VelocityTimeGraph}
\vspace{-5mm}
\end{figure}

\subsection{Velocity Estimation in Clustered SBS Deployment}
In practical scenarios, SBSs are deployed more in places where more traffic demand is expected, which results into clustered deployment. In this subsection, we model the clustered deployment of SBSs using Matern-cluster process that can be realized using \emph{cluster centers} and \emph{cluster points}. The cluster centers form a Poisson process with intensity $\lambda_0$, and the cluster points (SBS locations) are located around each cluster center within a disc of radius $R$. The cluster points in these discs are realized using Poisson processes with intensity $\lambda_1$.

\begin{figure}[htp]
\vspace{-3mm}
\centering
\begin{subfigure}[b]{0.49\textwidth}
\includegraphics[width=\textwidth]{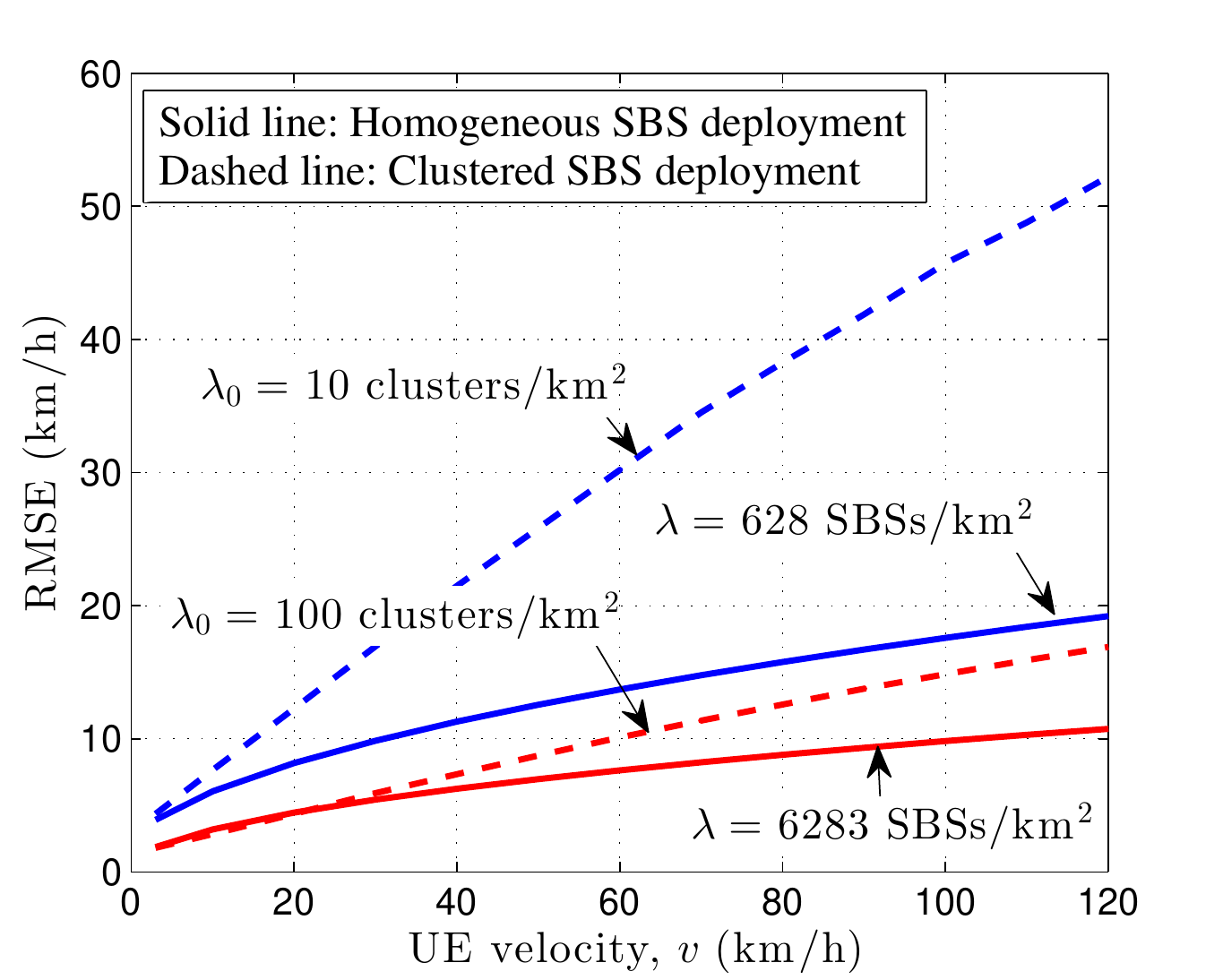}
\vspace{-7mm}\caption{}
\end{subfigure}
\begin{subfigure}[b]{0.49\textwidth}
\includegraphics[width=\textwidth]{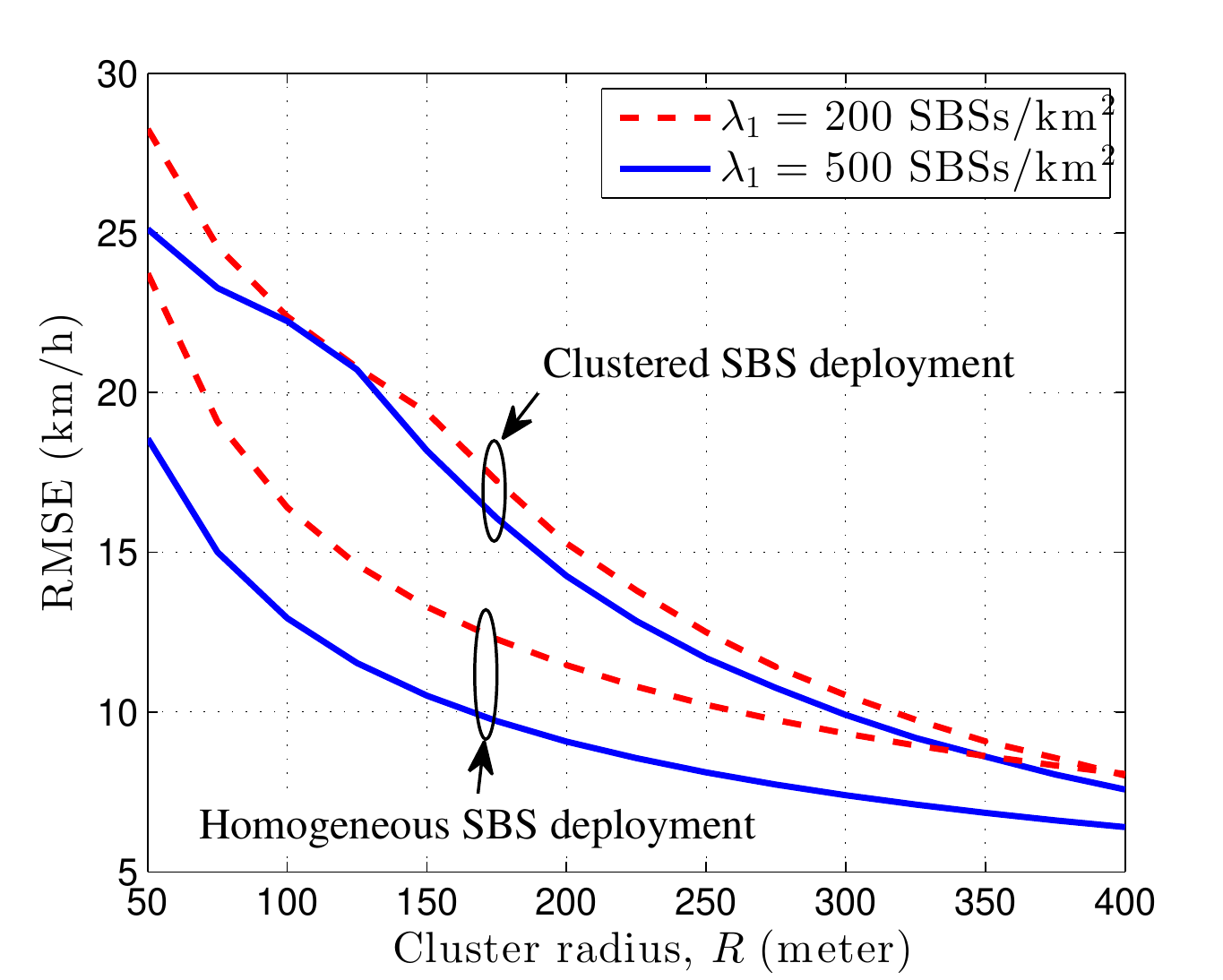}
\vspace{-7mm}\caption{}
\end{subfigure}
\vspace{-2mm}
\caption{RMSE of velocity estimation in homogeneous and clustered SBS deployment; (a) RMSE versus the velocity for different $\lambda_0$, with $\lambda_1=500$~SBSs/km$^2$, $R=200$~m, and $T=12$~s; (b) RMSE versus the cluster radius for different $\lambda_1$, with $\lambda_0=50$~clusters/km$^2$, $v=60$~km/h and $T=12$~s.}
\label{fig:RMSE_vs_Velocity_MaternCluster}
\vspace{-5mm}
\end{figure}
The RMSEs of velocity estimator in homogeneous and clustered deployments of SBSs are compared in Fig.~\ref{fig:RMSE_vs_Velocity_MaternCluster}. For fair comparison between the homogeneous and clustered deployments, the intensities of both Poisson and Matern-cluster processes were kept same. Given the parameters $\lambda_0, \lambda_1$ and $R$ of Matern-cluster process, the equivalent homogeneous intensity $\lambda$ for Poisson process can be found using $\lambda = \lambda_0 \lambda_1 \pi R^2$. In general, the RMSE is higher in clustered deployment of SBSs due to the heterogeneous nature of Matern-cluster process and the fact that a UE cannot obtain handover-count measurements outside the discs of clusters. It can be observed in Fig.~\ref{fig:RMSE_vs_Velocity_MaternCluster}(a) that the RMSE difference between homogeneous and clustered SBS deployments increases with increasing UE velocity $v$. The RMSE difference also increases significantly when $\lambda_0$ decreases, for example, from $100$ clusters/km$^2$ to $10$ clusters/km$^2$. In Fig.~\ref{fig:RMSE_vs_Velocity_MaternCluster}(b), it can be seen that the RMSE difference decreases with increasing $R$.

\subsection{Velocity Estimation with Matern Hardcore Process for SBS Locations}
In this sub-section, we model the SBS locations using Matern hardcore process (HCP), in which, the distance between any two points is greater than a hardcore distance $R_{\rm hc}$. The realization of Matern HCP involves generation of points using a homogeneous PPP, followed by a thinning procedure. For the Matern HCP of type~II, the thinning procedure involves associating a random mark to each point of the parent PPP, and a point is deleted if there exists another point within the hardcore distance $R_{\rm hc}$ with a smaller mark \cite{5934671}. To realize the HCP with intensity $\lambda_{\rm hc}$, the intensity of the parent PPP should be
\begin{align}
\lambda = \frac{-\ln(1-\lambda_{\rm hc}\pi R_{\rm hc}^2)}{\pi R_{\rm hc}^2},
\end{align}
provided that $R_{\rm hc}$ is smaller than $R_{\rm hc}^{\rm max}=1/\sqrt{\pi\lambda_{\rm hc}}$. We use $R_{\rm hc}=\rho/\sqrt{\pi\lambda_{\rm hc}}$, where $0\leq\rho<1$ is the randomness parameter. When $\rho=0$, $R_{\rm hc}$ is $0$ and the HCP is equivalent to PPP in which the points are random located as shown in Fig.~\ref{fig:MaternHCP_Realizations}(a). When, the $\rho$ increases to 0.99, the points are located more regularly as shown in Fig.~\ref{fig:MaternHCP_Realizations}(b).
\begin{figure}[htp]
\centering
\begin{subfigure}[b]{0.35\textwidth}
\includegraphics[width=\textwidth]{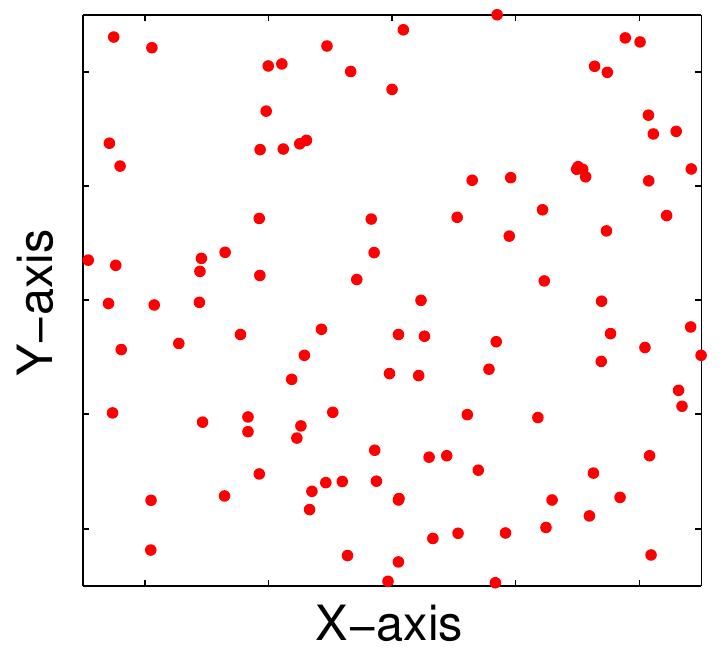}
\vspace{-7mm}\caption{}
\end{subfigure}
\begin{subfigure}[b]{0.35\textwidth}
\includegraphics[width=\textwidth]{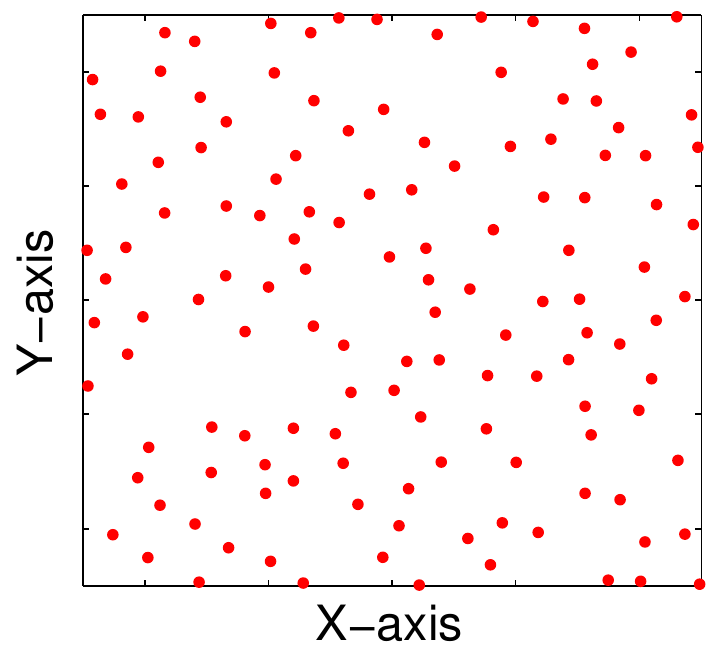}
\vspace{-7mm}\caption{}
\end{subfigure}
\vspace{-4mm}
\caption{Realizations of Matern HCP of type II with fixed $\lambda_{\rm hc}$ and different $\rho$ values; (a)~$\rho=0$; (b)~$\rho=0.99$.}
\label{fig:MaternHCP_Realizations}
\vspace{-3mm}
\end{figure}

The RMSE of the MVU velocity estimator against the variation in $\rho$ is shown in Fig.~\ref{fig:RMSE_vs_alpha_Matern_Type_II_HCP} for different velocities. The RMSE of the velocity estimator decreases as the $\rho$ increases, because the randomness in the SBS locations decreases with increasing $\rho$. Therefore, in scenarios where the SBSs are more uniformly placed, the velocity estimation will work more effectively.
\begin{figure}[t]
\vspace{-1mm}
\center
\includegraphics[width=0.49\textwidth]{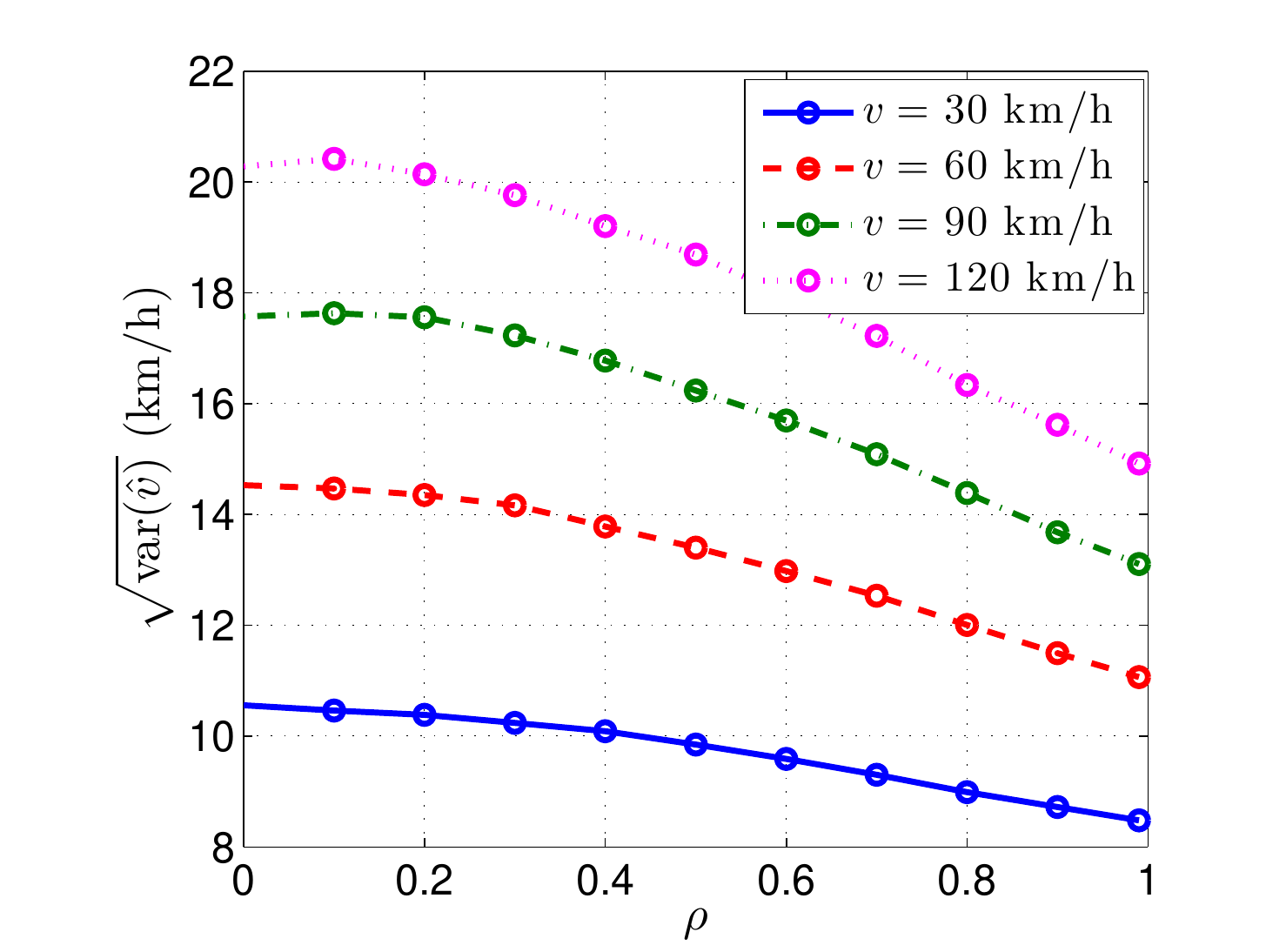}
\vspace{-2mm}
\caption{RMSE of velocity estimation with type II Matern HCP for the SBS locations.}
\label{fig:RMSE_vs_alpha_Matern_Type_II_HCP}
\end{figure}

\section{Conclusion}
\label{sec:Conclusion}
In this paper, we derived a handover-count based MVU estimator for the velocity of a UE, and derived optimum handover-count thresholds for MSD. We provided approximations to the PMF of handover-count using two types of standard distributions: gamma distribution and Gaussian distribution. We found that the PMF approximation using gamma distribution, even though complicated when compared to Gaussian distribution, provides a better fit to the PMF with a smaller MSE. On the other hand, the use of Gaussian distribution provides simpler and closed-form solutions to the PMF approximation. Subsequently, we derived the CRLB of velocity estimation and a MVU velocity estimator whose variance is approximately equal to the CRLB. The CRLB of velocity estimation decreases with the increasing time interval for counting the number of handovers, which shows the trade-off between the accuracy and the rapidness of velocity measurements.

The increasing density of small-cells in the future cellular networks is facilitates more accurate UE velocity estimation, since the CRLB decreases with the increasing SBS density. Moreover, the probability of MSD also increases with the increasing SBS density. The results with RWP mobility model showed that the accuracy of the velocity estimator decreases with the increasing randomness in UE's mobility. On the other hand, results with Matern-cluster process showed that the estimation accuracy is poorer with clustered SBS deployment when compared to the homogeneous SBS deployment. Future research directions may include: path prediction of UEs, and development of algorithms to dynamically adjust the measurement time interval depending on the past samples of estimated velocity.

\appendices
\section{Approximating the $\alpha$ and $\beta$ Parameters of Gamma Distribution}
\label{app:AlphaBetaStudy}
In this appendix, we derive the expressions for $\alpha$ and $\beta$ parameters through a heuristic approach to minimize the MSE between the approximation $f^{\rm g}_{H}(h)$ and the PMF $f_H(h)$ of the handover count. First, we briefly describe a property of PPP with respect to the scaling of simulation space which will be used for the derivation of $\alpha$ and $\beta$ parameters.
\begin{figure}[t]
\centering
\begin{subfigure}[b]{0.49\textwidth}
\includegraphics[width=\textwidth]{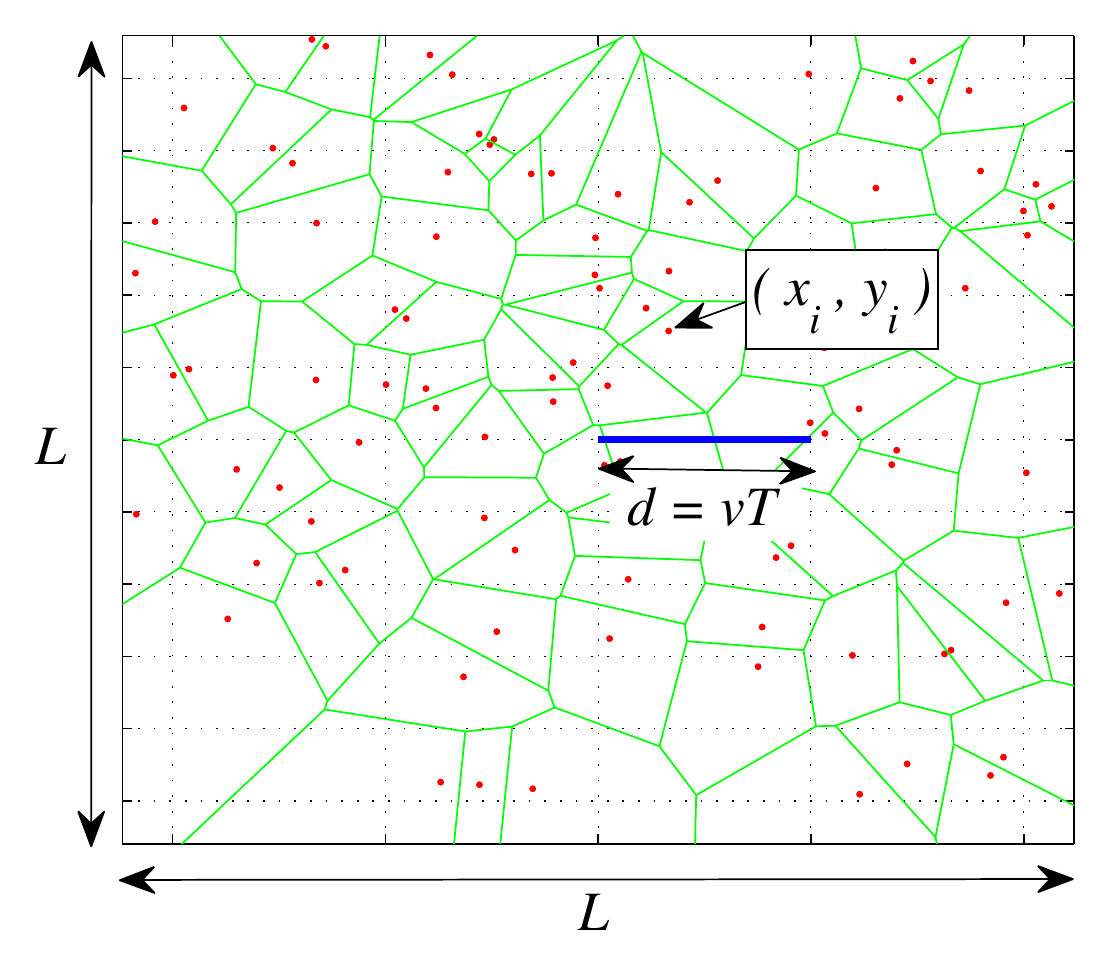}
\vspace{-7mm}\caption{}
\end{subfigure}
\begin{subfigure}[b]{0.49\textwidth}
\includegraphics[width=\textwidth]{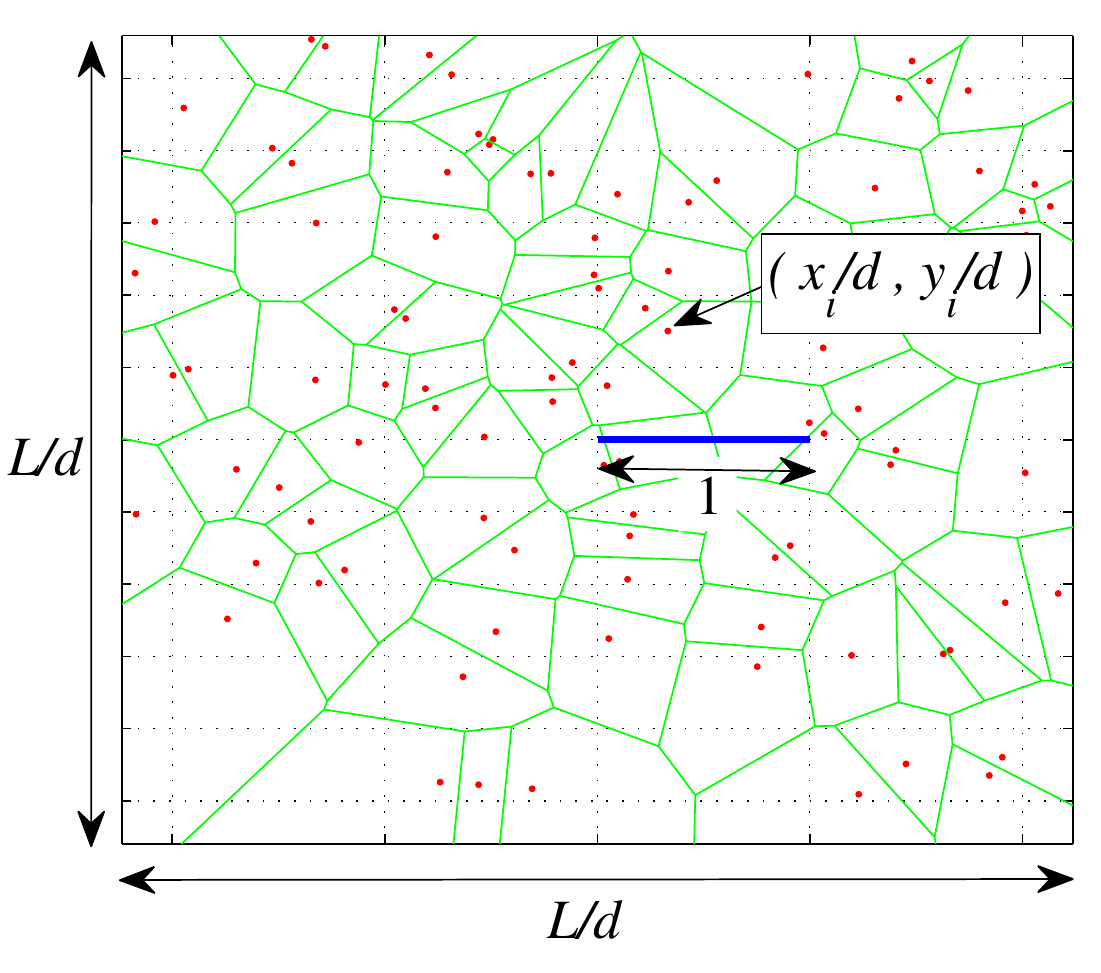}
\vspace{-7mm}\caption{}
\end{subfigure}
\vspace{-2mm}
\caption{Illustration for scaling down the simulation space.}
\label{fig:NormalizeSpaceIllustration}
\vspace{-5mm}
\end{figure}

Consider the square shaped simulation space shown in Fig.~\ref{fig:NormalizeSpaceIllustration}(a) with an area $L^2$. The SBS locations are shown in red dots, and the UE trajectory is shown as blue line of length $d=vT$. The intensity of SBSs can be expressed as $\lambda = N_{\rm avg}/L^2$, where $N_{\rm avg}$ is the mean number of SBSs in the simulation area. Assume that the simulation space is scaled down by the UE travel distance $d$, i.e., all the coordinates are scaled down by $d$, for example, coordinates of a SBS location $(x_i, y_i)$ will be scaled down to $(\frac{x_i}{d}, \frac{y_i}{d})$ as shown in Fig.~\ref{fig:NormalizeSpaceIllustration}(b). Therefore, the simulation area will also be scaled down to $\left(L/d\right)^2$. However, the mean number of SBSs $N_{\rm avg}$ remains the same, and hence the intensity of SBSs can be expressed as $\lambda' = N_{\rm avg} d^2/L^2 = \lambda d^2$. Here, the term $\lambda'$ is a function of $\lambda$ and $d$, therefore, the statistics of the handover count $H$ can be expressed in terms of $\lambda'$ alone. For example, the mean number of handovers in \eqref{eq:MeanNo_HOs} can be expressed as $E[H] = 4\sqrt{\lambda'}/\pi$. Similarly, the parameters $\alpha$ and $\beta$ can also be approximated in terms of $\lambda'$.

We obtained histogram of the handover count $H$ through extensive simulations in Matlab with a range of values for the SBS density $\lambda=\{100, 200, 300,...,10000\}$ SBSs/km$^2$ and the distance $d=\{0.033, 0.1, 0.2, 0.4\}$ km. For each combination of $\lambda$ and $d$, we obtained one million samples of $H$ for constructing the PMF $f_H(h)$. Then, we used the curve fitting tool in Matlab to obtain the $\alpha$ and $\beta$ parameters that provides a best fit of $f^{\rm g}_H(h)$ in \eqref{eq:fg_hat_final} to the PMF $f_H(h)$. By studying the characteristics of $\alpha$ and $\beta$ parameters with respect to the variations in $\lambda$ and $d$, we formulated
\begin{align}
\alpha &= 2.7 + 4 \sqrt{\lambda'} = 2.7 + 4 d\sqrt{\lambda},\label{eq:AlphaApprox_2}\\
\beta &= \pi + \frac{0.8}{0.38 + \sqrt{\lambda'}} = \pi + \frac{0.8}{0.38 + d\sqrt{\lambda}}.\label{eq:BetaApprox_2}
\end{align}
The accuracy of these approximations are justified in Fig.~\ref{fig:AlphaBetaApprox} where the theoretical expressions \eqref{eq:AlphaApprox_2} and \eqref{eq:BetaApprox_2} are shown to tightly overlap with the plots obtained through simulations.
\begin{figure}[htp]
\vspace{-3mm}
\center
\includegraphics[width=3in]{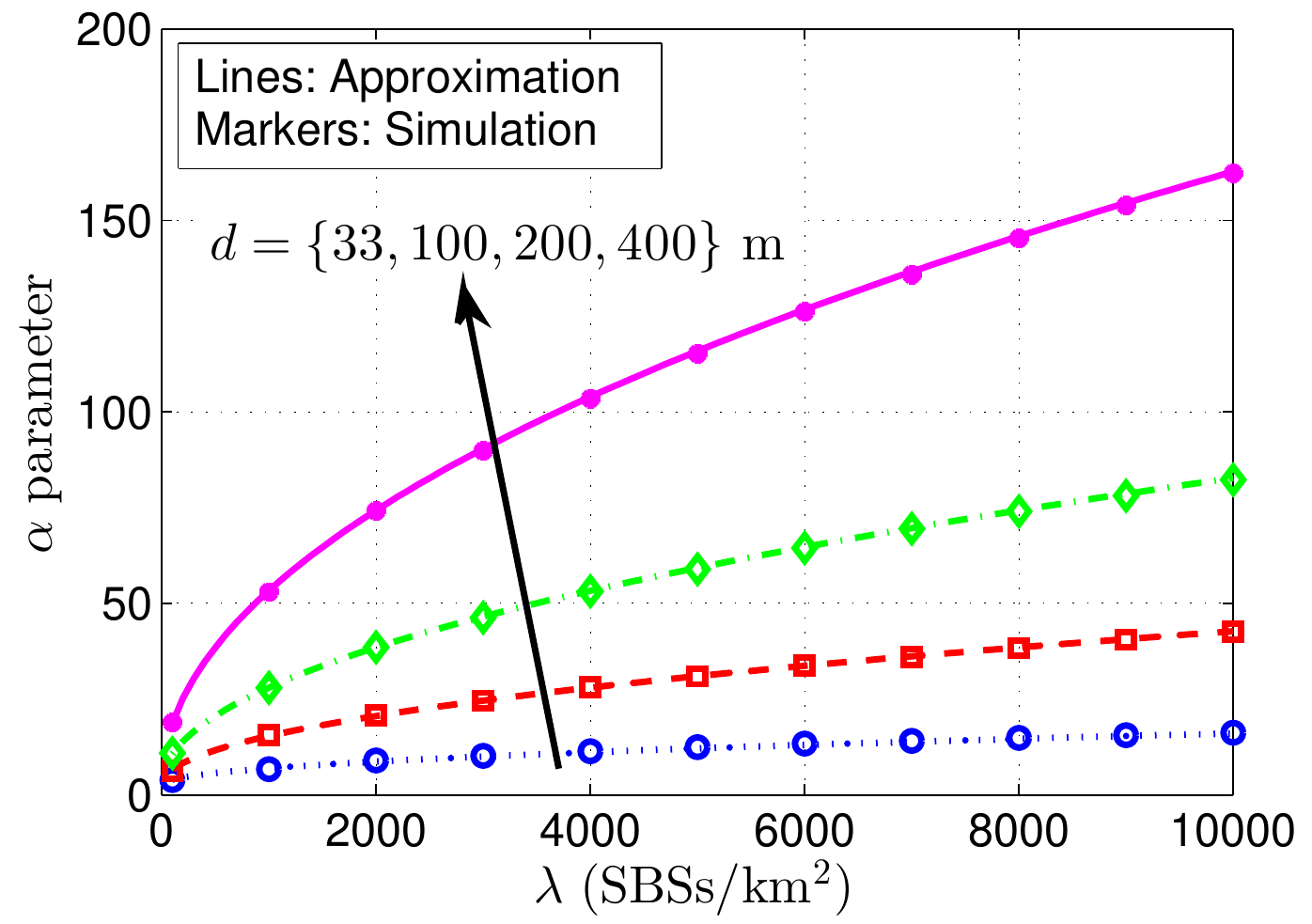}
\includegraphics[width=3in]{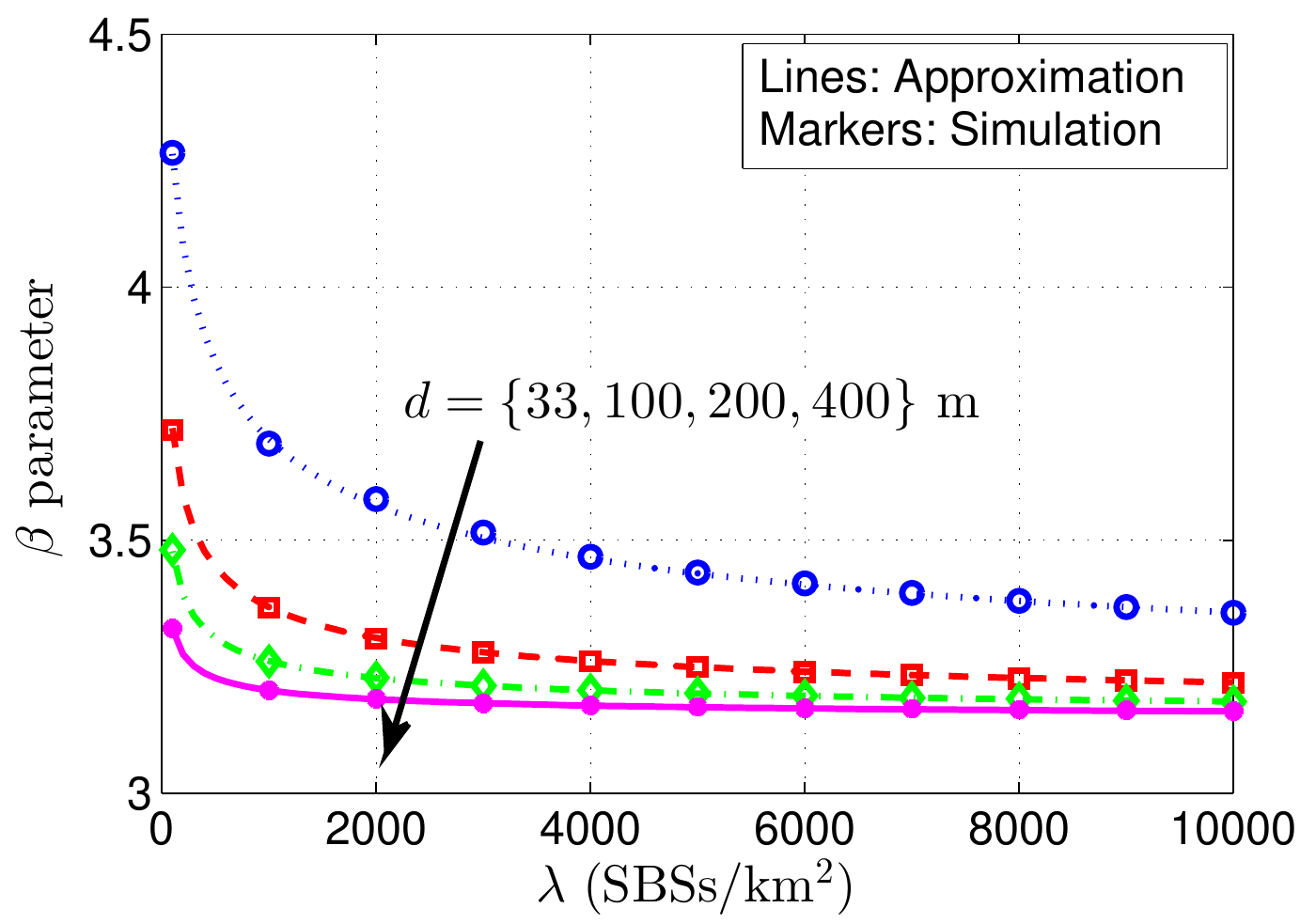}
\vspace{-3mm}
\caption{Approximations of the $\alpha$ and $\beta$ parameters.}
\label{fig:AlphaBetaApprox}
\end{figure}

\section{Approximating the $\mu$ and $\sigma^2$ Parameters of Gaussian Distribution}
\label{app:MuSigma2Study}
In this appendix, we derive the expressions for $\mu$ and $\sigma^2$ in \eqref{eq:fn_hat_final} to minimize the MSE between the approximation $f^{\rm n}_{H}(h)$ and the PMF $f_H(h)$ of the handover count. We equate $\mu$ to the mean of handover count expressed in \eqref{eq:MeanNo_HOs} as
\begin{align}
\mu = \frac{4vT\sqrt{\lambda}}{\pi} = \frac{4d\sqrt{\lambda}}{\pi}.
\end{align}
Next, we obtain an expression for $\sigma^2$ using a method similar to the one explained in Appendix~\ref{app:AlphaBetaStudy}. We constructed the PMF $f_H(h)$ through simulations, and then used curve fitting tool in Matlab to obtain the $\sigma^2$ parameter that provides a best fitting of $f^{\rm n}_H(h)$ in \eqref{eq:fn_hat_final} to the PMF $f_H(h)$. By studying the characteristics of $\sigma^2$ parameter with respect to the variations in $\lambda$ and $d$, we formulated
\begin{align}
\sigma^2 &= 0.07 + 0.41 \sqrt{\lambda'} = 0.07 + 0.41 d \sqrt{\lambda}. \label{eq:Sigma2Approx_1}
\end{align}
The accuracy of this approximation is justified in Fig.~\ref{fig:Sigma2Approx} where the expression \eqref{eq:Sigma2Approx_1} is plotted in comparison with the plots obtained through simulations.
\begin{figure}[htp]
\center
\includegraphics[width=3in]{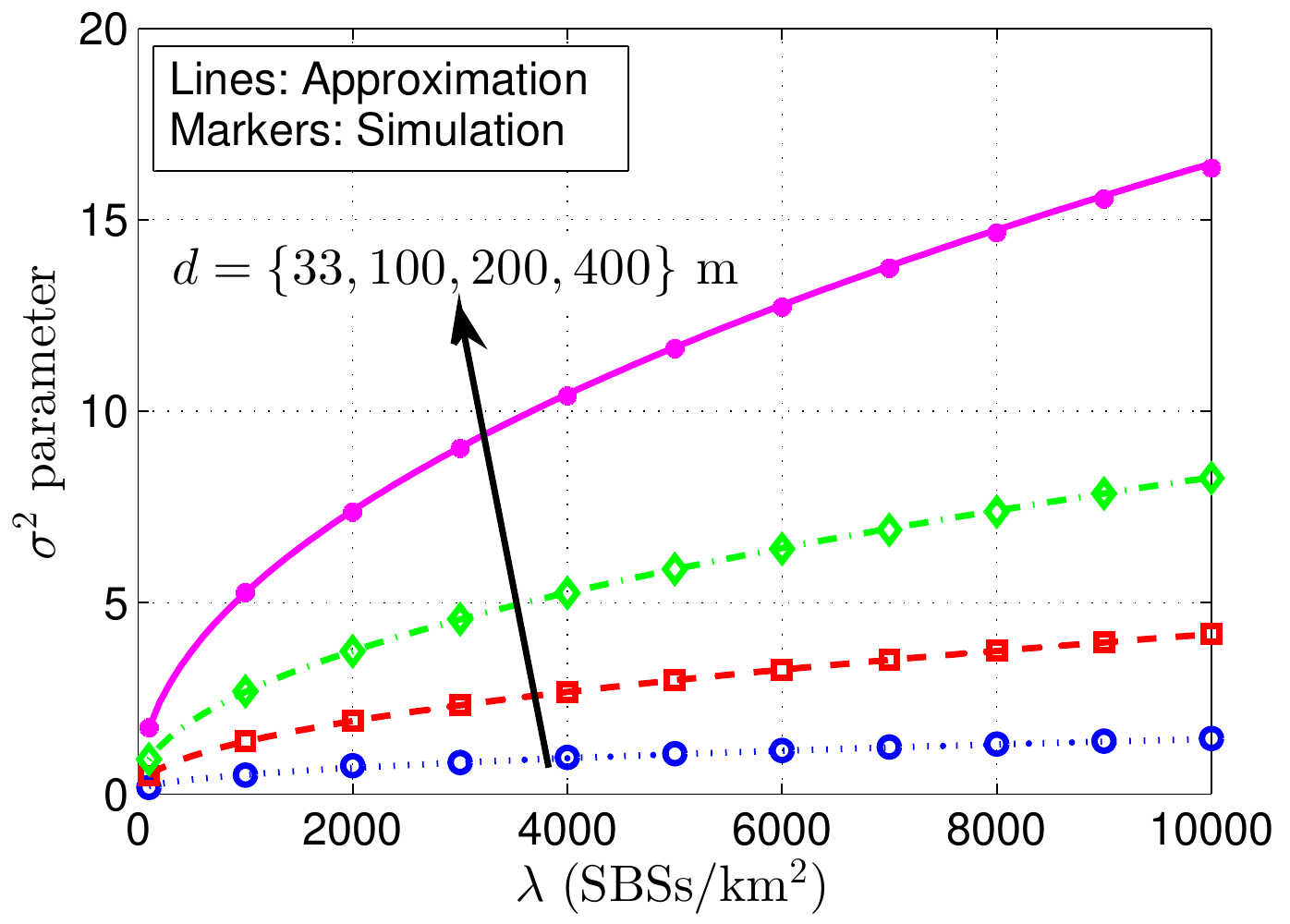}
\vspace{-2mm}
\caption{Approximation of the $\sigma^2$ parameter.}
\label{fig:Sigma2Approx}
\end{figure}

\section{Derivation of CRLB with Gamma Approximation for Handover Count PMF}
\label{app:CRLB_Gamma}
In this appendix, CRLB for velocity estimation will be derived by considering the handover count PMF in \eqref{eq:fg_hat_final} that was derived using the gamma distribution. Taking the logarithm of PMF will provide the log-likelihood function, which will then be differentiated with respect to $v$. Finally, the derivative of log-likelihood function will be used to obtain the CRLB expression. Consider the handover count PMF,
\begin{align}
f^{\rm g}_H(h; v) = \frac{\Gamma\big(\alpha,\beta h, \beta(h+1)\big)}{\Gamma(\alpha)}, \label{eq:fg_hat_final_1}
\end{align}
where, the $\alpha$ and $\beta$ parameters are given by \eqref{eq:AlphaApprox} and \eqref{eq:BetaApprox}, respectively. By taking the logarithm of \eqref{eq:fg_hat_final_1} and differentiating with respect to $v$, we get
\begin{align}
\frac{\partial}{\partial v} \log f^{\rm g}_H(h; v) =\frac{\partial}{\partial v} \log \Gamma\big(\alpha,\beta h, \beta(h+1)\big) - \frac{\partial}{\partial v} \log \Gamma(\alpha). \label{eq:DiffLogPMF}
\end{align}

Consider the first term in the RHS of \eqref{eq:DiffLogPMF}. Let, $z_1 = \beta h$ and $z_2 = \beta (h+1)$. Then, we have
\begin{align}
&\frac{\partial}{\partial v} \log \Gamma\big(\alpha,\beta h, \beta(h+1)\big) = \frac{1}{\Gamma(\alpha, z_1, z_2)} \frac{\partial}{\partial v}\Gamma(\alpha, z_1, z_2),\nonumber\\
&\hspace{1cm}= \frac{1}{\Gamma(\alpha, z_1, z_2)} \left[\frac{\partial}{\partial \alpha}\Gamma(\alpha, z_1, z_2) \frac{{\rm d}\alpha}{{\rm d}v} + \frac{\partial}{\partial z_1}\Gamma(\alpha, z_1, z_2)\frac{{\rm d}z_1}{{\rm d}v} + \frac{\partial}{\partial z_2}\Gamma(\alpha, z_1, z_2)\frac{{\rm d} z_2}{{\rm d}v}\right]. \label{eq:DiffLogPMF4}
\end{align}
Each of the differentials in \eqref{eq:DiffLogPMF4} can be derived to be
\begin{align*}
&\frac{\partial}{\partial \alpha}\Gamma(\alpha, z_1, z_2) = \frac{{_2F_2}(\alpha,\alpha;\alpha+1,\alpha+1;-z_1) z_1^\alpha}{\alpha^2} - \frac{{_2F_2}(\alpha,\alpha;\alpha+1,\alpha+1;-z_2) z_2^\alpha}{\alpha^2}\nonumber\\
&\hspace{8.4cm} -\gamma(\alpha,z_1)\log(z_1) +\gamma(\alpha,z_2)\log(z_2), \\
&\hspace{1cm}\frac{\partial}{\partial z_1}\Gamma(\alpha, z_1, z_2) = -e^{-z_1} z_1^{\alpha-1}, \hspace{1cm}\frac{\partial}{\partial z_2}\Gamma(\alpha, z_1, z_2) = e^{-z_2} z_2^{\alpha-1},\\
&\hspace{1cm}\frac{{\rm d}\alpha}{{\rm d}v} = 4 T \sqrt{\lambda}, \hspace{1cm}\frac{{\rm d}z_1}{{\rm d}v} = \frac{-0.8 h T \sqrt{\lambda}}{(0.38+v T \sqrt{\lambda})^2}, \hspace{1cm} \frac{{\rm d} z_2}{{\rm d}v} = \frac{-0.8 (h+1)T \sqrt{\lambda}}{(0.38+v T \sqrt{\lambda})^2},
\end{align*}
where, $\gamma(\alpha,x) = \int_0^x t^{\alpha-1}e^{-t} {\rm d}t$ is the lower incomplete gamma function, and ${_2F_2}(a_1,a_2;b_1,b_2;z)$ is generalized hypergeometric function which can be expressed as
\begin{align}
{_2F_2}(a_1,a_2;b_1,b_2;z) = \sum_{k=0}^\infty\frac{(a_1)_k (a_2)_k}{(b_1)_k (b_2)_k} \frac{z^k}{k!},
\end{align}
where, $(a)_0 = 1$ and $(a)_k = a(a+1)(a+2)...(a+k-1)$, for $k \geq 1$.

Now, consider the second term in RHS of \eqref{eq:DiffLogPMF}, $\frac{\partial}{\partial v} \log \Gamma(\alpha) = \frac{\partial \log \Gamma(\alpha)}{\partial \alpha} \frac{\partial \alpha}{\partial v}$. It is well known that the logarithmic derivative of the gamma function is the digamma function $\psi(\cdot)$. Therefore,
\begin{align}
\frac{\partial}{\partial v} \log \Gamma(\alpha) &= \psi(\alpha) \frac{\partial \alpha}{\partial v} = \psi(\alpha) \frac{\partial}{\partial v} (2.7 + 4vT\sqrt{\lambda}) = 4T\sqrt{\lambda}\ \psi(\alpha). \label{eq:SecondTerm}
\end{align}
Using the equations \eqref{eq:DiffLogPMF4}-\eqref{eq:SecondTerm}, we can express \eqref{eq:DiffLogPMF} as:
\begin{align}
\frac{\partial}{\partial v} \log f_H^{\rm g}(h; v) =& \frac{4 T \sqrt{\lambda} \beta^\alpha}{\alpha^2 \Gamma\big(\alpha, \beta h, \beta(h+1)\big)} \Big[h^\alpha {_2F_2}(\alpha,\alpha;\alpha+1,\alpha+1;-\beta h) \nonumber\\
&\hspace{4.5cm}- (h+1)^\alpha {_2F_2}\big(\alpha,\alpha;\alpha+1,\alpha+1;-\beta(h+1)\big) \Big]\nonumber\\
&- \frac{4 T \sqrt{\lambda}}{\Gamma\big(\alpha, \beta h, \beta(h+1)\big)}\Big[\gamma(\alpha,\beta h)\log(\beta h) -\gamma\big(\alpha,\beta(h+1)\big)\log\big(\beta(h+1)\big)\Big] \nonumber\\
&+ \frac{0.8T \sqrt{\lambda}\beta^{\alpha-1}e^{-\beta h}\left[h^\alpha - e^{-\beta} (h+1)^\alpha \right]}{\Gamma\big(\alpha, \beta h, \beta(h+1)\big)(0.38+v T \sqrt{\lambda})^2} - 4T\sqrt{\lambda}\ \psi(\alpha). \label{eq:DiffLogPMF_2}
\end{align}
By squaring \eqref{eq:DiffLogPMF_2} and evaluating the expectation over $H$, we obtain the \emph{Fisher information} as
\begin{align}
I(v) = \mathbb{E}\left[\left(\frac{\partial \log f_H^{\rm g}(H; v)}{\partial v}\right)^2\right].
\end{align}
By taking the reciprocal of Fisher information, we can express the CRLB for $\hat{v}$ as in \eqref{eq:CRLB}.

\section*{Acknowledgment}
This research was supported in part by the U.S. National Science Foundation under the Grant CNS-1406968.

\singlespacing
\bibliographystyle{IEEEtran}
\bibliography{Bibliography}

\end{document}